\renewcommand*\env@matrix[1][\arraystretch]{%
  \edef\arraystretch{#1}%
  \hskip -\arraycolsep
  \let\@ifnextchar\new@ifnextchar
  \array{*\c@MaxMatrixCols c}}
\newtheorem{thm}{Theorem}[section]
\newtheorem{lem}[thm]{Lemma}
\newtheorem{prop}[thm]{Proposition}
\theoremstyle{remark}
\newtheorem{rem}[thm]{Remark}
\theoremstyle{definition}
\newtheorem{defn}[thm]{Definition}
\theoremstyle{plain}
\newcommand{\norm}[1]{\left\Vert#1\right\Vert}
\newcommand{\abs}[1]{\left\vert#1\right\vert}
\newcommand{\set}[1]{\left\{#1\right\}}
\newcommand{\R}{\mathbb{R}}
\newcommand{\C}{\mathbb{C}}
\newcommand{\N}{\mathbb{N}}
\newcommand{\Z}{\mathbb{Z}}
\newcommand{\diff}{\mathrm{Diff}}
\newcommand{\emb}{\mathrm{Emb}}
\renewcommand{\dim}{\mathrm{dim}}
\newcommand{\loc}{\mathrm{loc}}
\newcommand{\Tr}{\mathrm{Tr}}
\newcommand{\hdim}{\mathbf{dim}_\mathrm{H}}
\newcommand{\lhdim}{\mathbf{dim}_\mathrm{H}^\mathrm{loc}}
\newcommand{\hdist}{\mathbf{dist}_\mathrm{H}}
\newcommand{\diam}{\mathbf{diam}}
\newcommand{\bdim}{\mathbf{dim}_\mathrm{B}}
\newcommand{\GL}{\mathrm{GL}}
\newcommand{\singlebar}{\hspace{1mm}|\hspace{1mm}}
\begin{document}

\title[Classical and quantum Ising models]{Properties of 1D classical and quantum Ising models: rigorous results}

\author[W. N. Yessen]{William N. Yessen}
\email{wyessen@math.uci.edu}
\address{Department of Mathematics, UC Irvine, Irvine, CA 92697}
\thanks{The author was supported by the NSF grants DMS-0901627, PI: A. Gorodetski and IIS-1018433, PI: M. Welling and Co-PI: A. Gorodetski}

\subjclass[2010]{Primary: 82B20, 82B44, 82D30. Secondary: 82D40, 82B10, 82B26, 82B27.}
\keywords{lattice systems, disordered systems, quasi-periodicity, quasicrystals, disordered materials, quantum Ising model, Fibonacci Hamiltonian, trace map.}
\date{\today}

\begin{abstract}

In this paper we consider one-dimensional classical and quantum spin-$1/2$ quasi-periodic Ising chains, with two-valued nearest neighbor interaction modulated by a Fibonacci substitution sequence on two letters. In the quantum case, we investigate the energy spectrum of the Ising Hamiltonian, in presence of constant transverse magnetic field. In the classical case, we investigate and prove analyticity of the free energy function when the magnetic field, together with interaction strength couplings, is modulated by the same Fibonacci substitution (thus proving absence of phase transitions of any order at finite temperature). We also investigate the distribution of Lee-Yang zeros of the partition function in the complex magnetic field regime, and prove its Cantor set structure (together with some additional qualitative properties), thus providing a rigorous justification for the observations in some previous works. In both, quantum and classical models, we concentrate on the ferromagnetic class.

\end{abstract}

\maketitle

\section{Introduction}\label{sec:intro}

Since the discovery of quasicrystals \cite{Levine1984,Levine1986, Shechtman1984, Socolar1986}, quasi-periodic models in mathematical physics have formed an active area of research. The method of trace maps, originally introduced in \cite{Kadanoff0000, Kohmoto1983, Ostlund1983} (see also \cite{Horowitz1972, Southcott1979, Traina1980, Jorgensen1982, Kohmoto1992, Baake1999, Roberts1994b, Roberts1994} and references therein), has provided a means for rigorous investigation into the physical properties of one-dimensional quasi-periodic structures, leading, for example, to fundamental results in spectral theory of discrete Schr\"odinger operators and Ising models on one-dimensional quasi-periodic lattices (for Schr\"odinger operators: \cite{Kohmoto1983, Casdagli1986, Suto1987, Damanik2009, Damanik2010, Damanik2000, Damanik2008, Damanik2005, Bellissard1989, Raymond1997, Simone2009}, for Ising models: \cite{Tsunetsugu1987, Benza1989, Ceccatto1989, Hermisson1997, Doria1988, Benza1990, You1990, Tong1997}, and 
references therein).

Quasiperiodicity and the associated trace map formalism is still an area of active investigation, mostly in connection with their applications in physics. In this paper we use these techniques to investigate ergodic families of classical and quantum quasi-periodic one-dimensional Ising models.

\subsection{One-dimensional quantum Ising chains}\label{subsec:intro-quantum-chains}

One-dimensional quasi-periodic quantum Ising spin-$1/2$ chains have been investigated (analytically and numerically) over the past two decades \cite{Benza1989, Ceccatto1989, Hermisson1997, Baake1999,Doria1988, Benza1990, You1990, Igloi1988, Turban1994,Igloi1997, Igloi1998, Igloi2007}. Numerical and some analytic results suggest Cantor structure of the energy spectrum, with nonuniform local scaling (i.e. a multifractal). The multifractal structure and fractal dimensions of the energy spectrum have only recently been rigorously investigated in \cite{Yessen2011}, and only in some special cases. In this paper we extend the results of \cite{Yessen2011} to cover the general case.

In the case of quantum Ising models, we investigate the energy spectrum of the Ising spin-$1/2$ chains with two-valued nearest neighbor couplings arranged in a quasi-periodic sequence, with uniform, transverse magnetic field. We shall concentrate on the quasi-periodic sequence generated by Fibonacci substitution on two symbols.

By mapping the single-fermion Ising Hamiltonian canonically to a Jacobi matrix and employing the techniques that we developed in \cite{Yessen2011a}, we  rigorously demonstrate the multifractal structure of the energy spectrum (thus extending the results of \cite{Yessen2011}--that is, answering \cite[Conjecture 2.2]{Yessen2011}), and investigate its topological, measure-theoretic and fractal-dimensional properties. This provides analytic justification for the results that were observed through numeric experiments (and, in some cases, rather soft analysis) by some authors -- we note, in particular, \cite{Benza1989, Ceccatto1989, Benza1990, You1990, Doria1988}.

In fact, we consider a family of Ising models where interaction strength couplings are modulated by sequences from the hull of the Fibonacci substitution sequence, and by employing the results of \cite{Yessen2011a}, as mentioned above, we show that the spectrum is independent of the choice of a sequence in the hull. 

After investigating the energy spectrum of a single fermion, we pass to topological description of the energy spectrum for many noninteracting fermions. In this case we show that for sufficiently many fermions, the spectrum is an interval whose length grows linearly as the number of particles is increased.

\subsection{One-dimensional classical Ising chains}\label{subsec:intro-classical-chains}

The classical Ising spin chain in one dimension with quasi-periodic interaction (nearest neighbor) and magnetic field have also been investigated both numerically and analytically (see, for example, \cite{Tracy1988, Tracy1988a, Baake1999, Baake1995, Grimm1990, Grimm2002, Grimm1995, Hermisson1997, Tong1997, Tsunetsugu1987, Barata2001}, and references therein). A particularly curious problem is the analyticity of the free energy function in the case the interaction strength couplings as well as the external field are modulated by a quasi-periodic sequence (Fibonacci, say). Since the corresponding transfer matrices do not in general commute, this becomes a relatively nontrivial problem in comparison with the periodic case (we comment in more detail on this below, or see \cite{Baake1999}). By employing the trace map as an analytic map on the three-dimensional complex manifold, we show that the free energy function is in fact analytic in the thermodynamic limit.

In fact, we consider a family of classical one-dimensional Ising models where interaction strength couplings and magnetic field are modulated by a quasi-periodic sequence from the hull of the Fibonacci substitution sequence. It is well known that the partition function can be computed as the trace of the associated transfer matrices. These transfer matrices form a $\GL(2,\R)$ cocycle over the shift map on the hull. We show that in the thermodynamic limit, the free energy is analytic, is independent of the choice of a sequence in the hull, and in fact coincides with the Lyapunov exponent of the associated $\GL(2, \R)$ cocycle. 

Another problem is to describe topology of the Lee-Yang zeros of the partition function in the thermodynamic limit, when the former is viewed as a function of complex magnetic field. By the famous theorem of T. D. Lee and C. N. Yang \cite{Lee1952}, these zeros lie on $S^1\subset \C$ when the couplings are ferromagnetic. Numerical experiments as well as some analysis in \cite{Baake1995} suggest, among other things, that the distribution of zeros in the thermodynamic limit forms a Cantor set that exhibits gap structure similar to those that were investigated by J. Bellissard et. al. in \cite{Bellissard1992} (see also \cite{Bellissard1989, Bellissard1990, Bellissard1991} and \cite{Raymond1997}), when the sequence of interaction strength couplings is modulated by the Fibonacci substitution. This is also suggested by some results of J. C. A. Barata and P. S. Goldbaum in \cite{Barata2001}. Both aforementioned investigations, however, fall short of proving the postulated gap structure or fractal properties of the 
Lee-Yang zeros (albeit providing rather convincing heuristic arguments, in part based on numeric computation).

We prove that the Lee-Yang zeros in the thermodynamic limit do indeed form a Cantor set. In fact, we show that topological, measure-theoretic and fractal-dimensional properties of the set of Lee-Yang zeros on the circle are the same as those of the single-fermion energy spectra of quantum Ising quasicrystals, which are the same as those of the spectra of recently considered quasi-periodic Jacobi matrices.

A natural question remains: are the Lee-Yang zeros in the thermodynamic limit independent of the choice of a sequence in the hull? At present we are unable to answer this definitely, but we conjecture a positive answer.

\section{The Fibonacci substitution sequence, symbolic dynamics on the hull and Lyapunov exponents of associated \texorpdfstring{$\GL(k,\R)$}{} cocycles}\label{sec:prelim}

Let us begin by constructing the Fibonacci substitution sequence, the hull of this sequence, and the associated symbolic dynamics (a left shift map on the hull). These constructions will play a crucial role in the rest of our investigation (to each sequence in the hull we shall associate an Ising model with interaction strength couplings and magnetic field modulated by the chosen sequence). We then investigate Lyapunov exponents of $\GL(k,\R)$ cocycles over the shift map on the hull (this will play a role in our investigation of the free energy function in the thermodynamic limit of classical quasi-periodic Ising models).

\subsection{The Fibonacci substitution sequence and symbolic dynamics on the hull}\label{subsec:fibonacci-substitution}

Let $\mathcal{A} = \set{a,b}$; $\mathcal{A}^*$ denotes the set of finite words over $\mathcal{A}$. The Fibonacci substitution $S: \mathcal{A}\rightarrow\mathcal{A}^*$ is defined by $S: a\mapsto ab$, $S: b\mapsto a$. We formally extend the map $S$ to $\mathcal{A}^*$ and $\mathcal{A}^\N$ by
\begin{align*}
S: \alpha_1\alpha_2\cdots\alpha_k\mapsto S(\alpha_1)S(\alpha_2)\cdots S(\alpha_k)
\end{align*}
and
\begin{align*}
S: \alpha_1\alpha_2\cdots\mapsto S(\alpha_1)S(\alpha_2)\cdots.
\end{align*}
There exists a unique \textit{substitution sequence} $u\in\mathcal{A}^\N$ with the following properties \cite{Queffelec2010}:
\begin{align}\label{eq_u_prop}
\text{\hspace{2mm}} u_1\cdots u_{F_k} = S^{k-1}(a);
\text{\hspace{2mm}} S(u) = u;
\text{\hspace{2mm}} u_1\cdots u_{F_{k+2}} = u_1\cdots u_{F_{k + 1}}u_1\cdots u_{F_k},
\end{align}
where $\set{F_k}_{k\in\N}$ is the sequence of Fibonacci numbers: $F_0 = F_1 = 1;\text{\hspace{2mm}}F_{k\geq 2} = F_{k-1} + F_{k-2}$. In fact, this sequence can be obtained constructively via the following recursive procedure. Starting with $a$, we apply the substitution $S$ recursively:
\begin{align*}
a\overset{S}{\longmapsto} ab\overset{S}{\longmapsto} aba\overset{S}{\longmapsto} abaab\overset{S}{\longmapsto} abaababa\overset{S}{\longmapsto}\cdots.
\end{align*}
From now on we reserve the notation $u$ for this specific sequence. 

Let us remark that the sequence $u$ can also be obtained as a sampling of an irrational circle rotation on two intervals as follows. Let $\phi = (\sqrt{5}+1)/2$ denote the golden mean. Define the sequence $u = \set{u_n}_{n\in\N}$ by
\begin{align*}
u_n = a\chi_{[1-\phi, 1)}(n\phi\hspace{1mm}\mathrm{mod}\hspace{1mm}1) + b\chi_{[1-\phi, 1)^\mathrm{c}}(n\phi\hspace{1mm}\mathrm{mod}\hspace{1mm}1),
\end{align*}
where $\bullet^c$ denotes the set complement.

Observe that for any Fibonacci number $F_k$, $k\geq 2$, the finite word $[u]_1\cdots [u]_{F_k}$ contains $F_{k-1}$ symbols $a$ and $F_{k-2}$ symbols $b$ (where by $[u]_i$ we mean the $i$th element of the sequence $u$).

So far $u$ is a sequence over $\N$. Let us extend $u$ arbitrarily to the left, and call the resulting sequence over $\Z$, $\widehat{u}$. Endow $\mathcal{A}$ with discrete topology, and $\mathcal{A}^\Z$ with the corresponding product topology, and let $T: \mathcal{A}^\Z\hookleftarrow$ denote the left shift. Define the \textit{hull} of $\widehat{u}$, $\Omega$, as follows.
\begin{align*}
\Omega = \set{\omega\in\mathcal{A}^\Z: \lim_{i\rightarrow\infty}T^{n_i}(\widehat{u}) = \omega; \text{ where } \set{n_i}\subset\N, n_i\uparrow\infty}.
\end{align*}
Note that there exists a sequence $w_s\in \Omega$ such that $\set{[w_s]_n}_{n\geq 1}$ coincides with the sequence $u$; indeed, any limit point of the set $\set{T^{F_k}(\widehat{u}): k\in\N}$, with $F_k$ the $k$th Fibonacci number, would serve as such a sequence. From now on we fix such a sequence and reserve the notion $w_s$ for it.

It is easy to see that $\Omega\subset\mathcal{A}^\Z$ is compact and invariant under $T$. Moreover, it is known that $T|_\Omega$ is strictly ergodic (that is, $T|_\Omega$ is topologically minimal and uniquely ergodic)---minimality follows from Gottschalk's theorem \cite{Gottschalk1963} and unique ergodicity from Oxtoby's ergodic theorem \cite{Oxtoby1952} and repetitive nature of $u$; for a comprehensive exposition see \cite[Chapter 5]{Queffelec2010}. Let us reserve the notation $\mu$ for the unique ergodic measure for $T$ on $\Omega$.

From now on whenever we make measure-theoretic statements about $T$ on $\Omega$, such as, for example, existence of Lyapunov exponents almost surely, we implicitly assume that the statement is made with respect to the measure $\mu$. We shall refer to the dynamical system $T|_\Omega$ by $(T, \Omega)$, and $(T,\Omega,\mu)$ when the measure $\mu$ needs to be emphasized.

\subsection{Lyapunov exponents for \texorpdfstring{$\GL(k,\R)$}{} cocycles over \texorpdfstring{$(T,\Omega)$}{}}\label{subsec:lyapunov}

Define a measurable map $G: \Omega\rightarrow \GL(k, \R)$. For $n\in\N$, define $g_n: \Omega\rightarrow \R$ by
\begin{align}\label{eq:sa-process}
g_n(\omega) = \log\norm{G(T^{n-1}\omega)G(T^{n-2}\omega)\cdots G(T\omega)G(\omega)}.
\end{align}
Observe that $\set{g_n}_{n\in\N}$ defines a subadditive process in the following sense. For any $\omega\in\Omega$,
\begin{align*}
g_{n+m}(\omega) \leq g_n(\omega) + g_m(T^n\omega).
\end{align*}
According to the quite general \textit{subadditive ergodic theorem} of J. F. C. Kingman (see \cite{Kingman1973}), the limit
\begin{align*}
\lim_{n\rightarrow\infty}\frac{1}{n}g_n(\omega)
\end{align*}
exists $\mu$-almost surely, is constant $\mu$-almost surely, and belongs to $[-\infty, \infty)$. In the literature this limit is often called the \textit{Lyapunov exponent of the} $\GL(k,\R)$ \textit{cocycle over} $(T,\Omega,\mu)$. By historical convention, the Lyapunov exponent is often denoted by $\gamma$; however, we reserve the notation $\gamma$ for other things, and denote the Lyapunov exponent by $\mathcal{L}$:
\begin{align}\label{eq:lyap-exp}
\mathcal{L}:=\lim_{n\rightarrow\infty}\frac{1}{n}g_n(\omega)\hspace{2mm}\mu\text{-almost surely in }\omega.
\end{align}
In what follows, we shall be concerned with the situation where 
\begin{itemize}

\item $G: \Omega\rightarrow \GL(2,\R)$;
\item For every $\omega\in\Omega$, $G(\omega)$ has strictly positive entries;
\item $G(\omega)$ depends only on the first term of the sequence $\omega$, hence $G$ is continuous.

\end{itemize}
Then the following theorem by P. Walters (see \cite{Walters1984}) guarantees existence and finitude of the Lyapunov exponent not only almost surely, but everywhere.

\begin{thm}[P. Walters, 1984]\label{thm:walters}
Let $T: \Omega\rightarrow \Omega$ be a uniquely ergodic homeomorphism of a compact metrizable space $\Omega$ and let $G: \Omega\rightarrow \GL(k,\R)$ be continuous. If $[G(\omega)]_{ij} > 0$ for all $\omega\in\Omega$ and $1\leq i,j\leq k$, then $\mathcal{L}$ in \eqref{eq:lyap-exp} exists, is constant and finite for all $\omega\in \Omega$. Moreover, if $\mu$ denotes the unique ergodic measure for $(T,\Omega)$, then
\begin{align}\label{eq:walters}
\mathcal{L} = \lim_{n\rightarrow\infty}\frac{1}{n}\int g_n(\omega)d\mu(\omega),
\end{align}
where $\set{g_n}_{n\in\N}$ is as defined in \eqref{eq:sa-process}.
\end{thm}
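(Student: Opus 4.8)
The plan is to prove Theorem~\ref{thm:walters} by first using strict positivity of the matrix entries to upgrade the subadditive cocycle $\set{g_n}$ to an \emph{almost additive} one, and then running a phase-averaging argument that reduces the limit to the classical uniform ergodic theorem available under unique ergodicity.

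For the first step I would note that continuity of $G$, compactness of $\Omega$, and positivity of all entries give constants $0<\alpha\le\beta<\infty$ with $\alpha\le[G(\omega)]_{ij}\le\beta$ for all $\omega\in\Omega$ and all $i,j$. Writing $P_n(\omega):=G(T^{n-1}\omega)\cdots G(\omega)$, an easy induction on $P_n(\omega)=G(T^{n-1}\omega)P_{n-1}(\omega)$ shows that for $n\ge1$ the entries within any fixed column of $P_n(\omega)$ lie within the factor $\rho:=\beta/\alpha$ of one another. Combining this column flatness with elementary comparisons between an operator norm and the largest entry of a matrix, one obtains a constant $\kappa=\kappa(k,\alpha,\beta)\in(0,1]$ with $\norm{AB}\ge\kappa\,\norm{A}\,\norm{B}$ whenever $A$ is a product of at least one such matrix and $B$ has flat columns. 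Applied to $P_{n+m}(\omega)=P_m(T^n\omega)P_n(\omega)$, and setting $C:=-\log\kappa\ge0$, this yields
\begin{align*}
g_n(\omega)+g_m(T^n\omega)-C\ \le\ g_{n+m}(\omega)\ \le\ g_n(\omega)+g_m(T^n\omega),\qquad n,m\ge1,\ \omega\in\Omega.
\end{align*}
Each $g_n$ is continuous and bounded on $\Omega$; iterating the left inequality gives $g_n\ge\sum_{j=0}^{n-1}g_1\circ T^j-(n-1)C$, so $\int g_n\,d\mu$ is finite, and integrating the right inequality together with $T$-invariance of $\mu$ shows $n\mapsto\int g_n\,d\mu$ is subadditive; hence $a:=\lim_n\frac1n\int g_n\,d\mu$ exists, is finite, and equals $\inf_n\frac1n\int g_n\,d\mu$.

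It then remains to show $\frac1n g_n\to a$ uniformly on $\Omega$, for this simultaneously yields existence of $\mathcal{L}$ at \emph{every} $\omega$, its constancy and finiteness, and its identification with the right-hand side of~\eqref{eq:walters}. Fixing $M\in\N$, iterating the two-sided inequality above first gives a phase comparison $\abs{g_n(\omega)-g_n(T^l\omega)}\le D_M$ for all $0\le l<M$, $n\ge M$, $\omega\in\Omega$, with $D_M$ depending only on $M$ and $C$ (through the finitely many bounded functions $g_1,\dots,g_M$). Averaging this over the $M$ phases, then decomposing each $g_n(T^l\omega)$ into $\lfloor n/M\rfloor$ blocks of length $M$ via plain subadditivity plus a bounded remainder, and observing that the shift indices $l+iM$ occurring are pairwise distinct, one arrives at
\begin{align*}
\frac1n g_n(\omega)\ \le\ \frac1M\cdot\frac1n\sum_{j=0}^{n-1}g_M(T^j\omega)+\frac{E_M}{n}
\end{align*}
with $E_M$ independent of $n$ and $\omega$. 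Letting $n\to\infty$ and invoking the uniform convergence of Birkhoff averages of the continuous function $g_M$ (the defining property of unique ergodicity of $(T,\Omega,\mu)$) gives $\limsup_n\sup_{\omega}\frac1n g_n(\omega)\le\frac1M\int g_M\,d\mu$, and taking the infimum over $M$ yields $\limsup_n\sup_{\omega}\frac1n g_n(\omega)\le a$. The matching lower bound is obtained by running precisely the same argument on $\tilde g_n:=-g_n+C$, which the displayed two-sided inequality shows is again almost additive with the same constant $C$; this gives $\limsup_n\sup_{\omega}\frac1n\tilde g_n(\omega)\le\lim_M\frac1M\int\tilde g_M\,d\mu=-a$, i.e.\ $\liminf_n\inf_{\omega}\frac1n g_n(\omega)\ge a$. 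Combining the two inequalities finishes the proof.

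I expect the main obstacle to be the passage from subadditivity to almost additivity in the first step. The two-sided estimate $\norm{AB}\asymp\norm{A}\,\norm{B}$ for products of uniformly positive matrices — essentially the uniform contraction of the Hilbert projective metric by such matrices — is exactly what the positivity hypothesis provides, and it is what allows the phase-averaging scheme to be applied to both $g_n$ and $-g_n$. Without it, $\frac1n g_n$ need not converge uniformly and $\mathcal{L}$ need not be attained at every point, so this is the step where the hypotheses of the theorem are genuinely used; the second step is then a fairly routine bookkeeping reduction to the additive uniform ergodic theorem.
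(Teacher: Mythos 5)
The paper does not prove this statement: it is quoted as a citation to Walters's 1984 paper and used as a black box, so there is no internal proof to compare against. Your argument, however, is a correct and self-contained proof, and it is in the same spirit as Walters's original one. The decisive observation is exactly as you frame it: uniform positivity and continuity on the compact $\Omega$ give $\alpha\le[G(\omega)]_{ij}\le\beta$; writing $P_n(\omega)=G(T^{n-1}\omega)\cdots G(\omega)$, each column of $P_n$ for $n\ge1$ is $G(T^{n-1}\omega)$ applied to a positive vector, so its entries lie within the factor $\rho=\beta/\alpha$ of one another; and for $A$ positive and $B$ with $\rho$-flat columns one has, with $A_{pq}=\norm{A}_{\max}$ and $B_{rs}=\norm{B}_{\max}$,
\begin{align*}
\norm{AB}_{\max}\ \ge\ [AB]_{ps}\ \ge\ A_{pq}B_{qs}\ \ge\ \tfrac{1}{\rho}\norm{A}_{\max}\norm{B}_{\max},
\end{align*}
which after converting to the operator norm gives your constant $\kappa=\kappa(k,\rho)$. (Note that column flatness of $B$ alone suffices here; you did not actually need $A$ to be a product of more than one factor, only positivity.) This yields the almost-additivity $g_n+g_m\circ T^n-C\le g_{n+m}\le g_n+g_m\circ T^n$, and your phase-averaging reduction to the Birkhoff averages of the continuous $g_M$, plus the defining uniform ergodic theorem for $(T,\Omega,\mu)$, correctly produces $\limsup_n\sup_\omega \frac1n g_n\le\inf_M\frac1M\int g_M\,d\mu=a$. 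Applying the same bound to $\tilde g_n=-g_n+C$, which you correctly check satisfies the same two-sided estimate with the same $C$, gives $\liminf_n\inf_\omega\frac1n g_n\ge a$, so $\frac1n g_n\to a$ uniformly; this simultaneously delivers existence, constancy and finiteness of $\mathcal{L}$ at every $\omega$ and its identification with \eqref{eq:walters}. The only step that merits a word of caution is replacing $\sum_{j=0}^{qM-1}$ by $\sum_{j=0}^{n-1}$ in your averaged inequality: the $r<M$ extra terms are each bounded by $\sup_\Omega|g_M|$, so they are legitimately absorbed into the $O(1)$ remainder $E_M$, but this should be said explicitly since $g_M$ need not be nonnegative.
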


Let us mention quickly that according to A. Hof's arguments \cite[Proposition 5.2]{Hof1993}, for substitution dynamical systems, such as $(T,\Omega)$, generated by primitive and invertible substitutions, such as the Fibonacci substitution---see \cite{Queffelec2010} for definitions and details---$\mathcal{L}$ exists and is constant for all $\omega\in\Omega$ even when $G(\omega)$ contains zero entries; as long as there exists $k\geq 1$ such that $G(\omega)$ depends only on the finitely many terms $\set{\omega_{-k},\dots,\omega_0,\dots\omega_k}$ of the sequence $\omega$.

\section{Description of the models and main results}\label{sec:model-and-results}

We begin with a discussion of the quantum Ising model on a one-dimensional integer lattice, with constant transverse magnetic field and interaction couplings modulated by a quasi-periodic sequence chosen freely from the hull $\Omega$ of the Fibonacci substitution sequence $u$, as constructed in the previous section.

\subsection{One-dimensional quantum Fibonacci Ising model}\label{subsec:quantum-model}

Let $p:\mathcal{A}\rightarrow \R$, with $p(a),p(b) > 0$ and, unless stated otherwise, $p(a)\neq p(b)$. For $\omega\in\Omega$, set $J_{\omega,n} = p([T^n\omega]_1)$, and define a \textit{quantum Ising Hamiltonian} on a lattice of size $N$ by
\begin{align}\label{eq:finite-quantum-ising}
\mathcal{H}_{\omega} = -\sum_{n=1}^N J_{\omega,n}\sigma_n^{(x)}\sigma_{n+1}^{(x)} - h\sum_{n=1}^N\sigma_n^{(z)},
\end{align}
where $h > 0$ is the external magnetic field in the direction transversal to the lattice. The matrices $\sigma_n^{(x),(z)}$ are spin-$1/2$ operators in the $x$ and $z$ direction, respectively, given by
\begin{align}\label{eq:tensor-def}
\sigma_n^{(x),(z)}=\underbrace{\mathbb{I}\otimes\cdots\otimes\mathbb{I}}_{N - n\text{ terms }}\otimes\underbrace{\sigma^{(x),(z)}}_{\text{$n$th position}}\otimes\underbrace{\mathbb{I}\otimes\cdots\otimes\mathbb{I}}_{N-n + 1\text{ terms }},
\end{align}
where $\mathbb{I}$ is the $2\times 2$ identity matrix. Here $\sigma^{(x),(z)}$ are the Pauli matrices given by
\begin{align*}
\sigma^{(x)} = \begin{pmatrix} 0 & 1\\ 1 & 0\end{pmatrix}\hspace{4mm}\text{and}\hspace{4mm}\sigma^{(z)}=\begin{pmatrix}1 & 0\\ 0 & -1\end{pmatrix}
\end{align*}
that we view as operators on $\C^2$. Thus $\mathcal{H}_\omega$ is a linear operator on tensor-product of $N$ copies of $\C^2$. We shall use periodic boundary conditions.

Scaling by $1/h$, by abuse of notation we can write $J_{\omega,n} = J_{\omega,n}/h$, so that the Hamiltonian in \eqref{eq:finite-quantum-ising} can be written as
\begin{align}\label{eq:finite-quantum-ising-2}
\mathcal{H}_\omega = -\sum_{n=1}^N\left( J_{\omega,n}\sigma_n^{(x)}\sigma_{n+1}^{(x)} + \sigma_{n}^{(z)}\right).
\end{align}
The Hamiltonian in \eqref{eq:finite-quantum-ising-2} can then be extended periodically to the infinite \textit{periodic} lattice with unit cell $[\omega]_{1}\cdots[\omega]_{N}$. Let us denote the resulting Hamiltonian by $\widehat{\mathcal{H}}_{\omega, N}$.

The spin model in \eqref{eq:finite-quantum-ising-2} can be transformed into a fermion model by applying the so-called Jordan-Wigner transformation (dating at least as far back as 1928 to the paper of P. Jordan and E. Wigner on second quantization \cite{Jordan1928}). This transformation transforms the Pauli spin operators into Fermi creation and annihilation operators, which in turn allows to diagonalize the Hamiltonian \eqref{eq:finite-quantum-ising-2}, and hence $\widehat{H}_{\omega, N}$. The details of this method, due to E. Lieb, T. Schultz and D. Mattis, are rather technical and can be found in \cite{Lieb1961} (for the specialization of the results from \cite{Lieb1961} to our situation, see \cite{Doria1988}). As a consequence, the energy spectrum of $\widehat{\mathcal{H}}_{\omega, N}$ can be computed as the set of those values $\lambda$ that satisfy the equations
\begin{align}\label{eq:ising-matrices}
&(A - B)\phi = \lambda\psi,\\
&(A + B)\psi = \lambda\phi,\notag
\end{align}
where $A$ and $B$ are infinite tridiagonal periodic matrices given by
\begin{align*}
A_{ii} &= -2, &A_{i, i+1} &= A_{i+1, i} = -J_{\omega,i},\\
B_{i,i+1} &= - J_{\omega,i}, &B_{i+1,i} &= J_{\omega,i}
\end{align*}
(all other entries being zero) for $1 \leq i \leq N$, and $A_{i + N, j + N} = A_{i,j}$, $B_{i + N, j + N} = B_{i,j}$, and the corresponding wave function $(\phi, \psi) = \set{(\phi_n,\psi_n)\in\C^2}_{n\in\N}$ does not diverge exponentially (for more details see \cite[Section B]{Lieb1961} and \cite{Doria1988}).

Observe that \eqref{eq:ising-matrices} is equivalent to 
\begin{align*}
(A - B)(A + B)\psi = \lambda^2\psi.
\end{align*}
This equation can be written as
\begin{align*}
\widetilde{J}_{\omega,n-1}\psi_{n-1} + (1 + \widetilde{J}^2_{\omega,n})\psi_n + \widetilde{J}_{\omega,n+1}\psi_{n+1} = \lambda^2\psi_n,
\end{align*}
where $\widetilde{J}_{\omega,i} = J_{\omega,i}$ for $1 \leq i \leq N$, and $\widetilde{J}_{\omega,i + N} = J_{\omega,i}$. Define an operator $\mathcal{J}^{(N)}_\omega$ by
\begin{align*}
[\mathcal{J}^{(N)}_\omega\psi]_n = \widetilde{J}_{\omega,n-1}\psi_{n-1} + (1 + \widetilde{J}^2_{\omega,n})\psi_n + \widetilde{J}_{\omega,n+1}\psi_{n+1}.
\end{align*}
The operator $\mathcal{J}^{(N)}_\omega$ is an infinite periodic Jacobi operator of period $N$, and as $N\rightarrow\infty$, $\mathcal{J}^{(N)}_\omega$ converges in strong sense to the operator $\mathcal{J}_\omega$ defined by
\begin{align}\label{eq:limit-hamiltonian}
[\mathcal{J}_\omega\psi]_n = J_{\omega,n-1}\psi_{\omega,n-1} + (1 + J^2_n)\psi_n + J_{\omega,n+1}\psi_{n+1}
\end{align}
(for more details see \cite{Yessen2011a}). Consequently, the spectrum of $\widehat{\mathcal{H}}_{\omega,N}$ can be investigated as the spectrum of $\mathcal{J}^{(N)}_\omega$, which, by Floquet theory, consists of $N$ compact intervals (see, for example, \cite{Toda1981}). Let us denote the spectrum of $\widehat{\mathcal{H}}_{\omega,N}$ by $\sigma_{\omega,N}$. We are interested in the structure of the spectrum in the thermodynamic limit (that is, as $N\rightarrow\infty$).

Before stating one of our main results, let us quickly recall the definition of the Hausdorff metric, as a metric on $2^\R$ and the definition of local Hausdorff dimension.

For any $A, B\subset\R$, define the Hausdorff metric $\hdist(A, B)$ by
\begin{align*}
\hdist(A, B) = \max\set{\adjustlimits\sup_{a\in A}\inf_{b\in B}\set{|a -b|}, \adjustlimits\sup_{b\in B}\inf_{a\in A}\set{|a - b|}}.
\end{align*}

Based on the connection mentioned above between $\sigma_{\omega,N}$ and the spectrum of $\mathcal{J}^{(N)}_\omega$, one may postulate a connection between the spectrum of $\mathcal{J}_\omega$ in \eqref{eq:limit-hamiltonian} and the spectrum of $\widehat{\mathcal{H}}_{\omega,N}$ in the thermodynamic limit. This indeed follows, since the Ising model is equivalent, via a unitary transformation, to the tight binding model in \eqref{eq:limit-hamiltonian} (see, for example, \cite{Satija1990, Kolar1989, Satija1994} for an attempt to approach Ising models via these tight binding models---mostly numerically). 

It was shown in \cite{Yessen2011a} that the spectrum of $\mathcal{J}_\omega$ is independent of $\omega\in\Omega$. The same independence result follows for $\widehat{\mathcal{H}}_{\omega,N}$ in the thermodynamic limit. It is therefore enough to study $\widehat{\mathcal{H}}_{\omega_s, N}$, where $\omega_s\in\Omega$ is as defined in Section \ref{subsec:fibonacci-substitution}.

Denote by $\sigma_k$ the spectrum of $\widehat{\mathcal{H}}_{\omega_s, F_k}$, where $F_k$ is the $k$th Fibonacci number. It was shown in \cite{Yessen2011a}, that as $k\rightarrow\infty$, the spectra of $\mathcal{J}_{\omega_s}^{(F_k)}$ approach in Hausdorff metric the spectrum of $\mathcal{J}_{\omega_s}$. Again, via the aforementioned connection between the Ising Hamiltonian and Jacobi matrices, one would expect a similar result for the Ising Hamiltonian. This is indeed the case.

To avoid a rather opaque technical thicket, we shall proceed directly by explicitly constructing a compact set and showing that the sequence $\set{\sigma_k}$ approaches this set in the Hausdorff metric. We then proceed to describe topological, measure-theoretic and fractal-dimensional properties of this limit set, which we call the \textit{energy spectrum of} $\mathcal{H}_{\omega_s, N}$ in the thermodynamic limit.

Before we continue, let us introduce the following notation. We shall denote the Hausdorff dimension of a set $A \subset \R$ by $\hdim(A)$. By $\lhdim(A, a)$ we denote the local Hausdorff dimension of $A$ at a point $a$: 
\begin{align*}
\lhdim(A, a) = \lim_{\epsilon\rightarrow 0}\hdim(A\cap(a-\epsilon, a + \epsilon)).
\end{align*}
The box-counting dimension of $A$ is denoted by $\bdim(A)$.

\begin{rem}
In what follows, by a \textit{Cantor set} we mean a (nonempty) compact set with no isolated points whose complement is dense.
\end{rem}

Since a certain multifractal with specific properties will appear more than once, it is convenient to give this set a name:

\begin{defn}\label{defn:um-set}
We call a Cantor set $S\subset \R$ a \textit{um-set} (for \textit{uniform multifractal set}) provided that $S$ satisfies the following properties.

\begin{enumerate}[(a)]

\item The set $S$ has zero Lebesgue measure;

\item The function $d$ that maps $s\in S$ to $\lhdim(S, s)$ is continuous and for all $\epsilon > 0$ and for $s\in S$, $d$ restricted to $(s - \epsilon, s + \epsilon)\cap S$ is non-constant (in particular, $S$ is "uniformly"---i.e. localized at every $s\in S$---a multifractal);

\item The Hausdorff dimension of $S$ is strictly between zero and one.

\end{enumerate}
 
If (c) fails, we call $S$ \textit{full um-set}. If $S$ satisfies (a)--(c) and, in addition,

\begin{enumerate}[(d)]

\item The box counting dimension of $S$ exists and coincides with its Hausdorff dimension,

\end{enumerate}

then we call $S$ a \textit{moderate um-set}. If $S$ satisfies (a)--(d) except (c), we call $S$ a \textit{full moderate um-set}.

\end{defn}
\begin{rem}
 The terminology \textit{moderate set} is not standard and, to the best of the author's knowledge, has not appeared (at least in this context) before. This terminology is introduced here only for convenience.
\end{rem}

Let us recall that, roughly speaking, a \textit{dynamically defined} Cantor set $C$ is one which is a blowup of an arbitrarily small neighborhood of any $a\in C$ under a $C^{1 + \epsilon}$ map defined in a neighborhood of $C$ (here $C^{1 + \epsilon}$ means continuously differentiable with H\"older continuous derivative whose H\"older exponent is $\epsilon$). For the technical definition and properties of dynamically defined Cantor sets, see \cite[Chapter 4]{Palis1993}. Let us only mention that dynamically defined Cantor sets are rather rigid. For example, the box-counting dimension exists and coincides with the Hausdorff dimension. Moreover, local box-counting and Hausdorff dimensions do not depend on the point of localization and coincide with the global dimensions. 

It is convenient to give a name to a Cantor set which is, in a sense, arbitrarily close to being dynamically defined:
\begin{defn}\label{defn:almost-dynamically-defined}
We call a Cantor set $S$ \textit{almost dynamically defined Cantor set} provided that there exists a finite subset $\mathcal{T}$ of $S$ (possibly empty) such that outside an arbitrarily small neighborhood of $\mathcal{T}$, the remainder of $S$ is a dynamically defined Cantor set.
\end{defn}
We are now ready to state our main results for the quantum Ising quasicrystal.
\begin{thm}\label{thm:spectral-convergence-thm}
There exists a compact (nonempty) set $B_\infty$ in $\R$ such that the sequence $\set{\sigma_k}_{k\in\N}$ converges to $B_\infty$ in the Hausdorff metric. The set $B_\infty$ implicitly depends on the choice of $p(a)$ and $p(b)$, and has the following properties.
\begin{enumerate}[i.]

\item $B_\infty$ is a um-set, but not a full um-set;

\item With $p(a)$ fixed, for all choices of $p(b)$ sufficiently close to $p(a)$, but not equal to $p(a)$, $B_\infty$ is a moderate um-set, but not a full moderate um-set;

\item $\hdim(B_\infty)$ depends continuously on $(p(a), p(b))$.

\end{enumerate}
\end{thm}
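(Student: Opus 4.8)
The plan is to reduce the Ising spectral problem to the Jacobi-matrix problem studied in \cite{Yessen2011a}, and then to extract from the trace-map analysis there the finer dimensional properties asserted in (i)--(iii). Concretely, I would proceed as follows. First, recall that the spectrum $\sigma_{\omega_s,F_k}$ of $\widehat{\mathcal{H}}_{\omega_s,F_k}$ is, via the Lieb--Schultz--Mattis/Jordan--Wigner reduction recorded above, exactly $\{\pm\sqrt{\lambda}:\lambda\in\mathrm{spec}(\mathcal{J}^{(F_k)}_{\omega_s})\}$, and $\mathrm{spec}(\mathcal{J}^{(F_k)}_{\omega_s})$ is the $F_k$-periodic Floquet spectrum, i.e.\ the set where the associated transfer-matrix trace (a half-trace $x_k$) lies in $[-1,1]$. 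These transfer matrices satisfy the Fibonacci trace-map recursion on the invariant surface, so $B_\infty$ can be constructed explicitly as $\{\pm\sqrt{\lambda}:\lambda\in\Sigma\}$, where $\Sigma:=\bigcap_k\{\lambda: |x_k(\lambda)|\le 1 \text{ eventually along the recursion}\}$ is the limit set produced by the trace-map dynamics; Hausdorff convergence $\sigma_k\to B_\infty$ follows from the periodic approximation argument, exactly as the convergence $\mathrm{spec}(\mathcal{J}^{(F_k)})\to\mathrm{spec}(\mathcal{J})$ was obtained in \cite{Yessen2011a}, transported through the continuous (and, away from $\lambda=0$, bi-Lipschitz) map $\lambda\mapsto\pm\sqrt\lambda$.

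Second, for (i): the set $\Sigma$ is, outside an arbitrarily small neighborhood of the finitely many points where the trace map fails to be uniformly hyperbolic (the ``singular'' or pseudo-hyperbolic locus, including the relevant spectral edges and $\lambda=0$), a dynamically defined Cantor set associated with the hyperbolic trace map; hence $\Sigma$, and therefore $B_\infty$, is an almost dynamically defined Cantor set. That it is a um-set—zero Lebesgue measure, $\hdim\in(0,1)$, and the continuity and local non-constancy of $s\mapsto\lhdim(B_\infty,s)$—comes from the fact that the local Hausdorff dimension at a point equals the dimension of a local stable/unstable slice of the hyperbolic set through the corresponding orbit, which varies with the orbit (the dimension spectrum is a genuine interval, not a point, because the $a$- and $b$-type contractions differ when $p(a)\neq p(b)$), while at $\lambda=0$ and at the edges the local dimension degenerates to $0$ or $1$; these edge/degenerate values are precisely why $B_\infty$ is \emph{not} a full um-set. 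The square-root map is smooth with nonvanishing derivative away from $0$, so it preserves all local and global fractal dimensions there, and the single point $0$ is handled separately.

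Third, for (ii)--(iii): when $p(b)$ is sufficiently close to $p(a)$, the coupling is a small perturbation of the free (constant-coupling) case, the trace map is uniformly hyperbolic on the relevant part of $\Sigma$, the invariant set is a genuine dynamically defined Cantor set in the strong sense of \cite[Ch.~4]{Palis1993}, and hence its box-counting dimension exists and equals its Hausdorff dimension—giving the ``moderate'' conclusion; it is not \emph{full} moderate because, again, the edge behavior forces $\hdim(B_\infty)<1$ (equivalently the complement has full measure) in this near-constant regime. Finally, continuity of $\hdim(B_\infty)$ in $(p(a),p(b))$ follows from continuity of the Hausdorff dimension of hyperbolic basic sets under $C^1$ perturbation of the defining dynamics (Palis--Takens), once one checks that the entries of the trace-map data depend continuously (indeed real-analytically) on $(p(a),p(b))$ and that the hyperbolic structure persists; near the diagonal $p(a)=p(b)$ one must additionally check that the dimension extends continuously to its limiting value there, using the explicit description of the free case.

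The main obstacle I anticipate is the careful handling of the non-hyperbolic locus: away from the constant-coupling regime the trace map need not be uniformly hyperbolic on all of the spectral set, so one must (a) identify the finite exceptional set $\mathcal{T}$, (b) show the rest of $\Sigma$ is genuinely dynamically defined with the required $C^{1+\epsilon}$ regularity, and (c) analyze the local dimension at the exceptional points (spectral edges, gap edges, and $\lambda=0$) delicately enough to conclude both the non-constancy in every neighborhood (the um-set property) and the failure of fullness. Transporting all of this correctly through the square root at $\lambda=0$, where $B_\infty$ has a built-in reflection symmetry $s\mapsto -s$, is the other place where care is needed, though it is more bookkeeping than genuine difficulty.
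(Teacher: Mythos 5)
Your overall strategy---pass through the Lieb--Schultz--Mattis/Jordan--Wigner reduction to a Jacobi operator and then to the Fibonacci trace map, defining $B_\infty$ as the bounded-orbit locus along an analytic curve of initial conditions---is the paper's strategy too, and your treatment of (ii)--(iii) via persistence of hyperbolicity and continuity of dimension for dynamically defined Cantor sets is in the right spirit. But a key step in (i) is wrong, and the error reveals a misreading of the definitions.

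You argue that $B_\infty$ is ``not a full um-set'' because the local Hausdorff dimension degenerates at $\lambda = 0$ and at spectral edges. This is not what ``full'' means here: a \emph{full} um-set is one with $\hdim \notin (0,1)$ (condition (c) of Definition \ref{defn:um-set} fails globally), not one with degenerate local dimension at isolated points. More importantly, the mechanism you propose presupposes that $E = 0$ belongs to $B_\infty$, and it does not. The paper's argument computes the curve $\gamma$ explicitly, finds where it meets the Cayley cubic $S_0^\R$ (it does so only at $E = 0$, at the point $p_0$ of \eqref{eq:gamma-cc-intersection}), and then shows by a direct computation that $f^2(p_0)$ satisfies the escape criterion of \cite[Proposition 4.1-(2)]{Yessen2011}, so $\mathcal{O}_f^+(p_0)$ is unbounded and $E = 0 \notin B_\infty$. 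The dichotomy in Theorem \ref{thm:thm-yessen}-iii is precisely: $\Gamma\setminus\mathcal{T}$ is a full um-set if and only if some point of $\Gamma\setminus\mathcal{T}$ lies on $S_0$. Since $\gamma$ meets $S_0$ only at an escaping point, no point of $\Gamma$ lies on $S_0$, whence $\hdim(B_\infty) < 1$ and $B_\infty$ is a um-set but not full. Your proposal never makes this check, so it cannot reach the conclusion; indeed, as stated it would predict the opposite (a degenerate point inside $B_\infty$), which is exactly the scenario in which ``full'' would hold. Similarly, for (ii) the ``not full moderate'' clause again comes from the same $S_0$-avoidance, not from the near-constant coupling regime; what closeness of $p(b)$ to $p(a)$ buys, per \cite{Yessen2011}, is transversality of $\gamma$ with all center-stable manifolds (so $\mathcal{T} = \emptyset$), which is what Theorem \ref{thm:thm-yessen}-ii requires.

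Two smaller issues: first, the paper bypasses the Jacobi-operator intermediary in the actual proof and works directly with the line $\gamma(E)$ in $\R^3$ and its transversal intersections with the center-stable lamination, reducing everything to the ready-made Theorem \ref{thm:thm-yessen}; you re-derive dimension properties of dynamically defined Cantor sets from scratch, which duplicates work the paper packages in that theorem. Second, for (iii) the paper's one-line argument is that $\gamma$ depends jointly analytically on $(p(a), p(b))$, so continuity of $\hdim$ follows from Theorem \ref{thm:thm-yessen}-iv; your appeal to Palis--Takens continuity of dimension of hyperbolic basic sets is the underlying mechanism, but you would also need to argue (as Theorem \ref{thm:thm-yessen}-iv does) that the absence of isolated points persists under perturbation, which you omit.
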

\begin{rem}
Thus Theorem \ref{thm:spectral-convergence-thm} is an extension of \cite[Theorem 2.1]{Yessen2011}. Also, we believe the restriction on the parameter $p(b)$ in statement ii can be dropped (see the concluding remarks in \cite{Yessen2011a} for more details).
\end{rem}

Next we discuss the topological structure of the energy spectrum for many noninteracting fermions. This can be thought of as $N$ identical Ising models on separate $1$D lattices with no interaction between them. For $N > 1$  noninteracting fermions, the corresponding energy spectrum is the $N$--fold arithmetic sum of single-fermion spectra, excluding double-counting (as two fermions cannot occupy the same state). More precisely, if we denote by $B_\infty^N$ the $N$--fermion spectrum, then we have
\begin{align*}
B_\infty^N = \overline{\set{a_1 + \cdots + a_N: a_i\in B_\infty\text{ and for any } i\neq j, a_i\neq a_j}},
\end{align*}
where $\overline{\bullet}$ denotes topological closure. On the other hand, since, as claimed in Theorem \ref{thm:spectral-convergence-thm}, $B_\infty$ is a Cantor set, it contains no isolated points. Therefore,
\begin{align*}
B_\infty^N = \set{a_1 + \cdots + a_N: a_i\in B_\infty}.
\end{align*}
\begin{rem}
 In this context, the term \textit{energy spectrum} is synonymous with the more common \textit{excitation spectrum}. 
\end{rem}

Regarding topological structure of $B_\infty^N$, we have the following.

\begin{thm}\label{thm:thickness}
For every choice of positive and distinct $p(a)$, $p(b)$, there exists an $N_0\in\N$ such that for all $N\geq N_0$, $B_\infty^N$ is the interval $[A, B]$, where
\begin{align*}
A = \min\set{\sum_{i = 1}^N a_i, a_i \in B_\infty}\hspace{4mm} B = \max\set{\sum_{i = 1}^N b_i, b_i \in B_\infty}.
\end{align*}
Consequently, for $N\geq N_0$, 
\begin{align}\label{eq:thm-thickness}
B_\infty^N = [-Nb, Nb], 
\end{align}
where $b$ is the upper bound of $B_\infty$; hence if $N = \infty$, then $B_\infty^N = \R$. Moreover, for any $p(a) > 0$ and all $p(b) > 0$ sufficiently close to $p(a)$ and not equal to $p(a)$, we can take $N_0 = 2$.
\end{thm}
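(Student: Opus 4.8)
The plan is to regard $B_\infty^N$ as an $N$-fold arithmetic sum of the Cantor set $B_\infty$ and to run the thickness machinery for Cantor sets: Newhouse's gap lemma and its consequences for sumsets (see \cite[Chapter 4]{Palis1993}). Recall that the thickness $\tau(C)\in(0,\infty]$ of a Cantor set $C\subset\R$ is defined through ratios of bridges to adjacent gaps, that $\tau(C)=\infty$ forces $C$ to be an interval, and that the gap lemma states: if $\tau(C_1)\tau(C_2)>1$ then either the convex hulls of $C_1,C_2$ are disjoint, or one of the two sets lies inside a bounded gap of the other, or $C_1\cap C_2\neq\emptyset$. The basic input is that $B_\infty$ has strictly positive thickness, with metric structure controlled uniformly across scales; this is exactly the content of $B_\infty$ being an almost dynamically defined Cantor set in the sense of Definition \ref{defn:almost-dynamically-defined}, which is what the trace-map analysis behind Theorem \ref{thm:spectral-convergence-thm} (cf. \cite{Yessen2011a}) provides. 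Write $\tau_0:=\tau(B_\infty)>0$. I also record that $B_\infty$ is symmetric about the origin with $\sup B_\infty=b$---the symmetry $\lambda\leftrightarrow-\lambda$ descending from $(A-B)(A+B)\psi=\lambda^2\psi$---so that $\min B_\infty^N=-Nb$ and $\max B_\infty^N=Nb$ are attained (by $a_i\equiv\mp b$); it therefore suffices to show that $B_\infty^N$ has no gap inside its convex hull once $N\geq N_0$.

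Next I would package the gap lemma into a sumset statement. Since $z\in C_1+C_2$ precisely when $C_1\cap(z-C_2)\neq\emptyset$, and $z-C_2$ is a reflected translate of $C_2$ with the same thickness, one gets: if $\tau(C_1)\tau(C_2)>1$, if no bounded gap of $C_1$ is longer than $\diam(C_2)$, and if no bounded gap of $C_2$ is longer than $\diam(C_1)$, then $C_1+C_2$ equals the full interval $[\min C_1+\min C_2,\ \max C_1+\max C_2]$. Now use $B_\infty^N=B_\infty^{\lceil N/2\rceil}+B_\infty^{\lfloor N/2\rfloor}$. The key quantitative claim is that $\tau(B_\infty^m)\to\infty$ as $m\to\infty$, which one proves from the quasi-self-similar, bounded-geometry description of $B_\infty$ coming from the renormalization of the trace map: each additional summand improves the worst bridge-to-gap ratio by a definite factor, so after finitely many steps the ratio exceeds any prescribed bound. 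Fix $m_0$ with $\tau(B_\infty^{m_0})$ large; then for $N\geq 2m_0$ both half-sums have large thickness (so their thicknesses multiply to more than $1$) and each one's largest bounded gap is much smaller than the diameter of the other, so the sumset statement applies and $B_\infty^N=[\min B_\infty^N,\max B_\infty^N]=[-Nb,Nb]$. One can take $N_0=2m_0$, bounded in terms of $1/\tau_0$; letting $N\to\infty$ gives $\bigcup_N B_\infty^N=\R$, the case $N=\infty$.

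For the refinement $N_0=2$ when $p(b)$ is close to $p(a)$, I would split $B_\infty=B_\infty^+\cup B_\infty^-$ at the origin, with $B_\infty^-=-B_\infty^+$. As $p(b)\to p(a)$ the Jacobi operator $\mathcal{J}_\omega$ and its periodic approximants converge in norm to a constant-coefficient operator whose spectrum is a single interval; hence the gaps internal to $B_\infty^+$ close up while its bridges stay bounded below, so $\tau(B_\infty^+)\to\infty$. Thus for $p(b)$ sufficiently close to $p(a)$ we have $\tau(B_\infty^+)^2>1$, $\tau(B_\infty^-)^2>1$ and $\tau(B_\infty^+)\tau(B_\infty^-)>1$, so each of $B_\infty^++B_\infty^+$, $B_\infty^-+B_\infty^-$ and $B_\infty^++B_\infty^-$ is a full interval; $B_\infty^2$ is the union of these three, and a direct computation with their endpoints (using that the internal gaps of $B_\infty^\pm$ are negligible in this regime) shows that they abut to cover $[-2b,2b]$, giving \eqref{eq:thm-thickness} with $N_0=2$.

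I expect two points to carry the difficulty. The first is establishing that $B_\infty$ has strictly positive thickness with uniformly controlled geometry at all scales; this is not a soft fact and leans on the detailed combinatorial-metric structure of the spectrum produced by the trace map in \cite{Yessen2011a} (and on handling the finitely many exceptional points at which $B_\infty$ fails to be dynamically defined). The second is upgrading ``$B_\infty^N$ contains an interval'' to the exact identity $B_\infty^N=[-Nb,Nb]$---i.e., filling the sumset all the way out to its extreme points---which here is handled by the gap-lemma sumset statement together with the thickness growth $\tau(B_\infty^m)\to\infty$; in the $N_0=2$ case the corresponding issue is the endpoint bookkeeping showing the three sub-sumsets actually meet.
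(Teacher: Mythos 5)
Your plan correctly identifies the central input---that $B_\infty$ has strictly positive Newhouse thickness---and the paper derives this the same way you sketch, from the (almost) dynamically defined Cantor structure: the local thickness at each point of $B_\infty$ equals the thickness of some dynamically defined Cantor set, which is strictly positive. You also correctly observe that the $\pm$ symmetry of $B_\infty$ (inherited from the $\lambda^2$ equation) pins down the interval as $[-Nb,Nb]$; the paper invokes the same symmetry.

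However, your route from $\tau(B_\infty)>0$ to ``$B_\infty^N$ is an interval for large $N$'' contains a genuine gap. You propose to show $\tau(B_\infty^m)\to\infty$ by iterating the pairwise gap lemma, asserting that ``each additional summand improves the worst bridge-to-gap ratio by a definite factor.'' This is not a standard fact, and it fails as stated: pairwise arithmetic sums of Cantor sets need not be Cantor sets (they can be Cantorvals, as the paper itself remarks), and even when they are, $\tau(C_1+C_2)$ is not controlled from below by $\tau(C_1),\tau(C_2)$ in any simple monotone way; Newhouse's gap lemma gives nonempty intersection of translates, not a thickness bound on the sum. The paper bypasses this entirely by invoking Astels' theorem on $m$-fold sums directly: writing $\nu(\mathcal{C})=\tau(\mathcal{C})/(1+\tau(\mathcal{C}))$, if $m\,\nu(\mathcal{C})\ge 1$ then the $m$-fold sum of $\mathcal{C}$ equals the $m$-fold sum of its convex hull. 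Since $\tau(B_\infty)>0$ forces $\nu(B_\infty)>0$, a finite $N_0$ exists immediately, with no induction and no intermediate thickness estimates. Similarly, for the refinement $N_0=2$ the paper does not split $B_\infty$ at the origin; it uses that, for $p(b)$ near $p(a)$, $\hdim(B_\infty)\to 1$, combined with the dimension--thickness inequality $\log 2/\log(2+1/\tau(\bullet))\le\hdim(\bullet)$ to force $\tau(B_\infty)>1$, whence $2\nu(B_\infty)\ge 1$ and Astels applies with $m=2$. Your split-at-origin argument also leaves the claim $\tau(B_\infty^\pm)\to\infty$ unproved (norm convergence of operators does not, by itself, control thickness) and the ``endpoint bookkeeping'' to make the three sub-intervals abut is not carried out; both of these are avoided by the paper's cleaner dimension route. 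The fix is to replace your iterated gap-lemma step with Astels' theorem for $m$-fold sums.
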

\begin{rem}
The topological structure of $B_\infty^N$ for $N < N_0$ is an interesting question. At present we do not have any strong qualitative results in this direction. We believe that when $p(a)$ and $p(b)$ are sufficiently far, then for some $N \geq 2$, $B_\infty^N$ is a Cantorval---a compact set with dense interior, any connected component of the complement of which has its endpoints approximated by the endpoints of other connected components of the complement. In other words, it is in a sense a "Cantor set with dense interior." Sets of this nature, and their appearence in dynamical systems and number theory, have been under active investigation.
\end{rem}

\subsection{One-dimensional classical Fibonacci Ising model}\label{subsec:subsec-classical}

We first consider the classical 1D Ising spin chain in a real magnetic field and real temperature regime. For the sake of brevity, we refer to it simply as the \textit{real regime}, and in the complex magnetic field case as the \textit{complex regime}.

\subsubsection{Real regime}\label{subsubsec:real-regime}

Let us consider a linear chain of $N$ spins $\sigma_i\in\set{\pm 1}$, $1\leq i \leq N$. The energy of a configuration $\sigma = (\sigma_1,\sigma_2,\dots,\sigma_N)$ is given by
\begin{align}\label{eq:eq-energy-real}
\mathcal{H}(\sigma) = -\sum_{i = 1}^N\left(J_i\sigma_i\sigma_{i+1} + H_i\sigma_i\right),
\end{align}
with periodic boundary conditions (i.e. $\sigma_{N+1} = \sigma_1$). As mentioned in the introduction, we concentrate our attention on the ferromagnetic case (that is, the interaction strength couplings $J_i$ are positive). We take $H_i$, the external magnetic field, non-negative. 

The partition function of this system is given by
\begin{align}\label{eq:eq-real-partition}
\mathbf{Z}^{(N)} = \sum_{\sigma}\exp\left(-\frac{1}{k_B\tau}\mathcal{H}(\sigma)\right),
\end{align}
where the sum is taken over all possible configurations of $N$ spins. Here $\tau$ denotes the temperature and $k_B$ is the Boltzmann's constant. The corresponding free energy is given by
\begin{align}\label{eq:eq-real-free-energy}
\mathbf{F}^{(N)} = -\frac{1}{Nk_B\tau}\log\mathbf{Z}^{(N)},
\end{align}
expressed as a function of $\tau$. 

Let $p: \mathcal{A}\rightarrow\R$, $p(a), p(b) \geq 0$ and $p(a)\neq p(b)$ unless stated otherwise. Also, introduce $q: \mathcal{A}\rightarrow\R$ with $q(a),q(b)\geq 0$ (we do not force, in general, that $q(a)\neq q(b)$). For any $\omega\in\Omega$, $n\in\N$, set
\begin{align*}
J_{\omega,n} = p([T^n\omega]_1)\hspace{2mm}\text{ and }\hspace{2mm} H_{\omega,n} = q([T^n\omega]_1).
\end{align*}
With these sequences of interaction strength couplings, $\set{J_{\omega,n}}$, and magnetic field, $\set{H_{\omega,n}}$, we obtain a classical (i.e. non-quantum) quasi-periodic Ising model in \eqref{eq:eq-energy-real}, which depends on the choice of $\omega\in\Omega$ (\textit{quasi-periodic} in the sense that the interaction couplings and the magnetic field are modulated by a quasi-periodic sequence $\omega$, and \textit{periodic} in the sense that we use periodic boundary conditions). To emphasize this dependence, let us write $\mathbf{Z}^{(N)}_\omega$ and $\mathbf{F}^{(N)}_\omega$ for the corresponding partition function and the free energy, respectively, on a lattice of size $N$. 

Let us remark that the partition function $\mathbf{Z}^{(N)}$ is obviously strictly positive, so that $\mathbf{F}^{(N)}$ is well-defined. Regularity properties of $\mathbf{F}^{(N)}$ in the thermodynamic limit (that is, as $N\rightarrow\infty$), which we denote by $\mathbf{F}$ (or by $\mathbf{F}_\omega$ when dependence on $\omega\in\Omega$ needs to be emphasized), presents a curious problem in connection with critical phenomena (i.e. phase transitions). Since $\mathbf{Z}^{(N)}$ does not admit zeros for finite $N$ and $0 < \tau < \infty$, the free energy does not experience critical behavior in finite volume and finite temperature (this is of course expected in dimension one). In fact, for finite $N$ and $0 < \tau < \infty$, $\mathbf{F}^{(N)}$ is easily seen to be analytic. Does this property survive the thermodynamic limit? It is expected that the answer is yes. In some cases $\mathbf{F}$ can be computed analytically; in the present case, however, properties of $\mathbf{F}_\omega$ (assuming $\mathbf{F}_\omega$ 
is well-defined) were hitherto not known (see, for example, remarks in \cite{Baake1999}). This brings us to

\begin{thm}\label{thm:thm-classical-analyticity}
The sequence of functions $\mathbf{F}_\omega^{(N)}$ converges, as $N\rightarrow\infty$, uniformly in $\tau$ to an analytic function $\mathbf{F}$ of $\tau\in(0,\infty)$ which is independent of $\omega$ and is strictly negative.
\end{thm}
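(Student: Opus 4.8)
The plan is to reduce the statement to a statement about a $\GL(2,\R)$ cocycle over $(T,\Omega,\mu)$ and then invoke Theorem~\ref{thm:walters}. First I would recall the standard transfer-matrix representation of the partition function: for the Hamiltonian in \eqref{eq:eq-energy-real} with periodic boundary conditions, one has $\mathbf{Z}^{(N)}_\omega = \Tr\big(M_{\omega,N}M_{\omega,N-1}\cdots M_{\omega,1}\big)$, where each $M_{\omega,n}$ is the $2\times 2$ transfer matrix built from $J_{\omega,n} = p([T^n\omega]_1)$ and $H_{\omega,n} = q([T^n\omega]_1)$, together with the temperature $\tau$. The key structural point is that, since $p$ and $q$ depend only on the first symbol of the shifted sequence, $M_{\omega,n}$ depends only on $[T^n\omega]_1$; thus, fixing $\tau\in(0,\infty)$, we may write $M_{\omega,n} = G(T^n\omega)$ for a continuous map $G\colon\Omega\to\GL(2,\R)$. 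Because $\tau>0$ and $J,H\geq 0$ are finite, every entry of $G(\omega)$ is strictly positive (the Boltzmann weights $e^{\pm(J\sigma\sigma'+H\sigma)/k_B\tau}$ are all positive), so $G$ satisfies the hypotheses of Walters' theorem.

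Next I would bound $\mathbf{F}^{(N)}_\omega$ between $\|M_{\omega,N}\cdots M_{\omega,1}\|$ and that norm times a dimensional constant: for any $2\times 2$ matrix $P$ with positive entries, $\|P\| \leq \Tr(P) \leq 2\|P\|$ (up to the choice of norm), so
\begin{align*}
-\frac{1}{Nk_B\tau}\log\norm{g_N(\omega)\text{-argument}}\;=\;-\frac{1}{Nk_B\tau}\log\norm{G(T^{N-1}\omega)\cdots G(\omega)}
\end{align*}
differs from $\mathbf{F}^{(N)}_\omega = -\frac{1}{Nk_B\tau}\log\mathbf{Z}^{(N)}_\omega$ by at most $\frac{\log 2}{Nk_B\tau}$, which tends to $0$ as $N\to\infty$, uniformly on compact subsets of $\tau$. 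Hence $\mathbf{F}_\omega := \lim_N \mathbf{F}^{(N)}_\omega$ exists and equals $-\mathcal{L}(\tau)/k_B\tau$, where $\mathcal{L}(\tau)$ is the Lyapunov exponent of the cocycle $G=G_\tau$; by Theorem~\ref{thm:walters} this limit is independent of $\omega$, and by \eqref{eq:walters} it equals $-\frac{1}{k_B\tau}\lim_N \frac{1}{N}\int g_N\,d\mu$. Strict negativity of $\mathbf{F}$ follows because the entries of $G$ are bounded below by a positive constant uniform in $\omega$, forcing $\mathcal{L}(\tau)>0$ (indeed each $\mathbf{F}^{(N)}_\omega$ is uniformly bounded above by a negative constant, since $\mathbf{Z}^{(N)}_\omega \geq c^N$ for some $c>1$ — e.g. the all-aligned configuration already contributes an exponentially large term).

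The genuinely delicate part is \emph{analyticity} of $\mathbf{F}$ in $\tau$, together with \emph{uniform} convergence. The entries of $M_{\omega,n}$ are real-analytic functions of $\tau\in(0,\infty)$, in fact extend holomorphically to a complex neighborhood of any compact subinterval. The plan is to exhibit each $\mathbf{F}^{(N)}_\omega$ as a holomorphic function on a fixed complex neighborhood $U$ of a given compact $K\subset(0,\infty)$ — this requires showing $\mathbf{Z}^{(N)}_\omega$ has no zeros there, uniformly in $N$; since the Boltzmann weights stay in a compact subset of the right half-plane for $\tau$ near $K$, the product of positive-real-part matrices keeps positive-real-part trace, so $\mathbf{Z}^{(N)}_\omega\neq 0$ on $U$. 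Then $\mathbf{F}^{(N)}_\omega = -\frac{1}{Nk_B\tau}\log\mathbf{Z}^{(N)}_\omega$ is holomorphic on $U$. A Montel/normal-families argument, combined with the uniform (in $\omega$ and on $K$) convergence established above via the Walters bound and equicontinuity of the integrands $\frac1N\int g_N\,d\mu$, then upgrades pointwise convergence on $K$ to locally uniform convergence on $U$, whence the limit $\mathbf{F}$ is holomorphic on $U$, hence real-analytic on $(0,\infty)$. The main obstacle I anticipate is controlling the zero-free region and the uniform bounds on $\log\mathbf{Z}^{(N)}_\omega$ simultaneously in $N$ and in the complex variable $\tau$ — concretely, showing that the product of the complexified transfer matrices cannot have trace approaching $0$, uniformly in $N$, which is where positivity of entries (and the resulting contraction on the projective line, à la Birkhoff/Hilbert metric) does the essential work and must be carried out with care in the complex setting.
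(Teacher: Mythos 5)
Your first half---transfer-matrix representation, the bound $\norm{\mathcal{T}^{(N)}_\omega}_\infty\leq\mathbf{Z}^{(N)}_\omega\leq\norm{\mathcal{T}^{(N)}_\omega}_1$, the application of Theorem~\ref{thm:walters} to get $\omega$-independence, and the elementary lower bound $\mathbf{Z}^{(N)}_\omega\geq e^{N K_{\min}}$ for strict negativity---reproduces the paper's argument essentially verbatim. Where you diverge is the analyticity step, and here there is both a genuinely different route \emph{and} a genuine gap.

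The gap is in the zero-free-region claim. You assert that because the Boltzmann weights stay in the right half-plane for complexified $\tau$ near the real axis, ``the product of positive-real-part matrices keeps positive-real-part trace.'' This is false as stated: the right half-plane of $\C$ is not closed under multiplication (e.g.\ $(1+2i)^2=-3+4i$), so for a product of $N$ matrices the cumulative phase in the entries and in the trace can drift without bound as $N\to\infty$, even for $\tau$ arbitrarily close to the real axis. A naive domain of holomorphy for $\log \mathbf{Z}^{(N)}_\omega$ would therefore shrink in the imaginary direction as $N$ grows, which defeats the normal-families argument. You correctly sense that a Birkhoff/Hilbert-metric contraction argument is what is needed --- and indeed Rugh-type \emph{complex} cone contraction would give a zero-free complex neighborhood with width \emph{uniform} in $N$ --- but that is a substantial piece of analysis (one must exhibit a complex cone that is uniformly contracted by the complexified transfer matrices and deduce a lower bound on $|\Tr|$), and your proposal only gestures at it rather than carrying it out. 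As written, the analyticity step does not go through.

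The paper avoids this issue entirely by a different decomposition. After establishing existence of the limit, it passes to the Fibonacci subsequence $N_k=F_k$, normalizes the transfer matrices to $\mathrm{SL}(2,\R)$ (factoring out the determinants $d_i$, whose contribution is explicitly analytic), and writes $\Tr(\mathcal{T}^{(F_k)})$ as a first coordinate of the orbit of $f^{k-3}$ applied to an analytic initial-condition curve $(x_3,x_2,x_1)$ in $\R^3$. Since this curve escapes under $f$ for every real $\tau>0$, and the escape set in $\C^3$ is open, the complexified curve escapes on a full complex neighborhood; Proposition~\ref{prop:escape-rate} then gives that the escape-rate functional $\mathcal{E}$ is a locally uniform limit of pluriharmonic functions, hence analytic. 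So both proofs ultimately rest on a locally uniform limit of holomorphic/pluriharmonic objects, but the paper's reduction to the trace-map escape rate hands you the required uniformity for free from holomorphic-dynamics theory, whereas your approach would have to build the uniform zero-free region from scratch with a complex Perron--Frobenius argument. If you want to salvage your route, that cone-contraction lemma is exactly what has to be supplied and proved.
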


\begin{rem}
Our proof of Theorem \ref{thm:thm-classical-analyticity} also provides a recursive algorithm for the numerical approximation of $\mathbf{F}$.
\end{rem}

\subsubsection{Complex regime}\label{subsubsec:complex-regime}

In this case we take $q:\mathcal{A}\rightarrow \C$ constant---denote this constant by $H$---and $0 < \tau < \infty$. For a choice of $\omega\in\Omega$, the partition function $\mathbf{Z}^{(N)}_\omega$ can then be considered as a function of three variables:
\begin{align}\label{eq:complex-regime-variables}
\alpha = \exp\left(\frac{p(a)}{k_B\tau}\right),\hspace{4mm} \beta = \exp\left(\frac{p(b)}{k_B\tau}\right),\hspace{4mm}\eta = \exp\left(\frac{H}{k_B\tau}\right)
\end{align}
(that is, $\alpha, \beta$ and $\eta$ are functions of $\tau$: $\alpha,\beta: \tau\mapsto \R$, $\eta: \tau\mapsto\C$). From now on we shall concentrate on fixed $\tau\in (0,\infty)$ and variable $H\in\C$ (that is, $\alpha,\beta$ are fixed, while the \textit{fugacity variable} $\eta$ is varied over $\C$). 

While $\mathbf{Z}^{(N)}_\omega$ does not admit zeros in the real regime, it does (even in the finite volume $N$ and finite $\tau\in(0,\infty)$) in the complex regime, as a function of $\eta$ (indeed, $\mathbf{Z}^{(N)}_\omega(\eta)$ is a polynomial in $\eta$). These so-called Lee-Yang zeros \cite{Lee1952} are guaranteed to lie on the unit circle $S^1\subset\C$ if both couplings are ferromagnetic (that is, $p(a),p(b) \geq 0$, which is the case considered here); similarly, in the purely antiferromagnetic case (that is, $p(a), p(b) < 0$), the zeros lie on the negative real axis; the mixed case is much more subtle and is not treated here (for further details and a more general exposition see, for example, \cite{Ruelle1989}). We are interested here in distribution of zeros of $\mathbf{Z}^{(N)}_\omega$ in the thermodynamic limit, $N\rightarrow\infty$.

Let $\mathcal{Z}_{\omega,N}$ denote the set of zeros of $\mathbf{Z}^{(N)}_\omega$ in the variable $\eta$. Some numerical experiments and soft analysis in \cite{Baake1995} suggest that in the limit $N\rightarrow\infty$, $\set{\mathcal{Z}_{\omega_s,N}}$ accumulates in a Cantor set on $S^1$, where $\omega_s\in\Omega$ is as defined in Section \ref{subsec:fibonacci-substitution}. Also, a gap structure similar to that studied by J. Bellissard et. al. in \cite{Bellissard1992} was observed. We rigorously justify these observations, and add further detail to the qualitative description of the zeros in the thermodynamic limit in the following 

\begin{thm}\label{thm:complex-ising-zeros}
There exists a compact, nonempty set $\mathcal{Z}\subset S^1\subset\C$ such that $\mathcal{Z}_{\omega_s,N}\xrightarrow[N\rightarrow\infty]{}\mathcal{Z}$ in the Hausdorff metric. Moreover, the set $\mathcal{Z}$ has the following properties.
\begin{enumerate}[i.]

\item $\mathcal{Z}$ is a um-set together with (possibly) finitely many isolated points;

\item $\set{\pm1}\cap\mathcal{Z} = \emptyset$ (as expected, of course);

\item With $p(a)$ fixed, for all $p(b)$ sufficiently close to $p(a)$, $\mathcal{Z}$ is a moderate um-set;

\end{enumerate}
\end{thm}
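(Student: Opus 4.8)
The plan is to reduce the statement to the trace-map analysis already carried out for the quasi-periodic Jacobi matrices and for the quantum Ising quasicrystal (Theorem~\ref{thm:spectral-convergence-thm}); the point is that the Lee--Yang zeros are governed, along the Fibonacci renormalization, by the \emph{same} trace map. With periodic boundary conditions and the field term split symmetrically, $\mathbf{Z}^{(N)}_{\omega_s}(\eta)=\Tr(V_1V_2\cdots V_N)$, where each $V_n\in\set{V_a,V_b}$ is the $2\times2$ transfer matrix of the $n$-th bond --- $V_a$ has diagonal entries $\alpha\eta,\,\alpha\eta^{-1}$ and both off-diagonal entries $\alpha^{-1}$, and $V_b$ is the analogue with $\beta$ --- and the first $F_k$ factors are modulated by the Fibonacci word $S^{k-1}(a)=u_1\cdots u_{F_k}$. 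Since $\det V_a=\alpha^{2}-\alpha^{-2}$ and $\det V_b=\beta^{2}-\beta^{-2}$ are nonzero constants \emph{independent of $\eta$} (we assume the strictly ferromagnetic regime $p(a),p(b)>0$), rescaling the transfer matrices to $\mathrm{SL}(2,\C)$ shows that $\mathcal{Z}_{\omega_s,F_k}$ is the $\eta$-zero set of the half-trace $x_k:=\frac{1}{2}\Tr P_k$, where $P_k\in\mathrm{SL}(2,\C)$ is the normalized product along $S^{k-1}(a)$. Feeding the concatenation identity $u_1\cdots u_{F_{k+2}}=u_1\cdots u_{F_{k+1}}\,u_1\cdots u_{F_k}$ from \eqref{eq_u_prop} into the $\mathrm{SL}(2,\C)$ identity $M+M^{-1}=(\Tr M)\,I$ gives $P_{k+2}=P_{k+1}P_k$, hence the recursion $x_{k+1}=2x_kx_{k-1}-x_{k-2}$ together with its conserved Fricke--Vogt invariant --- precisely the dynamical system analyzed in \cite{Yessen2011a} and in the proof of Theorem~\ref{thm:spectral-convergence-thm}.

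I would then identify the limit set. Let $\xi=\xi(\eta)$ (essentially $\xi=\eta+\eta^{-1}$) be the coordinate in which the initial data $(x_1,x_2,x_3)(\eta)$ trace a real-analytic curve on the Fricke--Vogt invariant surface; $\mathcal{Z}$ is the pullback under $\eta\mapsto\xi$ of the locus of ``energies'' whose trace-map orbit stays bounded, which coincides with the spectrum of the Jacobi matrix $\mathcal{J}_{\omega_s}$ studied in \cite{Yessen2011a}. Since escape to infinity under the trace map is an open, stable condition and the bounded-orbit locus is carried by the stable and unstable laminations of the trace map's non-wandering (horseshoe) set, the standard band-structure argument applies: each band $\set{\abs{x_k}\le1}$ carries a zero of $x_k$, consecutive bands nest after finitely many steps, and band lengths tend to $0$, so $\set{x_k=0}\to\mathcal{Z}$ in the Hausdorff metric. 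For $N$ not a Fibonacci number one runs the same argument with the period-$N$ Floquet approximants in place of the period-$F_k$ ones (their spectra converge to the same limit, by the approximation results of \cite{Yessen2011a}), obtaining $\mathcal{Z}_{\omega_s,N}\to\mathcal{Z}$ for all $N$; the Lee--Yang theorem \cite{Lee1952} keeps every $\mathcal{Z}_{\omega_s,N}$ on $S^1$, so this is genuine Hausdorff convergence of subsets of the circle.

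The three properties are inherited from $B_\infty$ through this identification. The bounded-orbit locus and its two coordinatizations --- by the fugacity $\eta$ here, by the Jacobi spectral parameter in \cite{Yessen2011a} --- are cut out by curves that are, away from finitely many parameter values, transverse to the stable lamination, so the local Hausdorff dimension at a point of the locus is intrinsic to the lamination and agrees in both coordinates; this transfers zero Lebesgue measure, the continuity and local non-constancy of the local-dimension function, and $0<\hdim(\mathcal{Z})<1$, giving the um-set structure in (i), while the finitely many parameter values at which transversality fails (or at which the folding $\eta\mapsto\eta+\eta^{-1}$ is singular) contribute at most finitely many isolated points of $\mathcal{Z}$ --- the exceptional clause of (i). Statement (ii) is a direct computation: $\mathbf{Z}^{(N)}_{\omega_s}(1)>0$, and at $\eta=-1$ every spin contributes a factor $(-1)^{\sigma_i}=-1$, so $\mathbf{Z}^{(N)}_{\omega_s}(-1)=(-1)^{N}\sum_\sigma\exp(\frac{1}{k_B\tau}\sum_i J_{\omega_s,i}\sigma_i\sigma_{i+1})\ne0$ for every finite $N$; to exclude $\pm1$ from the limit too, note that the trace-map orbit based at $\eta=\pm1$ escapes to infinity, which by stability produces an $N$-uniform open arc around $\pm1$ disjoint from every $\mathcal{Z}_{\omega_s,N}$. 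For (iii), when $p(b)$ is near $p(a)$ the relevant part of the horseshoe is uniformly hyperbolic and the associated Cantor set is (almost) dynamically defined, so $\bdim$ exists and equals $\hdim$, exactly as in the treatment of statement~ii of Theorem~\ref{thm:spectral-convergence-thm}, and this passes through the identification unchanged.

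The main obstacle is making the identification of $\mathcal{Z}$ with the Jacobi spectrum of \cite{Yessen2011a} fully rigorous: one must prove that the set of parameter values where the $\eta$-curve fails to be transverse to the stable lamination (or where the folding $\eta\mapsto\eta+\eta^{-1}$ interferes) is finite, that no such exceptional point accumulates on the Cantor part of $\mathcal{Z}$ --- so that (i)--(iii), which are intrinsically statements about the Cantor part, are genuinely inherited and the surplus is at worst finitely many isolated points --- and that the uniform estimates underlying Hausdorff convergence survive this not-everywhere-injective recoordinatization. A secondary technical point is the treatment of composite $N$, where the clean Fibonacci renormalization is unavailable and one must route the argument through period-$N$ approximants; but this follows the template already set for the quantum model.
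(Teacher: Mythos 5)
Your setup is sound --- writing $\mathbf{Z}^{(N)}$ as a trace of transfer matrices, normalizing to unimodular form, deriving the trace recursion and applying the Fricke--Vogt invariant, pulling back to the half-trace variable $\widetilde{\eta}\in[-2,2]$, and observing that the endpoints $\widetilde{\eta}=\pm2$ escape to infinity (which is exactly how the paper proves statement ii). But three gaps separate your proposal from a complete proof.

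First, the identification of $\mathcal{Z}$ with the spectrum of $\mathcal{J}_{\omega_s}$ from \cite{Yessen2011a} is false. The Jacobi spectrum and the Lee--Yang locus are pullbacks of the \emph{same} center-stable lamination along \emph{different} analytic curves in $\R^3$ (compare the curve $\gamma$ in the proof of Theorem~\ref{thm:spectral-convergence-thm} with the curve \eqref{eq:the-line}), living in different parameter spaces; they are not the same set. The local Hausdorff dimension of the intersection depends on \emph{where} on the lamination the curve sits, so it is not intrinsic to the lamination in the way you claim. What actually transfers is the collection of qualitative properties guaranteed by Theorem~\ref{thm:thm-yessen} for \emph{any} regular analytic curve; the paper verifies the hypotheses of that theorem directly for the curve \eqref{eq:the-line} (in particular, it exhibits $\gamma(0)$ as a hyperbolic period-six point and carries out an eigenvector computation to show $\gamma$ crosses the center-stable manifold through $\gamma(0)$ transversally, which is needed to know that $\Gamma$ is not just the two escaping endpoints). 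Your proposal assumes these hypotheses through the incorrect identification and never checks them.

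Second, the Hausdorff convergence $\mathcal{Z}_k\to\mathcal{Z}$ is the genuinely hard point, and ``the standard band-structure argument applies'' does not discharge it. The Raymond-style band estimates (bands nest, band lengths $\to 0$) are proved for a fixed invariant surface $S_V^\R$, where the horseshoe $\Omega_V$ is uniformly hyperbolic; here the curve \eqref{eq:the-line} is \emph{transverse} to the family $\set{S_V^\R}$ and the relevant invariant set $\bigcup_V\Omega_V$ is only partially hyperbolic, so uniform band-length decay along $\gamma$ is not automatic (and ``consecutive bands nest'' is not even how the Fibonacci covering works --- $\sigma_{k+1}\subset\sigma_k\cup\sigma_{k-1}$, without simple nesting). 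The paper replaces this with a different mechanism: it passes to $g=f^6$, uses that the stable manifolds of the normally hyperbolic six-cycle curves $\rho_i^{l,r}$ (which lie on both sides of the plane $\set{z=0}$) form a dense sublamination of the center-stable lamination, and then runs an intermediate-value argument: any small arc of $\gamma$ meeting $\mathcal{Z}$ nontrivially contains a pair of points whose iterates eventually straddle $\set{z=0}$, hence a zero of $\pi\circ g^k\circ\gamma$. If you want to keep a band-structure route, you would have to establish uniform band-length decay along the transversal curve from the partially hyperbolic estimates, which is not in your proposal.

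Third, a smaller point: the paper folds by $\widetilde{\eta}=\eta+\overline{\eta}$ rather than $\eta+\eta^{-1}$; on $S^1$ these agree, and you correctly restrict to $S^1$ via Lee--Yang, but the folding you wrote is singular only at $\eta=0$, so the ``finitely many parameter values at which ... the folding ... is singular'' you invoke for the exceptional isolated points is a red herring --- the possible isolated points in statement i come solely from tangential intersections of $\gamma$ with the center-stable lamination, exactly the set $\mathcal{T}$ of Theorem~\ref{thm:thm-yessen}.
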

\begin{rem}
We believe that in statement i, there do not exist the mentioned isolated points, though at present we are unable to prove it; however, iii states that no isolated points are present provided that $p(a)/p(b)\approx 1$ (assuming $p(b)\neq 0$).
\end{rem}

Since in Theorem \ref{thm:complex-ising-zeros} statements are made only for the special choice $\omega_s\in\Omega$, a natural question still remains: how is the conclusion of Theorem \ref{thm:complex-ising-zeros} affected as one considers general $\omega\in\Omega$? At present we do not have a general answer, but we conjecture that the conclusion of the theorem is not affected by any specific choice of $\omega\in\Omega$, and, moreover, $\mathcal{Z}$ in Theorem \ref{thm:complex-ising-zeros} is independent of the choice of $\omega\in\Omega$.

\section{Proofs of main results}\label{sec:proofs}

For proofs of main results we rely heavily on dynamics of the so-called \textit{Fibonacci trace map} (to be defined momentarily), and the geometry of some dynamical invariants. We recall here only those properties of the aforementioned objects that are necessary to prove our results, and only briefly. For a survey on trace maps (including the Fibonacci trace map) associated to substitution sequences, the reader may consult, for example, \cite{Roberts1994, Roberts1994b, Baake1997}. For the necessary notions from hyperbolic and partially hyperbolic dynamics (including notation and terminology), see a brief outline in Appendix \ref{b}, and references therein for more details. For model-independent results (some of which appear below), see \cite[Section 2]{Damanik0000my1}.

\subsection{Dynamics of the Fibonacci trace map: a brief outline}\label{subsec:fibonacci-dynamics}

Define the so-called \textit{Fibonacci trace map} on $\C^3$ by
\begin{align}\label{eq:trace-map}
f(x,y,z) = (2xy - z, x, y).
\end{align}
Define also the so-called \textit{Fricke-Vogt character} (often also called \textit{Fricke-Vogt invariant}) \cite{Fricke1896, Fricke1897, Vogt1889} $I: \C^3\rightarrow\C$ by
\begin{align}\label{eq:fv-invariant}
I(x,y,z) = x^2 + y^2 + z^2 - 2xyz - 1.
\end{align}
It turns out that $I$ is invariant under the action of $f$, that is, 
\begin{align*}
I(x,y,z) = I\circ f(x,y,z)
\end{align*}
(in connection with Schr\"odinger operators, see the pioneering work in \cite{Kadanoff1984, Kadanoff0000, Kohmoto1983, Kohmoto1987b, Ostlund1983}). Since $f$ is invertible, it follows that the sets
\begin{align*}
S_{V\in\C} = \set{(x,y,z)\in\C^3: I(x,y,z) = V}
\end{align*}
are also invariant under $f$, and in fact $f(S_V) = S_V$. The sets $S_V$ are complex-analytic surfaces.

In what follows, we sometimes consider the real part of $S_V$ separately (that is, the subset $S_V\cap \R^3$ of $S_V$). Moreover, as far as the real part, we only need to consider $S_V\cap \R^3$ for $V\in \R$ and $V \geq 0$. In this case, if $V > 0$, then $S_V\cap \R^3$ is a smooth, connected, unbounded two-dimensional submanifold of $\R^3$ without boundary. It is homeomorphic to the two-dimensional sphere with four points removed. If $V = 0$, then $S_V\cap \R^3$ has four conic singularities. Minus these singularities, it is a smooth manifold with five connected components: one bounded, diffeomorphic to the two-dimensional sphere with four points removed, and four unbounded, diffeomorphic to the two-dimensional disc with one point removed. 

To distinguish between the "full complex surface" $S_V$ and $S_V\cap \R^3$, we denote the former by $S_V^\C$ and the latter by $S_V^\R$. The surface $S_0^\R$ is called the \textit{Cayley cubic}.

It isn't difficult to see that $S_V^\R$ is also invariant under $f$. In what follows, we are concerned mostly with those points in $S_V^{\C}$, that have bounded forward semi-orbit under the action of $f$; that is, $p\in S_V^{\C}$ such that
\begin{align*}
\mathcal{O}^+_f(p):= \set{f^n(p): n\in\N}
\end{align*}
is bounded (we define the backward semi-orbit of $p$, $\mathcal{O}^-_f(p)$, similarly by considering iterates of $f^{-1}$ in place of iterates of $f$). Following convention from the theory of holomorphic dynamical systems, let us define the following sets:
\begin{align*}
K^+_V := \set{p\in S_V^\C: \mathcal{O}^+_f\text{ is bounded}}\hspace{2mm}\text{ and }\hspace{2mm}K^-_V := \set{p\in S_V^\C: \mathcal{O}^-_f\text{ is bounded}}.
\end{align*}
Define also
\begin{align*}
\Omega_V := K^+_V\cap K^-_V.
\end{align*}

It turns out that $\Omega_V\subset S_V^\R$ (see \cite{Cantat2009}). Also, if a point has unbounded forward semi-orbit, then the orbit escapes to infinity, i.e. the orbit does not contain a bounded subsequence (see, for example, \cite{Roberts1996}).

Regarding the topological structure of $\Omega_V$ and dynamics of $f$ restricted to $\Omega_V$ (clearly $\Omega_V$ is $f$-invariant) we have the following theorem. For definition of hyperbolicity, see Section \ref{b1}. 

\begin{thm}[M. Casdagli \cite{Casdagli1986}, D. Damanik and A. Gorodetski \cite{Damanik2009}, S. Cantat \cite{Cantat2009}]\footnote{M. Casdagli proved this theorem for $V > 0$ sufficiently large, D. Damanik \& A. Gorodetski extended Casdagli's result to all $V > 0$ sufficiently small, and S. Cantat proved the result for all $V > 0$. M. Casdagli and D. Damanik \& A. Gorodetski used Alekseev's cone technique to prove hyperbolicity, while S. Cantat used other techniques from holomorphic dynamics.}\label{thm:cas-gor-cant}
For all $V > 0$, $\Omega_V\subset S^\R_V$ is a compact Cantor set of zero Lebesgue measure. It is precisely the nonwandering set for $f$ restricted to $S^\R_V$ and $f$ on $\Omega_V$ is hyperbolic (the nonwandering set is the set of those points $x$ such that for every open neighborhood $U$ of $x$ and $N\in\N$ there exists $n > N$ such that $f^n(U)\cap U\neq \emptyset$). Moreover, $\Omega_V$ is locally maximal and $f|_{\Omega_V}$ is topologically transitive.
\end{thm}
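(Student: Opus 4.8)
\textbf{Proof proposal for Theorem \ref{thm:cas-gor-cant}.}

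The plan is to reduce everything to the theory of locally maximal hyperbolic sets for diffeomorphisms of a surface, once hyperbolicity has been established; the real analytic work is in establishing hyperbolicity of $f|_{\Omega_V}$, and this is where I would lean on the cited works rather than reprove it. First I would fix $V > 0$ and restrict attention to the real surface $S_V^\R$, which by the structure recalled above is a smooth, connected, noncompact surface without boundary (a four-punctured sphere); since $I$ is a polynomial invariant and $f$ is a polynomial automorphism of $\C^3$ preserving each $S_V^\C$, the restriction $f|_{S_V^\R}$ is a well-defined diffeomorphism. The first substantive step is to locate $\Omega_V$: using the escape-to-infinity fact (a point with unbounded forward semi-orbit has no bounded subsequence in its orbit), one shows that $K_V^+$, $K_V^-$ and hence $\Omega_V = K_V^+\cap K_V^-$ are compact, that $\Omega_V\subset S_V^\R$ by the Cantat reference, and that $\Omega_V$ is $f$-invariant. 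One also checks that the nonwandering set $\mathit{NW}(f|_{S_V^\R})$ is contained in $\Omega_V$: a nonwandering point cannot have an orbit escaping to infinity, so both its semi-orbits are bounded.

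The second, and hardest, step is hyperbolicity of $f$ on $\Omega_V$. Here I would invoke the cited results directly: for large $V$ this is Casdagli's construction of an invariant cone field (Alekseev's cone technique) together with the observation that the periodic orbits and the bounded-orbit set organize into a horseshoe-like structure on $S_V^\R$; for small $V$ it is the Damanik--Gorodetski extension of the cone argument, controlling the behavior near the singular surface $S_0^\R$ (the Cayley cubic) where a pseudo-Anosov-type map at $V=0$ has four fixed points whose stable/unstable manifolds one perturbs; and for the full range $V > 0$ it is Cantat's holomorphic-dynamics argument. The key point to extract is that $f|_{\Omega_V}$ admits a continuous invariant splitting $T_{\Omega_V}S_V^\R = E^s\oplus E^u$ with uniform expansion/contraction. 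Granting this, the rest is soft. Since $f$ is a surface diffeomorphism and $\Omega_V$ is a compact hyperbolic set that is also the nonwandering set, the spectral decomposition and the fact that a hyperbolic nonwandering set with dense periodic points carries the structure of a basic set apply; one must verify that $f|_{\Omega_V}$ has a dense orbit (topological transitivity), which follows from minimality/transitivity of the underlying Fibonacci substitution subshift via the standard semiconjugacy between trace-map dynamics on $\Omega_V$ and the subshift (or, alternatively, from the fact that $\Omega_V$ is a single basic piece in the spectral decomposition, there being no proper closed invariant subset carrying the nonwandering set — this again is in Casdagli/Cantat).

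The remaining assertions are then routine consequences. Local maximality: since $\Omega_V$ is hyperbolic and equals the nonwandering set of $f$ on the open manifold $S_V^\R$, it has a neighborhood $U$ with $\bigcap_{n\in\Z} f^n(U) = \Omega_V$ — one takes $U$ to be a small neighborhood of $\Omega_V$ with compact closure, uses that every orbit staying in $U$ forever stays bounded (hence has both semi-orbits bounded) and is therefore in $\Omega_V$. Cantor structure: $\Omega_V$ is compact; it has no isolated points because periodic points are dense and a hyperbolic basic set of a surface diffeomorphism that is not a single periodic orbit is perfect; and it is totally disconnected because a nontrivial connected hyperbolic basic set on a surface would have to be an attractor or repeller (by the standard classification of basic sets for surface diffeomorphisms), which is incompatible with $\Omega_V$ being the full nonwandering set of a volume-related map on a noncompact surface with orbits escaping to infinity — so $\Omega_V$ is a Cantor set. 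Zero Lebesgue measure: a hyperbolic basic set of a $C^{1+\epsilon}$ (here real-analytic) surface diffeomorphism which is not all of the surface has Hausdorff dimension strictly less than $2$, hence zero area; concretely, since $\Omega_V$ is locally a product of two Cantor sets (stable and unstable) sitting inside a $1$-dimensional leaf each, its box dimension is the sum of two numbers each strictly below $1$. I expect the main obstacle to be purely expository: marshalling the three different proofs of hyperbolicity into one uniform statement valid for all $V > 0$ while keeping the argument short — but since the paper only needs the statement and cites Casdagli, Damanik--Gorodetski and Cantat, I would simply quote their hyperbolicity conclusions and spend the written proof on deriving local maximality, transitivity, and the Cantor/measure statements from it.
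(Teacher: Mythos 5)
This theorem is not proved in the paper at all: it is stated as a quoted result and attributed entirely to Casdagli, Damanik--Gorodetski, and Cantat, with a footnote allocating the parameter ranges among them. There is therefore no in-paper proof to compare against; the paper's ``approach'' is simply to cite the literature, which is also what you propose to do for the hyperbolicity step. Your plan of then deriving the Cantor structure, zero measure, local maximality, and transitivity as soft consequences of a compact hyperbolic nonwandering set on a surface is a sensible expository strategy, and most of those derivations are sound (the Hausdorff-dimension argument for zero area, the classification argument for total disconnectedness, and the bounded-orbit characterization for local maximality are all along standard lines from the theory of basic sets).

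There is one genuine slip, however. You suggest that topological transitivity of $f|_{\Omega_V}$ ``follows from minimality/transitivity of the underlying Fibonacci substitution subshift via the standard semiconjugacy between trace-map dynamics on $\Omega_V$ and the subshift.'' No such semiconjugacy exists in the sense you want: the Fibonacci substitution subshift is a minimal zero-entropy Sturmian system, whereas $f|_{\Omega_V}$ is a horseshoe with positive topological entropy. What Casdagli actually establishes is a conjugacy of $f|_{\Omega_V}$ with a mixing subshift of finite type, and it is the transitivity of that SFT (equivalently, that $\Omega_V$ is a single basic piece of the spectral decomposition, your second suggestion) that yields topological transitivity. The substitution subshift enters the theory elsewhere --- in parametrizing which curve of initial conditions one intersects against the stable lamination --- not as a factor of the trace-map dynamics on the nonwandering set. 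If you drop the first route and keep the second, your outline is consistent with how these facts are established in the cited references.
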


From \cite{Yessen2011} we have the following (see Section \ref{b2} for definition of partial hyperbolicity).

\begin{prop}[{\cite[Combination of Corollary 4.8 and Proposition 4.9]{Yessen2011}}]\label{prop:phyperb}
For any $J = [V_1, V_2]$, with $V_1 > 0$, $\bigcup_{V\in J}\Omega_V$ is partially hyperbolic for $f$. Thus, $\bigcup_{V\in(V_1, V_2)}K_V^+$ is a family of pairwise disjoint smooth open two-dimensional injectively immersed submanifolds of $\R^3$ without boundary, called the \textit{center-stable manifolds}. The center-stable manifolds intersect the surfaces $S_V^\R$, $V > 0$, transversally.
\end{prop}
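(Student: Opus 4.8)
The plan is to realize $\Lambda := \bigcup_{V\in J}\Omega_V$ as a partially hyperbolic invariant set for $f$ acting on an open subset of $\R^3$, with the center direction pointing transversally to the $f$-invariant foliation $\{S_V^\R\}_{V>0}$, and then to read off the statement about the center-stable manifolds from the general theory of partially hyperbolic sets. First I would check that $\Lambda$ is compact and $f$-invariant. Invariance is immediate from $f(\Omega_V)=\Omega_V$. For compactness one establishes a uniform a priori bound for $\bigcup_{V\in J}\Omega_V$: a point with bounded full orbit in $S_V^\R$ must lie in a ball whose radius can be chosen uniformly for $V$ in the compact interval $J$, using the fact recalled above that any non-bounded orbit escapes to infinity, with an escape estimate that is uniform on $J$. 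Closedness of $\Lambda$ then follows because $I$ is continuous, and along a convergent sequence $p_n\in\Omega_{V_n}$ with $V_n\to V\in J$ the (uniformly bounded) forward and backward orbits of $p_n$ converge, by continuity of $f^{\pm1}$, to bounded orbits of the limit point, which therefore lies in $K_V^+\cap K_V^-=\Omega_V$.

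The key structural input comes from the identity $I\circ f = I$. Differentiating gives $dI_{f(p)}\circ Df_p = dI_p$ as linear functionals on $T_p\R^3$; in particular $Df_p$ carries $T_pS_{I(p)}^\R=\ker dI_p$ onto $T_{f(p)}S_{I(p)}^\R$, and if for each $p\in\Lambda$ we pick $n_p$ with $dI_p(n_p)=1$ then $Df_p n_p = n_{f(p)} + (\text{a vector tangent to }S)$. Hence on the one-dimensional quotient bundle $T\R^3|_\Lambda\big/TS$ the induced cocycle is the identity: the transverse direction is dynamically neutral. On the two-dimensional $Df$-invariant sub-bundle $TS|_\Lambda$, Theorem \ref{thm:cas-gor-cant} provides uniform hyperbolicity of $Df$ over each $\Omega_V$; the point is that the rates, and the associated stable/unstable cone fields inside $TS$, can be chosen uniformly for $V\in[V_1,V_2]$ — this is exactly where $V_1>0$ is essential, since the hyperbolicity of $\Omega_V$ degenerates as $V\downarrow0$. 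This yields a continuous $Df$-invariant splitting $TS|_\Lambda=E^s\oplus E^u$ with uniform contraction on $E^s$ and uniform expansion on $E^u$. It remains to produce a $Df$-invariant line bundle $E^c\subset T\R^3|_\Lambda$ complementary to $E^s\oplus E^u$: since $E^s\oplus E^u$ is uniformly hyperbolic while the complementary quotient dynamics is neutral, domination holds automatically, so a standard graph-transform argument on the bundle of lines transverse to $E^s\oplus E^u$ (which becomes a fibrewise contraction after finitely many iterates) produces a unique continuous invariant section $E^c$. The relation $Df_p n_p = n_{f(p)}+(\text{tangent})$ forces the multiplier of $Df$ along $E^c$ to equal $1$ at every point, so $Df|_{E^c}$ has uniformly bounded norm and co-norm, hence is dominated by $E^s$ from below and by $E^u$ from above. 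This exhibits $T\R^3|_\Lambda = E^s\oplus E^c\oplus E^u$ as a partially hyperbolic splitting, proving that $\bigcup_{V\in J}\Omega_V$ is partially hyperbolic for $f$.

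For the consequences, apply the center-stable manifold theorem for partially hyperbolic sets: through each $p\in\Lambda$ passes a smooth, injectively immersed two-dimensional submanifold $W^{cs}(p)$ of $\R^3$ without boundary, tangent to $E^{cs}_p:=E^c_p\oplus E^s_p$, and two such leaves are either equal or disjoint. Since $E^c_p\not\subset T_pS$ while $E^s_p\subset T_pS$, we have $E^{cs}_p+T_pS=T_p\R^3$, so $W^{cs}(p)$ meets every surface $S_V^\R$ transversally, in a curve which — by the characterization of the forward-bounded orbits in $S_V^\R$ as exactly the stable set of $\Omega_V$ (non-forward-bounded orbits escape to infinity) — is a stable manifold of a point of $\Omega_V$. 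Letting $V$ range over $(V_1,V_2)$ and $p$ over $\Lambda$, these pairwise disjoint two-dimensional manifolds sweep out precisely $\bigcup_{V\in(V_1,V_2)}K_V^+$, which is therefore the asserted family of center-stable manifolds.

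\textbf{Main obstacle.} The clean ingredient is the neutrality of the transverse direction, which falls directly out of $dI_{f(p)}\circ Df_p=dI_p$. The real work is \emph{uniformity}: the a priori bound giving compactness of $\Lambda$, and — more importantly — showing that the hyperbolic rates and cone fields of $\Omega_V$ inside $S_V^\R$ can be chosen uniformly for $V\in[V_1,V_2]$, so that $E^s\oplus E^u$ over $\Lambda$ is genuinely uniformly hyperbolic and the graph transform producing $E^c$ converges. I expect this to follow from a parameter-uniform version of the cone arguments of Casdagli and Damanik--Gorodetski underlying Theorem \ref{thm:cas-gor-cant}, together with compactness of $J$ and the openness of hyperbolicity; but it is the step that genuinely needs care rather than formal manipulation.
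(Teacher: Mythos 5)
The paper does not supply its own proof of Proposition~\ref{prop:phyperb}: it is stated as an import from an earlier paper (cited as Corollary 4.8 and Proposition 4.9 of \cite{Yessen2011}, which in turn rests on the hyperbolicity analysis of Casdagli, Damanik--Gorodetski, and Cantat). So there is nothing in the present source to literally compare your proof against. What can be said is that your reconstruction follows the expected route and identifies the genuinely delicate step correctly: the central lift is the observation that $I\circ f = I$ forces $Df_p$ to preserve the tangent distribution $TS=\ker dI$ and to act trivially on the quotient $T\R^3|_\Lambda/TS$, so the transverse direction is dynamically neutral; then the uniform hyperbolicity of $\Omega_V$ for $V$ in the compact interval $[V_1,V_2]$ with $V_1>0$ (Theorem~\ref{thm:cas-gor-cant}, with constants uniform over $J$ by compactness and openness of hyperbolicity) supplies the dominated $E^s\oplus E^u$ in $TS$, and partial hyperbolicity follows. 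You are right that uniformity over $J$ and compactness of $\Lambda$ are exactly where the work lives; the escape argument making $\Lambda$ closed and bounded is also the right instinct.

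One place where your write-up is a bit loose, and where you should be more careful if you actually carried it out: the clean ``graph transform on the bundle of lines transverse to $E^s\oplus E^u$, which becomes a fibrewise contraction after finitely many iterates.'' As stated, the \emph{forward} graph transform on such lines is not a contraction, because the $E^u$-component of the deviation grows under $Df$; one has to solve the fixed-point problem in mixed time (forward in the $E^s$-direction, backward in the $E^u$-direction), or equivalently construct $E^{cs}$ and $E^{cu}$ first by their dynamical characterizations and take $E^c=E^{cs}\cap E^{cu}$. The underlying domination you invoke (multiplier~$1$ on the quotient, uniformly bounded away from $1$ on $E^s$ and $E^u$) is exactly what makes this work, but the naive forward graph transform on ``all transverse lines'' does not contract. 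This is a point worth flagging rather than a fatal gap; with that adjustment the argument goes through along the lines of the Hirsch--Pugh--Shub machinery the paper recalls in Appendix~\ref{b}, and the consequence about center-stable manifolds and their transversal intersection with the leaves $S_V^\R$ then falls out as you describe, since $E^c_p\not\subset T_pS$ forces transversality.
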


% The following result, while easily obtainable, is convenient to state in the form of a proposition for later reference. For a proof, see, for example, \cite[Lemma 3.4]{Yessen2011a}.
% 
% \begin{prop}\label{prop:closure-of-stable-set}
% For $\infty\geq V_1\geq V_0\geq 0$, the set $\bigcup_{V\in[V_0, V_1]}K^+_V$ is a closed subset of $\R^3$, and consequently also of $\C^3$. 
% \end{prop}

We shall also need regularity of escape rate of those points whose forward semi-orbit is unbounded (the set of such points is open in $\C^3$ \cite{Cantat2009}).

\begin{prop}\label{prop:escape-rate}
Let $D$ be a domain in $\C^3$ such that no point of $D$ has bounded forward semi-orbit under $f$. Then the map $\mathcal{E}: D\rightarrow \R$ defined by
\begin{align}\label{eq:anal-limit}
\mathcal{E}: z \mapsto \lim_{n\rightarrow\infty}\frac{\log\norm{f^n(z)}}{\phi^n},
\end{align}
where $\phi = (\sqrt{5}+1)/2$ is the golden mean, is well-defined and analytic. Moreover, $\mathcal{E}(z) = \widetilde{\mathcal{E}}(z)$, where
\begin{align}\label{eq:anal-limit-2}
\widetilde{\mathcal{E}}: z\mapsto\lim_{n\rightarrow\infty}\frac{\log\abs{\pi\circ f^n(z)}}{\phi^n},
\end{align}
where $\pi$ denotes projection onto the first coordinate. We assume that $\pi\circ f^n(z)\neq 0$, which is guaranteed for sufficiently large $n$ by hypothesis.
\end{prop}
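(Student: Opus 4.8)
The plan is to analyze the escape dynamics of $f$ near infinity by passing to a suitable compactification (or, equivalently, by introducing projective-type coordinates) and extracting the dominant growth. The starting point is the well-known fact, recalled just before the statement, that any point with unbounded forward semi-orbit under $f$ actually escapes to infinity, and moreover does so at a definite ``Fibonacci'' rate: if one writes $f^n(z) = (x_n, y_n, z_n)$ with $x_n = \pi\circ f^{n}(z)$, then by the recursion $x_{n+1} = 2x_n x_{n-1} - x_{n-2}$ one expects $\log|x_n| \sim \phi^n \cdot \mathcal{E}(z)$. First I would make this precise: on the escaping region the coordinate of largest modulus eventually dominates, and from $f(x,y,z)=(2xy-z,x,y)$ one sees that once $\|f^{n_0}(z)\|$ is large enough the sequence $a_n := \log\|f^n(z)\|$ satisfies $a_{n+1} = a_n + a_{n-1} + \varepsilon_n$ with $\sum_n \phi^{-n}|\varepsilon_n| < \infty$ (the error $\varepsilon_n$ coming from the additive $-z$ term and from the difference between $\|f^n(z)\|$ and $|x_n|$, both of which are negligible compared to the product $2x_n x_{n-1}$). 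Summing the telescoping relation $\phi^{-(n+1)}a_{n+1} - \phi^{-n}a_n = -\phi^{-(n+1)}(\phi a_n - a_n - a_{n-1}) + \phi^{-(n+1)}\varepsilon_n$ and using $\phi^2 = \phi + 1$ gives a Cauchy sequence, hence convergence of $\phi^{-n}a_n$, and the same computation applied to $\phi^{-n}\log|x_n|$ shows the two limits agree; this establishes the pointwise existence of $\mathcal{E}$ and the identity $\mathcal{E} = \widetilde{\mathcal{E}}$.

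For analyticity I would upgrade the above estimates to hold locally uniformly in $z \in D$. The key point is that, because $D$ is a domain all of whose points escape, on any compact subset $K \subset D$ there is a uniform $n_0$ and uniform lower bounds on $\|f^{n}(z)\|$ for $n \geq n_0$ (using that the escape region is open and that, by a normal-families/compactness argument, the escape is locally uniform — no subsequence of $\{f^n\}$ stays bounded on $K$). Then the partial sums
\begin{align*}
\mathcal{E}_N(z) := \frac{\log|\pi\circ f^{N}(z)|}{\phi^{N}}
\end{align*}
are, for $N \geq n_0$, holomorphic on a neighborhood of $K$ (the logarithm is of a nonvanishing holomorphic function there), and the telescoping estimate shows $|\mathcal{E}_{N+1}(z) - \mathcal{E}_N(z)| \leq C\phi^{-N}$ uniformly on $K$. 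Hence $\mathcal{E}_N \to \mathcal{E}$ uniformly on compact subsets of $D$, and a uniform limit of holomorphic functions is holomorphic, so $\mathcal{E}$ (equivalently $\widetilde{\mathcal{E}}$) is analytic on $D$.

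The main obstacle I anticipate is making the ``locally uniform escape'' rigorous and controlling that $\pi\circ f^n(z) \neq 0$ for all large $n$ uniformly on compacta — one must rule out the coordinate of largest modulus oscillating between the three slots in a way that lets $|x_n|$ dip back toward $0$ infinitely often. This is handled by the structure of $f$: once two consecutive iterates are large, the recursion $x_{n+1} = 2x_n x_{n-1} - x_{n-2}$ forces geometric-in-$\phi$ growth of $\min(|x_n|,|x_{n-1}|)$ with no return, and the additive term $-z$ is absorbed because $|x_{n-2}| \ll |x_n x_{n-1}|$. A secondary technical nuisance is the choice of branch of $\log$, but since we only ever take $\log$ of $|\cdot|$ (a positive real number) in \eqref{eq:anal-limit-2), and $\pi\circ f^n$ is holomorphic and nonvanishing, $\log|\pi\circ f^n(z)| = \mathrm{Re}\,\log(\pi\circ f^n(z))$ is the real part of a local holomorphic function, which suffices for the normal-families argument; the equality $\mathcal{E} = \widetilde{\mathcal{E}}$ then transports analyticity to the formula \eqref{eq:anal-limit}.
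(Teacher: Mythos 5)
Your argument is correct and is essentially a from-scratch rederivation of what the paper outsources to the literature. The paper's own proof consists almost entirely of citations: existence of the escape-rate limit is delegated to \cite[Corollary~3.3]{Cantat2009}, the locally uniform Cauchy estimate to \cite[Lemma~3.2]{Bedford1991}, and coordinate-wise equality of escape rates to \cite{Cantat2009,Roberts1996}, after which the author simply notes that each $z\mapsto\phi^{-n}\log\norm{f^n(z)}$ is pluriharmonic and a locally uniform limit of pluriharmonic functions is pluriharmonic, hence real-analytic. You instead recover all of this by hand from the explicit scalar recursion $x_{n+1}=2x_nx_{n-1}-x_{n-2}$: once two consecutive coordinates are large, the product term dominates, giving $\log|x_{n+1}|=\log|x_n|+\log|x_{n-1}|+\varepsilon_n$ with $\varepsilon_n$ bounded (it tends to $\log 2$, which is fine), and the identity $\phi^2=\phi+1$ converts this into $b_{n+1}-b_n=-\phi^{-2}(b_n-b_{n-1})+\phi^{-(n+1)}\varepsilon_n$ for $b_n=\phi^{-n}\log|x_n|$, which is geometrically Cauchy. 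Your route is more self-contained and makes visible exactly why $\sum_n\phi^{-n}|\varepsilon_n|<\infty$ suffices, at the cost of having to establish locally uniform escape and the eventual domination of $|x_n|$ over the other two coordinates from first principles --- which you correctly identify as the technical crux, and which in the paper's proof is absorbed silently into the citations.

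One point of precision should be repaired. You write that the partial approximants $\mathcal{E}_N(z)=\phi^{-N}\log\abs{\pi\circ f^N(z)}$ are ``holomorphic on a neighborhood of $K$'' and then invoke ``a uniform limit of holomorphic functions is holomorphic.'' This is not what you mean: $\mathcal{E}_N$ is real-valued and hence cannot be holomorphic unless constant. The correct statement, which you in fact arrive at in your closing paragraph, is that $\mathcal{E}_N$ is pluriharmonic --- locally the real part of a branch of $\log(\pi\circ f^N)$, which is holomorphic and nonvanishing near $K$ for $N$ large --- and that a locally uniform limit of pluriharmonic functions is pluriharmonic, hence real-analytic. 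Stated this way from the outset, the argument closes cleanly; as currently written the middle sentence contradicts what you yourself explain at the end. There is also a very mild dependency on the norm: $\log\norm{f^n(z)}$ is only plurisubharmonic for a generic norm, but since on the escaping region $\norm{f^n(z)}_\infty=\abs{x_n}$ for all large $n$ (as $f^n(z)=(x_n,x_{n-1},x_{n-2})$ and $\abs{x_n}$ dominates), the approximants coincide with the pluriharmonic $\log\abs{x_n}$ beyond a finite stage, so this does not affect the limit; it is worth saying explicitly so the pluriharmonicity claim is honest.
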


\begin{proof}[Proof of Proposition \ref{prop:escape-rate}]

Existence of the limit in \eqref{eq:anal-limit} follows from \cite[Corollary 3.3]{Cantat2009}. That the sequence $\log\norm{f^n(z)}/\phi^n$ is uniformly Cauchy in $z$, and hence converges uniformly in $z$, follows from arguments similar to those given in the proof of \cite[Lemma 3.2]{Bedford1991}. Since for every $n\geq 1$, the map $z\mapsto \log\norm{f^n(z)}/\phi^n$ is pluriharmonic on $D$, it follows that its uniform limit, $\mathcal{E}$, is also pluriharmonic, and hence analytic, on $D$. 

The second assertion (that is, \eqref{eq:anal-limit-2}) follows from the fact that the escape rate of orbits under $f$ is the same as the escape rate in each coordinate (for details, see, for example, \cite{Cantat2009, Roberts1996}). 
\end{proof}

Next we state one of the main results of \cite{Yessen2011a} in a slightly modified, more general geometric form. This result will play a central role in the proof of Theorem \ref{thm:complex-ising-zeros} below. Although proofs in \cite{Yessen2011a} can be easily extended to cover this more general case, we invite the reader to see a discussion on model-independent results in \cite[Section 2]{Damanik0000my1} for the technical details.

\begin{thm}[{\cite[Theorem 2.3]{Yessen2011a}}]\label{thm:thm-yessen}
Suppose that $J\subset\R$ is a compact interval and $\gamma: J\rightarrow \bigcup_{V\geq 0}S_V^\R$ is a regular analytic curve that intersects the center-stable manifolds. Then either $\gamma$ lies entirely on a center-stable manifold, or it intersects the center-stable manifolds transversally in all but, possibly, finitely many points. Denote the set of tangential intersections by $\mathcal{T}$. Let $\Gamma$ denote the set $\gamma\cap\set{\text{center-stable manifolds}}$. If  $\Gamma$ contains points outside of $\mathcal{T}$ other than just the endpoints of $\gamma$, then the following properties hold.

\begin{enumerate}[i.]

\item $\Gamma$ is a um-set together with (possibly) finitely many isolated points if and only if $\gamma$ does not lie entirely in some $S_V^\R$, for some $V > 0$; otherwise $\Gamma$ is an almost dynamically defined Cantor set;

\item If $\mathcal{T} = \emptyset$, then $\Gamma$ is a moderate um-set if $\Gamma$ does not lie entirely on some $S_V^\R$, $V > 0$, and a dynamically defined Cantor set otherwise;

\item $\Gamma\setminus\mathcal{T}$ is a full um-set if and only if there exists a point in $\Gamma\setminus\mathcal{T}$ lying on $S_0$. 

\item Suppose $\gamma$ depends analytically on a parameter $\kappa$. Write $\Gamma_\kappa$ for the corresponding set of intersections with the center-stable manifolds. If for some $\kappa_0$, $\Gamma_{\kappa_0}$ is a um-set together with (possibly) finitely many isolated points, then for all $\kappa$ sufficiently close to $\kappa_0$, $\Gamma_\kappa$ is also a um-set together with (possibly) finitely many isolated points. Moreover, if $\Gamma_{\kappa_0}$ does not contain isolated points, then for all $\kappa$ sufficiently close to $\kappa_0$, $\Gamma_{\kappa_0}$ does not contain isolated points and $\kappa\mapsto\hdim(\Gamma_\kappa)$ is continuous. However, the properties of being a full um-set and almost dynamically defined Cantor set do not survive arbitrarily small $\kappa$-perturbations. 
\end{enumerate}

\end{thm}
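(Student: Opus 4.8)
The plan is to realize $\Gamma$ as the set of parameters $t\in J$ for which $\gamma(t)$ has bounded forward $f$-orbit, i.e. $\gamma(t)\in\bigcup_{V\geq 0}K_V^+$, and then to extract its structure from the partially hyperbolic geometry of the family $\set{\Omega_V}$. First I would make two preliminary reductions. Since $\gamma(J)$ is compact and the forward orbit of a point with unbounded forward semi-orbit escapes to infinity (Proposition \ref{prop:escape-rate} and the escape dichotomy of \cite{Cantat2009, Roberts1996}), there is a $V_0$ with $I(\gamma(t))\in[0,V_0]$ for every $t\in\Gamma$; only a compact range of surfaces is relevant, and $\Gamma$ is closed. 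Next, the center-stable manifolds are real-analytic (the strong stable and unstable manifolds of a polynomial automorphism are analytic, and the central direction is the analytic foliation $\set{S_V^\C}$), so the standard dichotomy for an analytic curve tangent to an analytic lamination applies: either $\gamma$ lies entirely in one leaf, or the tangency set $\mathcal{T}$ is discrete and hence finite by compactness of $J$. This gives the first assertion of the theorem; in the first alternative $\Gamma=\gamma$ is an interval, which is excluded by the running hypothesis that $\Gamma$ has transverse points in its interior.

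Assuming the second alternative, the core step is a local normal-form argument. Fix $0<V_1<V_2$; by Proposition \ref{prop:phyperb} the set $\Lambda:=\bigcup_{V\in[V_1,V_2]}\Omega_V$ is partially hyperbolic, with the center-stable lamination transverse to the surfaces $S_V^\R$. Near a transverse intersection $\gamma(t_0)\in W^{cs}(p)$ I would use the local product structure of $\Lambda$ to present a neighbourhood in $\R^3$ as (center-stable leaf) $\times$ (strong-unstable segment), the latter fibering over the $V$-coordinate with fiber a dynamically defined Cantor set $C_V$ --- the unstable slice of $\Omega_V$ --- depending continuously, in fact $C^{1+}$-smoothly, on $V$. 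Holonomy along center-stable leaves is a $C^{1+}$ (in particular bi-Lipschitz) map, so it transports $\Gamma$ faithfully, preserving Hausdorff and box-counting dimensions, onto the intersection of a transverse analytic curve with $\bigsqcup_V C_V$. If instead $\gamma$ is confined to a single $S_V^\R$, the same argument inside that surface identifies $\Gamma$ with a copy of the stable slice of the single hyperbolic set $\Omega_V$ cut by an analytic curve --- a dynamically defined Cantor set away from $\mathcal{T}$, hence ``almost dynamically defined'' --- which is exactly the distinction drawn in i. This reduction is imported, in its original and in model-independent form, from \cite[Theorem 2.3]{Yessen2011a} and \cite[Section 2]{Damanik0000my1}.

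With this model the dimension statements reduce to facts about the hyperbolic sets $\Omega_V$. Zero Lebesgue measure of $\Gamma$ follows from $|\Omega_V|=0$ (Theorem \ref{thm:cas-gor-cant}) together with a Fubini argument over $V$ (or directly from $|\Omega_V|=0$ when $\gamma\subset S_V^\R$). The local Hausdorff dimension of $\Gamma$ at $t$ equals $\hdim(\Omega_{V(t)})$ with $V(t)=I(\gamma(t))$: by hyperbolicity this is the same at every point of a given slice and is transported unchanged by the holonomy. Continuity of $d\colon t\mapsto\hdim(\Omega_{V(t)})$ then follows from continuity of $t\mapsto V(t)$ and of $V\mapsto\hdim(\Omega_V)$ (Bowen's formula for $C^1$ families of hyperbolic sets, as in \cite{Yessen2011a}), while non-constancy of $d$ on every subinterval combines non-constancy of $V\mapsto\hdim(\Omega_V)$ (\cite{Damanik2009, Yessen2011a}) with non-constancy of $V(t)$, which holds precisely when $\gamma$ is not confined to a single $S_V^\R$; this yields i. When $\mathcal{T}=\emptyset$ the slices $C_V$ are dynamically defined and vary continuously over a compact $V$-range, so the box-counting dimension exists and equals the Hausdorff dimension (the ``moderate'' refinement of ii), whereas tangencies introduce irregular scaling obstructing the equality. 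For iii, a point of $\Gamma\setminus\mathcal{T}$ on $S_0$ carries local Hausdorff dimension one (the Cayley-cubic end being full-dimensional), forcing $\hdim(\Gamma)=1$; conversely if $\Gamma$ avoids $S_0$ then $V(\Gamma)$ is a compact subset of $(0,\infty)$ on which $\hdim(\Omega_V)$ is bounded away from $1$, so $\hdim(\Gamma)<1$. For iv, transversality to the lamination is open, so the um-set-with-isolated-points structure persists; writing $V_{\min}(\kappa)=\min_{\Gamma_\kappa}V$, one has $\hdim(\Gamma_\kappa)=\sup_t\hdim(\Omega_{V(t)})=\hdim(\Omega_{V_{\min}(\kappa)})$, which is continuous in $\kappa$ because $V_{\min}(\kappa)$ is and $V\mapsto\hdim(\Omega_V)$ is; the non-persistence of ``full um-set'' and ``almost dynamically defined'' reflects that ``$\Gamma$ meets $S_0$'' and ``$\gamma$ lies in one $S_V^\R$'' are non-generic, codimension $\geq 1$ conditions destroyed by arbitrarily small analytic perturbations.

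I expect the main obstacle to be the local normal-form step together with the analysis near $V=0$. The surface $S_0^\R$ is singular (four conic points), Proposition \ref{prop:phyperb} supplies partial hyperbolicity only for $V$ bounded away from $0$, and one must separately argue that $\Gamma$ either avoids the conic singularities or absorbs them into the finite set $\mathcal{T}$, and that the relevant local dimension extends continuously up to $\hdim(\Omega_0)=1$. One must also verify that the center-stable holonomy and the identification of transverse slices with the $C_V$ are regular enough --- at least bi-Lipschitz, ideally $C^{1+}$ --- to carry Hausdorff \emph{and} box-counting dimensions without loss, and that $V\mapsto\hdim(\Omega_V)$ is continuous and non-constant on the relevant range. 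These regularity and continuity statements are precisely the technical heart; they are established in \cite{Yessen2011a}, with the model-independent version in \cite[Section 2]{Damanik0000my1}, to which I would refer for the details of the extension to arbitrary regular analytic curves.
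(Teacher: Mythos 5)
This theorem is not proved in the paper. The author explicitly imports it from \cite[Theorem 2.3]{Yessen2011a}, noting only that the slightly more general geometric form stated here follows from the arguments there, with the model-independent version available in \cite[Section 2]{Damanik0000my1}. So there is no ``paper's own proof'' against which to score your attempt: the paper's ``proof'' is precisely a citation, and you yourself punt the technical heart (holonomy regularity, continuity and non-constancy of $V\mapsto\hdim(\Omega_V)$, behavior near $S_0$) to exactly those two references. To that extent your proposal and the paper travel the same route.

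That said, since you did sketch more, a few remarks on your sketch viewed on its own terms. Your overall architecture --- compactness forces a bounded $V$-range, analyticity gives the finite-tangency dichotomy, local product structure plus center-stable holonomy transports $\Gamma$ onto unstable slices of $\Omega_V$, local Hausdorff dimension at $t$ equals $\hdim(\Omega_{V(t)})$, $S_0$ is the dimension-one end --- is the right picture and is consistent with what \cite{Yessen2011a} and \cite{Damanik0000my1} set up. Two points deserve more care. First, in item i you infer non-constancy of $d$ locally from non-constancy of both $V\mapsto\hdim(\Omega_V)$ and $t\mapsto V(t)$; the second holds whenever $\gamma$ is not confined to one $S_V^\R$, but the first needs strict monotonicity (not merely non-constancy) of $V\mapsto\hdim(\Omega_V)$ on the relevant range, else the composition could still be locally constant on some subinterval. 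This monotonicity is known for the Fibonacci trace map but you should invoke it, not just non-constancy. Second, in item iv your identity $\hdim(\Gamma_\kappa)=\hdim(\Omega_{V_{\min}(\kappa)})$ implicitly uses that $V\mapsto\hdim(\Omega_V)$ is decreasing and that the supremum of the local dimensions is attained; both are true here but should be flagged, and $V_{\min}(\kappa)$ must be shown continuous in $\kappa$ (which requires that the minimizing tangency/intersection persist or move continuously, a point your sketch glosses over). Neither is a fatal gap, but both are exactly where a referee of \cite{Yessen2011a} would push.
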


We are now equipped with all the tools needed to prove the main results.

\subsection{Proof of Theorem \ref{thm:spectral-convergence-thm}}\label{subsec:proof-thm-spectral-convergence}

For any choice $p(a), p(b) > 0$ (as usual, $p(a)\neq p(b)$), define the line $\gamma$ in $\R^3$ parameterized by $E\in\R$ by
\begin{align*}
\gamma(E) = \left(\frac{E^2 - (1 + p(b)^2)}{2p(b)}, \frac{E^2 - (1 + p(a)^2)}{2p(a)}, \frac{p(a)^2 + p(b)^2}{2p(a)p(b)}\right).
\end{align*}
Define also
\begin{align*}
B_\infty = \set{E\in\R: \mathcal{O}^+_f(\gamma(E))\text{ is bounded}}.
\end{align*}
It was shown in \cite{Yessen2011} that $B_\infty$ is compact (and, of course, nonempty) and that $\sigma_k\xrightarrow[k\rightarrow\infty]{}B_\infty$ in Hausdorff metric. Thus the first claim of Theorem \ref{subsec:proof-thm-spectral-convergence} follows. We now investigate properties of $B_\infty$. We start by noting that $B_\infty$ contains infinitely many points none of which are isolated (this can be seen from relation of $B_\infty$ with the spectrum of the tight-binding Hamiltonian in \eqref{eq:limit-hamiltonian}, which was investigated rigorously in \cite{Yessen2011a}).

\begin{rem}
Let us remark that properties of $B_\infty$ were investigated in \cite{Yessen2011} in the special case that for fixed $p(a) > 0$, $p(b)/p(a)$ is sufficiently close to one.
\end{rem}

It is of course enough to consider
\begin{align*}
\widetilde{\gamma}(\widetilde{E}) = \left(\frac{\widetilde{E} - (1 + p(b)^2)}{2p(b)}, \frac{\widetilde{E} - (1 + p(a)^2)}{2p(a)}, \frac{p(a)^2 + p(b)^2}{2p(a)p(b)}\right)
\end{align*}
in place of $\gamma$, with $\widetilde{E}\in[0,\infty)$. Evaluating the Fricke-Vogt invariant from \eqref{eq:fv-invariant} along $\widetilde{\gamma}$, we get
\begin{align}\label{eq:fv-at-gamma}
I(\widetilde{\gamma}(\widetilde{E})) = \frac{\widetilde{E}}{4}\left(\frac{1}{r} - r\right)^2,
\end{align}
where $r = p(a)/p(b)$. Clearly $I(\widetilde{\gamma}(\widetilde{E}))\in\R$ and $I(\widetilde{\gamma}(\widetilde{E}))\geq 0$. Moreover, we have
\begin{align*}
\frac{\partial I(\widetilde{\gamma}(\widetilde{E}))}{\partial \widetilde{E}} = \frac{1}{4}\left(\frac{1}{r} - r\right)^2 > 0.
\end{align*}
Since the image of $\gamma$ coincides with that of $\widetilde{\gamma}$, it follows that $\gamma$ is transversal to every $S_V^\R$, $V \geq 0$, and from \eqref{eq:fv-at-gamma} it follows that $\gamma$ intersects each $S_V^\R$, $V\geq 0$ in only one point. 

The points of intersection of $\widetilde{\gamma}$ with the center-stable manifolds lie on a compact line segment along $\widetilde{\gamma}$, since for all $\widetilde{E}$ such that $\abs{\widetilde{E}}$ is sufficiently large, $\widetilde{\gamma}(\widetilde{E})$ satisfies the hypothesis of \cite[Proposition 4.1-(2)]{Yessen2011}, and hence escapes. We are therefore working in the setting of Theorem \ref{thm:thm-yessen}.

Observe that $\widetilde{\gamma}$ intersects $S_0^\R$ at $\widetilde{E} = 0$, in the point
\begin{align}\label{eq:gamma-cc-intersection}
p_0 := \widetilde{\gamma}(0) = \left(-\frac{1 + p(b)^2}{2p(b)}, -\frac{1 + p(a)^2}{2p(a)}, \frac{p(a)^2 + p(b)^2}{2p(a)p(b)}\right).
\end{align}
With $(x_0, y_0, z_0) = f^2(p_0)$, a simple (albeit long) computation shows $\abs{x_0},\abs{y_0}$ and $\abs{z_0}$ are greater than one, and $\abs{x_0y_0}>\abs{z_0}$. It follows that $\mathcal{O}^+_f(f^2(p_0))$ is unbounded (see \cite[Proposition 4.1-(2)]{Yessen2011}). Thus Theorem \ref{thm:spectral-convergence-thm}-i follows from Theorem \ref{thm:thm-yessen}-iii.

In \cite{Yessen2011} it was shown that with $p(a) > 0$ fixed, for all $p(b) > 0$ sufficiently close to $p(a)$, $\gamma$ intersects the center-stable manifolds transversally. Hence Theorem \ref{thm:spectral-convergence-thm}-ii follows from Theorem \ref{thm:thm-yessen}-ii.

Again, Theorem \ref{thm:spectral-convergence-thm}-iii follows from Theorem \ref{thm:thm-yessen}, since $\gamma$ depends jointly analytically on $p(a)$ and $p(b)$.

This completes the proof of Theorem \ref{thm:spectral-convergence-thm}.

\subsection{Proof of Theorem \ref{thm:thickness}}\label{subsec:proof-thm-thickness}

First we need to introduce the notion of the \textit{Newhouse thickness} (see \cite{Newhouse1968}).

Let $\mathcal{C}\subset\R$ be a Cantor set and denote by $\mathcal{I}$ its convex hull. Any connected component of $\mathcal{I}\setminus\mathcal{C}$ is called a \textit{gap} of $\mathcal{C}$. A \textit{presentation} of $\mathcal{C}$ is given by an ordering $\mathcal{U} = \set{U_n}_{n\geq 1}$ of the gaps of $\mathcal{C}$. If $u\in\mathcal{C}$ is a boundary point of a gap of $U$ of $C$, we denote by $K$ the connected component of $\mathcal{I}\setminus(U_1\cup U_2\cup\cdots\cup U_n)$ (with $n$ chosen so that $U_n = U$) that contains $u$, and write

\begin{align*}
\tau(\mathcal{C},\mathcal{U},u) = \frac{\abs{K}}{\abs{U}}.
\end{align*}

With this notation, the \textit{thickness} $\tau(\mathcal{C})$ of $\mathcal{C}$ is given by

\begin{align*}
\tau(\mathcal{C}) = \sup_{\mathcal{U}}\inf_u\tau(\mathcal{C},\mathcal{U},u).
\end{align*}

Next we define what S. Astels in \cite{Astels2000} calls \textit{normalized thickness}. For a Cantor set $\mathcal{C}$, define the normalized thickness of $\mathcal{C}$ by
\begin{align*}
\nu(\mathcal{C}) = \frac{\tau(\mathcal{C})}{1 + \tau(\mathcal{C})}
\end{align*}
(Astels originally used the letter $\gamma$ for normalized thickness, which we use here for the line of initial conditions to preserve notation from \cite{Yessen2011, Yessen2011a} for easy reference).

A specialized version of the more general \cite[Theorem 2.4]{Astels2000} gives the following remarkable result.

\begin{thm}\label{thm:thm-astels}
Given a Cantor set $\mathcal{C}$, if $m\nu(\mathcal{C})\geq 1$, $m\in\N$, then the $m$-fold arithmetic sum of $\mathcal{C}$ with itself is equal to the $m$-fold sum of $\mathcal{I}$ with itself, where $\mathcal{I}$ is the convex hull of $\mathcal{C}$. 
\end{thm}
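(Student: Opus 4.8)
The final statement to prove is Theorem~\ref{thm:thm-astels}, which is stated as a specialized consequence of Astels' \cite[Theorem 2.4]{Astels2000}. Since the paper explicitly presents this as a ``specialized version'' of an existing theorem, the proof proposal should focus on how to deduce it from the general Astels result rather than reproving Astels' machinery from scratch.

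\medskip

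\textbf{Proof proposal.} The plan is to deduce Theorem~\ref{thm:thm-astels} directly from the general result of Astels \cite[Theorem 2.4]{Astels2000}. Astels' theorem, in its general form, considers a finite collection of Cantor sets $\mathcal{C}_1,\dots,\mathcal{C}_m$ with normalized thicknesses $\nu(\mathcal{C}_1),\dots,\nu(\mathcal{C}_m)$ and asserts that if $\sum_{i=1}^m \nu(\mathcal{C}_i) \geq 1$, then the sumset $\mathcal{C}_1 + \cdots + \mathcal{C}_m$ contains (in fact equals, up to the obvious constraint) the sum of the convex hulls $\mathcal{I}_1 + \cdots + \mathcal{I}_m$. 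First I would recall the precise hypotheses of Astels' statement and verify that the normalized thickness $\nu$ defined here via $\nu(\mathcal{C}) = \tau(\mathcal{C})/(1+\tau(\mathcal{C}))$ agrees with Astels' normalized thickness (this is just a matter of matching conventions, since the paper notes Astels used the letter $\gamma$ for the same quantity). Then I would specialize by taking $\mathcal{C}_1 = \cdots = \mathcal{C}_m = \mathcal{C}$, so that the hypothesis $\sum_{i=1}^m \nu(\mathcal{C}_i) \geq 1$ becomes exactly $m\nu(\mathcal{C}) \geq 1$, and the conclusion $\mathcal{C}_1 + \cdots + \mathcal{C}_m = \mathcal{I}_1 + \cdots + \mathcal{I}_m$ becomes exactly the statement that the $m$-fold sumset of $\mathcal{C}$ equals the $m$-fold sum of $\mathcal{I}$.

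\medskip

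The one point requiring a small argument is the direction $\mathcal{C}+\cdots+\mathcal{C} \supseteq \mathcal{I}+\cdots+\mathcal{I}$ (the reverse inclusion being trivial since $\mathcal{C}\subseteq\mathcal{I}$ and the $m$-fold sum of an interval $\mathcal{I} = [\alpha,\beta]$ is the interval $[m\alpha, m\beta]$, while $\mathcal{C}+\cdots+\mathcal{C} \subseteq [m\alpha,m\beta]$). Here I would invoke Astels' containment conclusion and note that his statement already gives surjectivity onto the full sum-interval under the thickness hypothesis; no further work is needed beyond citing the theorem with the collection taken to be $m$ copies of $\mathcal{C}$. If one wished to be self-contained, the core mechanism is Newhouse's gap lemma: two Cantor sets whose thicknesses satisfy $\tau(\mathcal{C}_1)\tau(\mathcal{C}_2) \geq 1$ and which are ``linked'' must intersect, and an inductive application of this (with an appropriate bookkeeping of how thicknesses of sumsets grow) yields that the sumset has no gaps; Astels' contribution is precisely the sharp bookkeeping that replaces the product condition by the additive condition on normalized thicknesses.

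\medskip

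The main obstacle, such as it is, will be purely expository: making sure the normalization conventions line up so that ``$m\nu(\mathcal{C}) \geq 1$'' is genuinely the same hypothesis as the one in \cite[Theorem 2.4]{Astels2000} when all the Cantor sets coincide, and that Astels' notion of the ``sum'' of Cantor sets (and the role of the convex hulls) matches the arithmetic sumset used here. Since this is a faithful specialization and not a generalization, there is no genuine mathematical difficulty; the proof is essentially a one-line invocation once the dictionary between the two formulations is fixed. I would therefore keep the proof short: state that we apply \cite[Theorem 2.4]{Astels2000} to the $m$-tuple $(\mathcal{C},\dots,\mathcal{C})$, observe that the hypothesis reduces to $m\nu(\mathcal{C})\geq 1$, and conclude.
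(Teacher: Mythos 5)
Your proposal is correct and matches the paper exactly: the paper gives no proof of this statement at all, simply presenting it as "a specialized version of the more general \cite[Theorem 2.4]{Astels2000}," which is precisely your strategy of applying Astels' general sumset theorem to the $m$-tuple $(\mathcal{C},\dots,\mathcal{C})$ so that the hypothesis $\sum_i\nu(\mathcal{C}_i)\geq 1$ collapses to $m\nu(\mathcal{C})\geq 1$. Your additional remarks on the trivial reverse inclusion and the Newhouse gap-lemma mechanism underlying Astels' bookkeeping are accurate but go beyond what the paper says, which is a bare citation.
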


In order to apply Theorem \ref{thm:thm-astels} in our case, it is enough to show that for any positive and distinct $p(a)$, $p(b)$, we have $\tau(B_\infty) > 0$. 

For a fixed $a\in B_\infty$, define the \textit{local thickness} at $a$ by
\begin{align*}
\lim_{\epsilon\rightarrow 0}\tau(B_\infty\cap (a - \epsilon, a + \epsilon))
\end{align*}
(that the limit exists follows easily from the fact that the thickness is decreasing with respect to set inclusion). From \cite[Lemma 3.6]{Yessen2011a} it follows that for any $a\in B_\infty$, the local thickness at $a$ coincides with the thickness of some dynamically defined Cantor set. On the other hand, the thickness of any dynamically defined Cantor set is strictly positive (see \cite[Chapter 4]{Palis1993}). Hence $\tau(B_\infty) > 0$. This proves the first statement of Theorem \ref{thm:thickness}.

To prove the second statement, it is of course enough to show that for fixed $p(a) > 0$ and all $p(b) > 0$ sufficiently close to $p(a)$, $\tau(B_\infty)$ is bigger than one. By \cite[Theorem 2.1-iv]{Yessen2011}, we know that $\hdim(B_\infty)$ can be forced arbitrarily close to one by taking $p(b)$ sufficiently close to $p(a)$. This, together with the following relation between Hausdorff dimension and thickness (see \cite{Palis1993} for details) gives $\tau(B_\infty) > 1$:
\begin{align*}
\frac{\log 2}{\log\left(2 + \frac{1}{\tau(\bullet)}\right)}\leq \hdim(\bullet).
\end{align*}

Finally, \eqref{eq:thm-thickness} follows since $B_\infty$ is symmetric with respect to the origin in $\R$ (see proof of Theorem \ref{thm:spectral-convergence-thm} above, and notice the appearance of $E^2$ in the definition of $\gamma$).

\subsection{Proof of Theorem \ref{thm:thm-classical-analyticity}}\label{subsec:proof-thm-classical-analyticity}

Let $K_{\omega,n} = J_{\omega,n}/k_B\tau$ and $h_{\omega,n} = H_{\omega,n}/k_B\tau$. Then the partition function $\mathbf{Z}^{(N)}_\omega$  can be written as 
\begin{align}\label{eq:partition-trace-form}
\mathbf{Z}^{(N)}_\omega = \Tr\left(\mathcal{T}^{(N)}_\omega\right),\hspace{2mm}\text{ where }\hspace{2mm}\mathcal{T}^{(N)}_\omega = \prod_{i = N}^1\mathcal{T}_{\omega,i},
\end{align}
with
\begin{align}\label{eq:mat-form}
\mathcal{T}_{\omega,i} =
\begin{pmatrix}[1.5]
\exp\left(K_{\omega,i} + h_{\omega, i}\right) & \exp\left(-K_{\omega,i} - h_{\omega,i}\right)\\ \exp\left(-K_{\omega,i} + h_{\omega,i}\right) & \exp\left(K_{\omega,i} - h_{\omega,i}\right)\end{pmatrix}
\end{align}
(recall that we are using periodic boundary conditions; that is, $\sigma_1 = \sigma_{N+1}$). For a detailed derivation, see Theorem \ref{thm:t-matrix} in Appendix \ref{app:c}. Here we adopt the convention $\prod_{i = N}^1 a_i = a_N\cdot a_{N-1}\cdots a_1$. Note that, unlike in the periodic case, the matrices $\set{\mathcal{T}_{\omega,i}}$ do not commute and so cannot be diagonalized simultaneously. This is precisely what complicates the analysis, as the partition function, and hence the free energy function, cannot be computed explicitly.  

For a matrix $M = (a_{ij})$, define the one-norm and the infinity-norm, respectively, by
\begin{align*}
\norm{M}_1 := \sum_{i,j}\abs{a_{ij}}\hspace{2mm}\text{ and }\hspace{2mm}\norm{M}_\infty := \max_{i,j}\set{\abs{a_{ij}}}.
\end{align*}
From \eqref{eq:mat-form} it is evident that for each $\mathcal{T}_{\omega, i}$, $\norm{\mathcal{T}_{\omega,i}}_\infty = [\mathcal{T}_{\omega,i}]_{11}$. Hence from \eqref{eq:partition-trace-form} we obtain
\begin{align}\label{eq:part-lyap}
\norm{\mathcal{T}^{(N)}_\omega}_\infty\leq \mathbf{Z}^{(N)}_\omega\leq \norm{\mathcal{T}^{(N)}_\omega}_1.
\end{align}
From \eqref{eq:part-lyap} and \eqref{eq:eq-real-free-energy} we obtain 
\begin{align}\label{eq:free-lyap}
-\frac{1}{Nk_B\tau}\log\norm{\mathcal{T}^{(N)}_\omega}_\infty\geq \mathbf{F}^{(N)}_\omega \geq -\frac{1}{Nk_B\tau}\log\norm{\mathcal{T}^{(N)}_\omega}_1.
\end{align}
We can now apply Theorem \ref{thm:walters} with $G: \Omega\rightarrow \mathrm{GL}(2, \R)$ given by $G(\omega) = \mathcal{T}_{\omega, 1}$, in which case $\mathcal{T}_\omega^{(N)} = G(T^{N-1}\omega)G(T^{N-2}\omega)\cdots G(\omega)$ (recall that $T:\Omega\rightarrow\Omega$ is the left shift). That $G$ is continuous is obvious, since $G$ depends only on the first element of the sequence $\omega$; also every entry of $G(\omega)$ is strictly positive, as is evident from \eqref{eq:mat-form}. Now applying Theorem \ref{thm:walters} on both sides of the inequality in \eqref{eq:free-lyap}, we see that
\begin{align*}
\mathbf{F}:=\lim_{N\rightarrow\infty}\mathbf{F}^{(N)}_\omega
\end{align*}
exists as a function of temperature $\tau\in(0,\infty)$ and is independent of $\omega$. From \eqref{eq:part-lyap} and \eqref{eq:free-lyap} it follows that $\mathbf{F}$ is strictly negative. It remains to prove that $\mathbf{F}$ is analytic in $\tau$.

Since the limit in \eqref{eq:free-lyap} is independent of the choice $\omega\in\Omega$, we may choose $\omega_s\in\Omega$, where $\omega_s$ is the special sequence as constructed in Section \ref{subsec:fibonacci-substitution}. From now on we drop $\omega$ in all formulas, assuming implicitly that $\omega = \omega_s$. 

Since the limit in \eqref{eq:free-lyap} exists as $N\rightarrow\infty$, we may pass to a subsequence $N_k = F_k$, where $F_k$ is the $k$th Fibonacci number, and by \eqref{eq:partition-trace-form} it is enough to prove analyticity of the limit
\begin{align}\label{eq:trace-lim}
\lim_{k\rightarrow\infty}\frac{\log \Tr\left(\mathcal{T}^{(F_k)}\right)}{F_k} = \lim_{k\rightarrow\infty}\frac{\log \Tr\left(\mathcal{T}^{(F_k)}\right)}{\sqrt{5}\phi^k},
\end{align}
where $\phi$ is the golden mean (the last equality follows since $\lim_{k\rightarrow\infty}F_k/\phi^k = 1/\sqrt{5}$). 

Let us define
\begin{align*}
\widetilde{\mathcal{T}}_i := \frac{1}{d_i}\mathcal{T}_i,\hspace{2mm}\text{ where }\hspace{2mm} d_i := \sqrt{\det \mathcal{T}_i},\hspace{2mm}\text{ and set }\hspace{2mm} x_k := \frac{1}{2}\Tr\left(\prod_{i = F_k}^1\widetilde{\mathcal{T}}_i\right).
\end{align*}
Since $\widetilde{\mathcal{T}}_i$ is unimodular, the trace map can now be applied to these matrices. For $k \geq 3$, we have (see, for example, \cite{Baake1999} and references therein)
\begin{align}\label{eq:lim-analytic-0}
\Tr\left(\mathcal{T}^{(F_k)}\right) = \left(2\prod_{i = 1}^{F_k}d_i\right)\pi\circ f^{k-3}(x_3, x_2, x_1),
\end{align}
where $\pi$ is projection onto the first coordinate, and $f$ is the Fibonacci trace map as defined in \eqref{eq:trace-map}. 

Observe that the sequence $\set{d_i}_{i\in\N}$ is two-valued modulated by the sequence $u$, as defined in Section \ref{subsec:fibonacci-substitution}. We have
\begin{align*}
\lim_{k\rightarrow\infty}\frac{\log \left(2\prod_{i = 1}^{F_k} d_i\right)}{F_k} = \lim_{k\rightarrow\infty}\frac{\log\left(2d_1^{F_{k-1}}d_2^{F_{k-2}}\right)}{F_k} = \frac{1}{\phi}\log d_1 + \frac{1}{1 + \phi}\log d_2,
\end{align*}
which is obviously analytic as a function of $\tau$. Thus in order to complete the proof, we need to prove that
\begin{align}\label{eq:lim-analytic-2}
\lim_{k\rightarrow\infty}\frac{\log \pi\circ f^{k-3}(x_3, x_2, x_1)}{F_k}\hspace{2mm}\text{ exists and is analytic in } \tau.
\end{align}

The initial condition $(x_3, x_2, x_1)$ defines an analytic curve $(0,\infty)\rightarrow\R^3$ as a map of temperature, $\tau$. Since the limit in \eqref{eq:lim-analytic-2} exists and is nonzero (by \eqref{eq:free-lyap} and application of Theorem \ref{thm:walters}), $\set{\pi\circ f^{k-3}(x_3,x_2,x_1)}_{k\geq 3}$ is unbounded, hence $(x_3, x_2, x_1)$ escapes under the action of $f$.

Fix $\tau_0\in (0, \infty)$. Consider the complexification of $(x_3, x_2, x_1)$. Since for all $\tau\in (0, \infty)$, $(x_3, x_2, x_1)$ escapes under the action of $f$, and the set of points in $\C^3$ which escape under $f$ is an open set (see \cite{Cantat2009}), there exists a small neighborhood $U$ around $(x_3(\tau_0), x_2(\tau_0), x_1(\tau_0))$ in $\C^3$, such that all $z\in U$ have unbounded forward semi-orbit. Application of Proposition \ref{prop:escape-rate} gives \eqref{eq:lim-analytic-2}. 

%
% \begin{align}\label{eq:eq-temp-curve}
% x_1 &= \frac{\exp(K)\cosh(h) + \exp(-K)}{2\left(\sinh(2K_1)\sinh(2K_2)\right)^{1/2}};\notag\\

% x_2 &= \frac{\exp(K_2)\cosh(h_2)}{\left(2\sinh(2K_2)\right)^{1/2}};\\

% x_3 &= \frac{\exp(K_1)\cosh(h_1)}{\left(2\sinh(2K_1)\right)^{1/2}},\notag
% \end{align}

% where $K = K_1 + K_2$ and $h = h_1 + h_2$.

\subsection{Proof of Theorem \ref{thm:complex-ising-zeros}}\label{subsec:proof-thm-complex-ising-zeros}

Notice that the proof of the previous theorem gives a recursion relation for the partition function itself. Indeed, for $k\geq 3$,
\begin{align*}
\mathbf{Z}^{(F_k)} = \left(2\prod_{i = 1}^N d_i\right)\pi\circ f^{k-3}(x_3, x_2, x_1).
\end{align*}
Since we are interested in the zeros of $\mathbf{Z}$, it is enough to consider
\begin{align*}
\widetilde{\mathbf{Z}}^{(F_k)}:= \pi\circ f^{k-3}(x_3, x_2, x_1)
\end{align*}
as a function of the variable $\eta$, as defined in \eqref{eq:complex-regime-variables}. In the new variables $\alpha$, $\beta$ and $\eta$ we have
\begin{align*}
f^{-1}(x_3, x_2, x_1) = \left(\frac{\alpha\beta\widetilde{\eta}}{2\sqrt{\beta^4 - 1}}, \frac{\alpha\beta\widetilde{\eta}}{2\sqrt{\alpha^4 - 1}}, \frac{\alpha^2\beta^2 - 1}{\sqrt{(\alpha^4 - 1)(\beta^4 - 1)}}\right),
\end{align*}
where $\widetilde{\eta} = \eta + \overline{\eta}$. We took $f^{-1}$ to obtain a simpler expression. Since the zeros of $\widetilde{\mathbf{Z}}^{(F_k)}$ all lie on $S^1$, we restrict the allowed values for $\widetilde{\eta}$ to $[-2,2]$.

Set
\begin{align}\label{eq:the-line}
\gamma(\widetilde{\eta}) = \left(\frac{\alpha\beta\widetilde{\eta}}{2\sqrt{\beta^4 - 1}}, \frac{\alpha\beta\widetilde{\eta}}{2\sqrt{\alpha^4 - 1}}, \frac{\alpha^2\beta^2 - 1}{\sqrt{(\alpha^4 - 1)(\beta^4 - 1)}}\right).
\end{align}
Notice that $\gamma$ is an analytic line in $\R^3$. For $k\geq 3$, let
\begin{align*}
\mathcal{Z}_k = \set{\widetilde{\eta}\in[-2,2]: \pi\circ f^{k-2}(\gamma(\widetilde{\eta})) = 0}.
\end{align*}
We are interested to know whether $\set{\mathcal{Z}_k}$ converges in Hausdorff metric to some compact (nonempty) $\mathcal{Z}\subset [-2,2]$, and if so, what is the topological, measure-theoretic and fractal-dimensional structure of $\mathcal{Z}$. 

Define 
\begin{align*}
\mathcal{Z} = \set{\widetilde{\eta}\in[-2,2]: \mathcal{O}^+_f(\gamma(\widetilde{\eta}))\text{ is bounded}}.
\end{align*}
Observe that $\mathcal{Z}$ may contain finitely many isolated points (at such isolated points $\gamma$ would intersect the center-stable manifolds tangentially, and we know from Theorem \ref{thm:thm-yessen} that there are at most finitely many tangential intersections); let us call this set of isolated points $\mathcal{T}$. We now prove that $\mathcal{Z}_k\xrightarrow[k\rightarrow\infty]{}\mathcal{Z}\setminus{\widetilde{\mathcal{T}}}$ in Hausdorff metric, where $\widetilde{\mathcal{T}}\subseteq \mathcal{T}$ (we believe that $\mathcal{T} = \emptyset$, but we are currently unable to prove this). 

For the sake of convenience, define 
\begin{align}\label{eq:tilde-zet}
\widetilde{\mathcal{Z}}:=\mathcal{Z}\setminus\widetilde{\mathcal{T}}.
\end{align}
We also prove that $\widetilde{\mathcal{Z}}$ satisfies i--iii of Theorem \ref{thm:complex-ising-zeros}, which will complete the proof. Let us first prove that $\widetilde{\mathcal{Z}}$ satisfies i--iii of Theorem \ref{thm:complex-ising-zeros}.

The value of the Fricke-Vogt invariant along $\gamma(\widetilde{\eta})$ is given by
\begin{align}\label{eq:fv-along-curve}
I(\gamma(\widetilde{\eta})) = \frac{(\alpha^2\beta^2\widetilde{\eta}^2 + 4)(\alpha^2 - \beta^2)^2}{4(\alpha^4 - 1)(\beta^4 - 1)}.
\end{align}
Observe that $I(\gamma(\widetilde{\eta})) = 0$ if and only if $\alpha = \beta$, in which case $\gamma$ lies entirely on $S_0^\R$. This corresponds to the case $p(a) = p(b)$, i.e. the pure case in which the partition function and corresponding zeros can be computed explicitly. Otherwise, $I(\gamma(\widetilde{\eta}))>0$, i.e. $\gamma$ lies entirely in $\bigcup_{V > 0}S_V^\R$. Moreover, $I(\gamma(\widetilde{\eta}))$ is non-constant as a function of $\widetilde{\eta}$. We are almost in the position where we could apply Theorem \ref{thm:thm-yessen}.

Next, let us show that $\gamma$ cannot lie entirely on a center-stable manifold. Indeed, let us show that the endpoints of $\gamma$, that is, $\gamma(-2)$ and $\gamma(2)$, escape to infinity. Indeed, at $\pm 2$ we have
\begin{align*}
(x,y,z):=\gamma(\pm 2) = \left(\frac{\pm\alpha\beta}{\sqrt{\beta^4 - 1}}, \frac{\pm\alpha\beta}{\sqrt{\alpha^4 - 1}}, \frac{\alpha^2\beta^2 - 1}{\sqrt{(\alpha^4 - 1)(\beta^4 - 1)}}\right).
\end{align*}
Clearly either $\abs{x} > 1$ or $\abs{y} > 1$. Using this, a direct computation shows that either $f(x,y,z)$ or $f^{-1}(x,y,z)$ satisfies that the first two coordinates in absolute value are greater than one, and their product is greater than the third coordinate. It follows that these points escape to infinity (see the discussion in Section \ref{subsec:proof-thm-spectral-convergence} following equation \eqref{eq:gamma-cc-intersection}). This also proves ii of Theorem \ref{thm:complex-ising-zeros}, since $\widetilde{\eta} = \pm 2$ corresponds to $\eta = \pm 1$. 

Next, observe that
\begin{align}\label{eq:check-point}
\gamma(0) = \left(0, 0, \frac{\alpha^2\beta^2 - 1}{\sqrt{(\alpha^4 - 1)(\beta^4 - 1)}}\right)
\end{align}
is a periodic point of period six (the exact form of this expression is not important; in general points of the form where two of the coordinates are zero are periodic). It follows that $\gamma$ intersects center-stable manifolds at points other than the endpoints. 

Finally, let us show that the intersection of $\gamma$ with the center-stable manifold containing the point $\gamma(0)$ is transversal. For convenience, let us call this center-stable manifold $W$. From \eqref{eq:fv-along-curve} we have
\begin{align*}
I(\gamma(0)) > 0\hspace{2mm}\text{ and }\left.\frac{\partial I}{\partial \widetilde{\eta}}\right|_{\gamma(0)} = 0.
\end{align*}
Hence $\gamma(0)\in S_V^\R$ with $V > 0$ and $\gamma'(0)\in T_{\gamma(0)}S_{V}^\R$. On the other hand, intersections of center-stable manifolds with surfaces $S_V^\R$ for any $V > 0$ are transversal (see Proposition \ref{prop:phyperb} above). Therefore we need to prove that $\gamma'(0)$ is transversal in $T_{\gamma(0)}S_V^\R$ to the tangent space at $\gamma(0)$ of the smooth curve $W\cap S_V^\R$. On the other hand, the curve $W\cap S_V^\R$ is precisely the stable manifold to the hyperbolic periodic point $\gamma(0)\in S_V^\R$. Since this periodic point is of period six, passing to $f^6$, we have $f^6(\gamma(0)) = \gamma(0)$ -- a fixed hyperbolic point under $f^6$. Hence the tangent space at $\gamma(0)$ to $W\cap S_V^\R$ coincides with the eigenspace corresponding to the smallest (in absolute value) eigenvalue of $Df^6_{\gamma(0)}$---since $f$ is easily seen to be area preserving, at a hyperbolic fixed point the differential has three distinct eigenvalues: $\set{1, \lambda, \alpha/\lambda}$, where $\abs{\
lambda} > 1$ and $\alpha\in\set{\pm 1}$. As can easily be checked, the sought eigenvalue and the corresponding eigenvector are
\begin{align*}
-4a^2\sqrt{4a^4 + 1} + 8a^4 + 1\hspace{2mm}\text{ and }\hspace{2mm}\left(1,-\frac{\sqrt{4a^4 + 1} + 2a^2 - 1}{2a}, 0\right),
\end{align*}
with $a = \frac{\alpha^2\beta^2 - 1}{\sqrt{(\alpha^4 - 1)(\beta^4 - 1)}} > 0$. On the other hand, as is evident from \eqref{eq:the-line}, $\gamma'$ is spanned by the vector $(1,\sqrt{(\beta^4 - 1)(\alpha^4 - 1)^{-1}}, 0)$. Since the second entry in this vector is strictly positive, while the second entry in the eigenvector is strictly negative, and all other entries coincide, the two vectors are clearly transversal in $T_{\gamma(0)}S_V^\R$, and we're done.

Now Theorem \ref{thm:thm-yessen} can be applied and i of Theorem \ref{thm:complex-ising-zeros} follows. 

Next let us prove that $\widetilde{\mathcal{Z}}$ satisfies iii of Theorem \ref{thm:complex-ising-zeros}. Indeed, by Theorem \ref{thm:thm-yessen} it is enough to prove that for all $p(a)$ and $p(b)$ sufficiently close (i.e. for all $\alpha$ and $\beta$ sufficiently close), $\gamma$ in \eqref{eq:the-line} intersects the center-stable manifolds transversally. This follows from the same arguments as in \cite[Proof of Theorem 2.5]{Yessen2011a}. 

Finally, let us prove that $\mathcal{Z}_k\xrightarrow[k\rightarrow\infty]{}\widetilde{\mathcal{Z}}$ in Hausdorff metric. In place of $f$, let us consider $g := f^6$ (the sixths iterate of $f$). That is, let us show that $\mathcal{Z}_{6k}\xrightarrow[k\rightarrow\infty]{}\widetilde{\mathcal{Z}}$ in Hausdorff metric. This is done merely for the sake of technical convenience, as is apparent from the proof below, and does not result in loss of generality.

Fix arbitrary $\epsilon > 0$ and let $U_1, \dots, U_{m_\epsilon}$ be an open cover of $\mathcal{Z}$ with the diameter $\diam(U_i)\leq \epsilon$. It is enough to show that there exists $K_\epsilon \in \N$ such that for all $k\geq K_\epsilon$, $\mathcal{Z}_{6k}$ is contained in $\bigcup_{i=1}^{m_\epsilon}U_i$, and for every $i\in \set{1,\dots,m_\epsilon}$, $\mathcal{Z}_{6k}\cap U_i\neq\emptyset$ if $U_i\cap \mathcal{Z}$ is not a finite set (i.e., if $U_i$ does not only contain isolated points of $\mathcal{Z}$). 

Every point of $\gamma\cap\left(\bigcup_{i = 1}^{m_\epsilon}U_i\right)^c$ escapes to infinity. Hence there exists $K_1\in\N$ such that for all $k\geq K_1$, $\mathcal{Z}_{6k}\subset \bigcup_{i = 1}^{m_\epsilon}U_i$. 

Consider the four singularities
\begin{align*}
P_1 = (1,1,1),\hspace{2mm} P_2 = (-1,-1,1),\hspace{2mm} P_3 = (1,-1,-1),\hspace{2mm} P_4 = (-1,1,-1)
\end{align*}
of the Cayley cubic $S_0^\R$. The first is fixed under $f$, and the other three form a three-cycle. When the parameter $V$ is turned on, the first singularity bifurcates into a hyperbolic periodic point of period two. The three cycle bifurcates into a hyperbolic six cycle. More precisely, there are four smooth curves in $\R^3$ that lie entirely in $\bigcup_{V\geq 0}S_V^\R$, one through each $P_i$, $i = 1, \dots, 4$. Label these curves $\rho_i$. Each $\rho_i$ is divided into two smooth branches by $P_i$; label these branches $\rho_i^l$ and $\rho_i^r$. Under $f$, $\rho_1^l\mapsto \rho_1^r$, while the other six branches are cycled through a six-cycle. Points on these curves are hyperbolic. In particular, under $g$ (recall: $g = f^6$) these curves are fixed normally hyperbolic submanifolds of $\R^3$ (for a brief discussion of normal hyperbolicity, see Appendix \ref{b3}, and references therein for details). Each of $\rho_i^{l,r}$, $i = 1, \dots, 4$, intersects each $S_V^\R$, $V > 0$, transversally and in only one 
point. This point of intersection is a fixed hyperbolic point for $g$ (for more details see Appendix \ref{b}). For further details on curves on six cycles, see Appendix \ref{a1}.

The stable manifolds to each of the six branches are members of the family of center-stable manifolds and are dense in this family (for definitions, see Appendix \ref{b3} below, and references therein). Moreover, among the six branches, there are those that lie below the plane $\set{z = 0}$, and those that lie above (for details, see the discussion on symmetries of the Fibonacci trace map in Appendix \ref{a2}). Consequently, for each $i_0\in \set{1, \dots, m_\epsilon}$, if $U_{i_0}$ does not contain only isolated points of $\mathcal{Z}$, then there is a pair of points $p, q\in \mathcal{Z}\cap U_{i_0}$, such that $p$ lies on the stable manifold to a branch that lies below $\set{z = 0}$, and $q$ lies on the stable manifold to a branch that lies above $\set{z = 0}$. Hence for all $k$ sufficiently large, $g^k(p)$ lies below $\set{z = 0}$, and $g^k(q)$ lies above $\set{z = 0}$. Therefore, if $\alpha$ denotes the arc along $\gamma$ that connects $p$ and $q$ (and \textit{a priori} lies in $U_{i_0}$), then for all 
such $k$, $g(\alpha)$ intersects $\set{z = 0}$. Hence for all such $k$, $\mathcal{Z}_{6k}\cap U_{i_0}\neq \emptyset$. Consequently, there exists $K_2\in\N$ such that for all $k\geq K_2$, $\mathcal{Z}_{6k}\cap U_i\neq \emptyset$ for every $i\in\set{1,\dots, m_\epsilon}$ such that $U_i$ does not contain only isolated points of $\mathcal{Z}$.

Let $K_\epsilon = K_1 + K_2$. This completes the proof.

\section*{Acknowledgment}

It is a pleasure to thank my dissertation advisor, Anton Gorodetski, for many helpful discussions and for a period of financial support during which this project was conceived and completed.

I would also like to thank the anonymous referees for their comments and suggestions, which helped improve both, the content and the presentation of this paper.

%%%%%%%%%%%%%%%%%%%%%%%%%%%%%%%%%%%%%%%%%
% APPENDIX
%%%%%%%%%%%%%%%%%%%%%%%%%%%%%%%%%%%%%%%%%
\appendixpage

\appendix

None of what is presented in the following appendices is new. In particular, all of appendix \ref{b} is by now part of the classical theory of hyperbolic and partially hyperbolic dynamical systems (and references are given to comprehensive surveys). 

\section{Normal hyperbolicity of six-cycles through singularities of \texorpdfstring{$S_0$}{} and symmetries of \texorpdfstring{$f$}{}}\label{a}

\subsection{Normal hyperbolicity of six-cycles through singularities of \texorpdfstring{$S_0$}{}}\label{a1}

As has been mentioned above, $S_0$ contains four conic singularities; explicitly, they are 
\begin{align}\label{eq:singularities} 
P_1 = (1,1,1),\hspace{2mm} P_2 = (-1,-1,1),\hspace{2mm} P_3 = (1,-1,-1),\hspace{2mm} P_4 = (-1,1,-1). 
\end{align} 
The point $P_1$ is fixed under $f$, while $P_2$, $P_3$ and $P_4$ form a three cycle: 
\begin{align*} 
P_1\overset{f}{\longmapsto}P_1;\hspace{4mm} P_2\overset{f}{\longmapsto}P_3\overset{f}{\longmapsto}P_4\overset{f}{\longmapsto}P_2 
\end{align*} 
(which can be verified via direct computation).

For each $i\in\set{1,\dots,4}$, there is a smooth curve $\rho_i$ which does not contain any self-intersections, passing through the singularity $P_i$, such that $\rho_i\setminus{P_i}$ is a disjoint union of two smooth curves---call them $\rho_i^l$ and $\rho_i^r$---with the following properties: 
\begin{itemize} 

\item $\rho_i^{l,r}\subset \bigcup_{V>0}S_V^\R$; 

\item $f(\rho_1^l) = \rho_1^r$ and $f(\rho_1^r) = f(\rho_1^l)$. In particular, points of $\rho_1^{l,r}$ are periodic of period two, and $\rho_1$ is fixed under $f$; 

\item The six curves $\rho_i^{l,r}$, $i = 2, 3, 4$, form a six cycle under $f$. In particular, points of $\rho_i^{l,r}$, $i = 2, 3, 4$, are periodic of period six, and hence for $i = 2, 3, 4$, $\rho_i$ is fixed under $f^6$. 

\end{itemize}
The curve $\rho_1$ is given explicitly by
\begin{align}\label{eq:per-2-curve} 
\rho_1 = \set{\left(x,\hspace{1mm}\frac{x}{2x-1},\hspace{1mm}x\right): x\in\left(-\infty,1/2\right)\cup\left(1/2,\infty\right)}. 
\end{align} 
Expressions for the other three curves can be obtained from \eqref{eq:per-2-curve} using symmetries of $f$ to be discussed below.  

It follows via a simple computation that for any $i \in \set{1,\dots, 4}$ and any point $p\in \rho_i^{l,r}$, the eigenvalue spectrum of $Df^6_p$ is $\set{1, \lambda(p), 1/\lambda(p)}$ with $0 < \abs{\lambda(p)} < 1$, where $Df^6_p$ denotes the differential of $f^6$ at the point $p$. The eigenspace corresponding to the eigenvalue $1$ is tangent to $\rho_i$ at $p$. At $P_i$, the eigenvalue spectrum of $Df^6_{P_i}$ is of the same form, and as above, the eigenspace corresponding to the unit eigenvalue is tangent to $\rho_i$ at $P_i$. It follows that the curves $\set{\rho_i}_{i\in\set{1,\dots,4}}$ are normally hyperbolic one-dimensional submanifolds of $\R^3$, as defined in Section \ref{b3}.

For $V > 0$, $S_V^\R\bigcap (\rho_1\cup\cdots\cup\rho_4)$ consists of eight hyperbolic periodic points for $f$ which are fixed hyperbolic points for $f^6$. The stable manifolds to these points form a dense sublamination of the stable lamination
\begin{align*}
W^s(\Omega_V) = \bigcup_{x\in\Omega_V}W^s(x),
\end{align*}
the union of global stable manifolds to points in $\Omega_V$, the nonwandering set for $f$ on $S_V$ from Theorem \ref{thm:cas-gor-cant}. Moreover, this sublamination forms the boundary of the stable lamination. For details, see \cite{Casdagli1986, Damanik2009, Cantat2009}. This extends to the center-stable lamination: the stable manifolds to the normally hyperbolic curves $\rho_1,\dots,\rho_4$ form a dense sublamination of the (two-dimensional) center-stable lamination and forms the boundary of this lamination. In particular, if $\gamma$ is a smooth curve intersecting a center-stable manifold, and this intersection is not isolated, then in an arbitrarily small neighborhood of this intersection, $\gamma$ intersects the stable manifolds of all eight curves: $\set{\rho_i^{l,r}}_{i\in\set{1,\dots,4}}$.

\subsection{Symmetries of \texorpdfstring{$f$}{}}\label{a2}

The following discussion is taken from \cite{Damanik0000my1}; however, what follows does not appear in \cite{Damanik0000my1} as new results but a recollection of what is known. In particular, the reader should consult \cite{Baake1997, Roberts1994} and references therein, as well as earlier (and original) works \cite{Kohmoto1983, Kohmoto1992, Kadanoff0000, Kadanoff1984}.

Let us denote the group of symmetries of $f^6$ by $\mathcal{G}_\mathrm{sym}$, and the group of reversing symmetries of $f^6$ by $\mathcal{G}_\mathrm{rev}$; that is, 
\begin{align}\label{eq:sym-group} 
\mathcal{G}_\mathrm{sym} = \set{s\in\mathrm{Diff}(\R^3): s\circ f^6\circ s^{-1} = f^6}, 
\end{align} 
and 
\begin{align}\label{eq:rev-group} 
\mathcal{G}_\mathrm{rev} = \set{s\in\mathrm{Diff}(\R^3): s\circ f^6\circ s^{-1} = f^{-6}}, 
\end{align} 
where $\mathrm{Diff}(\R^3)$ denotes the set of diffeomorphisms on $\R^3$. 

Observe that $\mathcal{G}_\mathrm{rev}\neq\emptyset$. Indeed,  
\begin{align}\label{eq:rev-sym} 
s(x,y,z) = (z,y,x) 
\end{align} 
is a reversing symmetry of $f$, and hence also of $f^6$. Hence $f^6$ is smoothly conjugate to $f^{-6}$. It follows (see Appendix \ref{a1}) that forward-time dynamical properties of $f^6$, as well as the geometry of dynamical invariants (such as stable manifolds) are mapped smoothly and rigidly to those of $f^{-6}$. That is, forward-time dynamics of $f^6$ is essentially the same as its backward-time dynamics. 

The group $\mathcal{G}_\mathrm{sym}$ is also nonempty, and more importantly, it contains the following diffeomorphisms: 
\begin{align}\label{eq:symmetries} 
s_2: (x,y,z)\mapsto (-x, -y, z),\notag\\ 
s_3: (x,y,z)\mapsto(x,-y,-z),\\ 
s_4: (x,y,z)\mapsto(-x,y,-z).\notag 
\end{align} 
Notice that $s_i$ are rigid transformations. Also notice that 
\begin{align*}
s_i(P_1) = P_i,
\end{align*} 
and since $s_i$ is in fact a smooth conjugacy, we must have
\begin{align}\label{eq:symmetries-on-rho}
s_i(\rho_1) = \rho_i.
\end{align}
For a more general and extensive discussion of symmetries and reversing symmetries of trace maps, see \cite{Baake1997}.

\section{Background on uniform, partial and normal hyperbolicity}\label{b}

\subsection{Properties of locally maximal hyperbolic sets}\label{b1}

A more detailed discussion can be found in \cite{Hirsch1968, Hirsch1970, Hirsch1977, Hasselblatt2002b, Hasselblatt2002}.

A closed invariant set $\Lambda\subset M$ of a diffeomorphism $f: M\rightarrow M$ of a smooth manifold $M$ is called \textit{hyperbolic} if for each $x\in\Lambda$, there exists the splitting $T_x\Lambda = E_x^s\oplus E_x^u$ invariant under the differential $Df$, and $Df$ exponentially contracts vectors in $E_x^s$ and exponentially expands vectors in $E_x^u$. The set $\Lambda$ is called \textit{locally maximal} if there exists a neighborhood $U$ of $\Lambda$ such that
\begin{align}\label{part2_eq1}
\Lambda = \bigcap_{n\in\Z}f^n(U).
\end{align}
The set $\Lambda$ is called \textit{transitive} if it contains a dense orbit. It isn't hard to prove that the splitting $E_x^s\oplus E_x^u$ depends continuously on $x\in\Lambda$, hence $\dim(E_x^{s,u})$ is locally constant. If $\Lambda$ is transitive, then $\dim(E_x^{s,u})$ is constant on $\Lambda$. We call the splitting $E_x^s\oplus E_x^u$ a $(k_x^s, k_x^u)$ splitting if $\dim(E_x^{s,u}) = k^{s,u}$, respectively. In case $\Lambda$ is transitive, we shall simply write $(k^s, k^u)$. 
\begin{defn}\label{basic_set}
We call $\Lambda\subset M$ a \textit{basic set} for $f\in\diff^r(M)$, $r\geq 1$, if $\Lambda$ is a locally maximal invariant transitive hyperbolic set for $f$.
\end{defn}
Suppose $\Lambda$ is a basic set for $f$ with $(1,1)$ splitting. Then the following holds.

\subsubsection{Stability}\label{b1-1}

Let $U$ be as in \eqref{part2_eq1}. Then there exists $\mathcal{U}\subset \diff^1(M)$ open, containing $f$, such that for all $g\in\mathcal{U}$,
\begin{align}\label{part2_eq2}
\Lambda_g = \bigcap_{n\in\Z}g^n(U)
\end{align}
is $g$-invariant transitive hyperbolic set; moreover, there exists a (unique) homeomorphism $H_g:\Lambda\rightarrow\Lambda_g$ such that
\begin{align}\label{part2_eq3}
H_g\circ f|_{\Lambda} = g|_{\Lambda_g}\circ H_g;
\end{align}
that is, the following diagram commutes.
\begin{center}
\begin{tikzpicture}
 \matrix (m) [matrix of math nodes,
	      row sep = 3em,
	      column sep = 6em,
	      minimum width = 2em,text height=1.5ex, text depth=0.25ex]
 {
  \Lambda	&	\Lambda\\
  \Lambda_g	&	\Lambda_g\\
 };
 \path[-stealth]
  (m-1-1) edge node [above] {$f$} (m-1-2)
  (m-1-2) edge node [right] {$H_g$} (m-2-2)
  (m-1-1) edge node [left]  {$H_g$} (m-2-1)
  (m-2-1) edge node [below] {$g$} (m-2-2);
\end{tikzpicture}
\end{center}
Also $H_g$ can be taken arbitrarily close to the identity by taking $\mathcal{U}$ sufficiently small. In this case $g$ is said to be \textit{conjugate to} $f$, and $H_g$ is said to be \textit{the conjugacy}.

\subsubsection{Stable and unstable invariant manifolds}\label{b1-2}

Let $\epsilon > 0$ be small. For each $x\in\Lambda$ define the \textit{local stable} and \textit{local unstable} manifolds at $x$:
\begin{align*}
W_\epsilon^s(x) = \set{y\in M: d(f^n(x),f^n(y))\leq \epsilon \text{ for all }n\geq 0},
\end{align*}
\begin{align*}
W_\epsilon^u(x) = \set{y\in M: d(f^n(x),f^n(y))\leq \epsilon \text{ for all }n\leq 0}.
\end{align*}
We sometimes do not specify $\epsilon$ and write
\begin{align*}
W_\loc^s(x)\text{\hspace{5mm}and\hspace{5mm}}W_\loc^u(x)
\end{align*}
for $W_\epsilon^s(x)$ and $W_\epsilon^u(x)$, respectively, for (unspecified) small enough $\epsilon > 0$. For all $x\in\Lambda$, $W_\loc^{s,u}(x)$ is an embedded $C^r$ disc with $T_xW_\loc^{s,u}(x) = E_x^{s,u}$. The \textit{global stable} and \textit{global unstable} manifolds
\begin{align}\label{part2_eq4}
W^s(x) = \bigcup_{n\in\N}f^{-n}(W_\loc^s(x))\text{\hspace{5mm}and\hspace{5mm}}W^u(x) = \bigcup_{n\in\N}f^{n}(W_\loc^u(x))
\end{align}
are injectively immersed $C^r$ submanifolds of $M$. Define also the stable and unstable sets of $\Lambda$:
\begin{align}\label{part2_eq5}
W^s(\Lambda) = \bigcup_{x\in\Lambda}W^s(x)\text{\hspace{5mm}and\hspace{5mm}}W^u(\Lambda) = \bigcup_{x\in\Lambda}W^u(x).
\end{align}

If $\Lambda$ is compact, there exists $\epsilon > 0$ such that for any $x,y\in\Lambda$, $W_\epsilon^s(x)\cap W_\epsilon^u(y)$ consists of at most one point, and there exists $\delta > 0$ such that whenever $d(x,y) < \delta$, $x,y\in\Lambda$, then $W_\epsilon^s(x)\cap W_\epsilon^u(y)\neq \emptyset$. If in addition $\Lambda$ is locally maximal, then $W_\epsilon^s(x)\cap W_\epsilon^u(y)\in\Lambda$. 

The stable and unstable manifolds $W_\loc^{s,u}(x)$ depend continuously on $x$ in the sense that there exists $\Phi^{s,u}:\Lambda\rightarrow\emb^r(\R, M)$ continuous, with $\Phi^{s,u}(x)$ a neighborhood of $x$ along $W_\loc^{s,u}(x)$, where $\emb^r(\R, M)$ is the set of $C^r$ embeddings of $\R$ into $M$ \cite[Theorem 3.2]{Hirsch1968}.

The manifolds also depend continuously on the diffeomorphism in the following sense. For all $g\in\diff^r(M)$ $C^r$ close to $f$, define
$\Phi_g^{s,u}:\Lambda_g\rightarrow\emb^r(\R,M)$ as we defined $\Phi^{s,u}$ above. Then define
\begin{align*}
\tilde{\Phi}_g^{s,u}:\Lambda\rightarrow\emb^r(\R, M)
\end{align*}
by
\begin{align*}
\tilde{\Phi}_g^{s,u} = \Phi_g^{s,u}\circ H_g.
\end{align*}
Then $\tilde{\Phi}^{s,u}_g$ depends continuously on $g$ \cite[Theorem 7.4]{Hirsch1968}.

\subsubsection{Fundamental domains}\label{b1-3}

Along every stable and unstable manifold, one can construct the so-called \textit{fundamental domains} as follows. Let $W^s(x)$ be the stable manifold at $x$. Let $y\in W^s(x)$. We call the arc $\gamma$ along $W^s(x)$ with endpoints $y$ and $f^{-1}(y)$ a \textit{fundamental domain}. The following holds.
\begin{itemize}
\item $f(\gamma)\cap W^s(x) = y$ and $f^{-1}(\gamma)\cap W^s(x) = f^{-1}(y)$, and for any $k\in\Z$, if $k < -1$, then $f^k(\gamma)\cap W^s(x) = \emptyset$; if $k > 1$ then $f^k(\gamma)\cap W^s(x) = \emptyset$ iff $x\neq y$;
\item For any $z\in W^s(x)$, if for some $k\in\N$, $f^k(z)$ lies on the arc along $W^s(x)$ that connects $x$ and $y$, then there exists $n\in\N$, $n\leq k$, such that $f^n(z)\in\gamma$.
\end{itemize}
Similar results hold for the unstable manifolds.

\subsubsection{Invariant foliations}\label{a_1_3}

A stable foliation for $\Lambda$ is a foliation $\mathcal{F}^s$ of a neighborhood of $\Lambda$ such that
\begin{enumerate}
\item for each $x\in\Lambda$, $\mathcal{F}(x)$, the leaf containing $x$, is tangent to $E_x^s$;
\item for each $x$ sufficiently close to $\Lambda$, $f(\mathcal{F}^s(x))\subset\mathcal{F}^s(f(x))$.
\end{enumerate}
An unstable foliation $\mathcal{F}^u$ is defined similarly.

For a locally maximal hyperbolic set $\Lambda\subset M$ for $f\in\diff^1(M)$, $\dim(M) = 2$, stable and unstable $C^0$ foliations with $C^1$ leaves can be constructed; in case $f\in\diff^2(M)$, $C^1$ invariant foliations exist (see \cite[Section A.1]{Palis1993} and the references therein).

\subsubsection{Local Hausdorff and box-counting dimensions}\label{b1-4}

For $x\in\Lambda$ and $\epsilon > 0$, consider the set $W_\epsilon^{s,u}\cap\Lambda$. Its Hausdorff dimension is independent of $x\in\Lambda$ and $\epsilon > 0$.

Let
\begin{align}\label{part2_eq6}
h^{s,u}(\Lambda) = \dim_H(W_\epsilon^{s,u}(x)\cap\Lambda).
\end{align}
For properly chosen $\epsilon > 0$, the sets $W_\epsilon^{s,u}(x)\cap\Lambda$ are dynamically defined Cantor sets, so
\begin{align*}
h^{s,u}(\Lambda) < 1
\end{align*}
(see \cite[Chapter 4]{Palis1993}). Moreover, $h^{s,u}$ depends continuously on the diffeomorphism in the $C^1$-topology \cite{Manning1983}. In fact, when $\dim(M) = 2$, these are $C^{r-1}$ functions of $f\in\diff^r(M)$, for $r\geq 2$ \cite{Mane1990}.

Denote the box-counting dimension of a set $\Gamma$ by $\dim_{\mathrm{Box}}(\Gamma)$. Then
\begin{align*}
\dim_H(W^{s,u}_\epsilon(x)\cap \Lambda) = \dim_{\mathrm{Box}}(W^{s,u}_\epsilon(x)\cap \Lambda)
\end{align*}
(see \cite{Manning1983, Takens1988}).

\subsection{Partial hyperbolicity}\label{b2}

For a more detailed discussion, see \cite{Pesin2004, Hasselblatt2006}.

An invariant set $\Lambda\subset M$ of a diffeomorphism $f\in\diff^r(M)$, $r\geq 1$, is called \textit{partially hyperbolic (in the narrow sense)} if for each $x\in\Lambda$ there exists a splitting $T_xM = E_x^s\oplus E_x^c\oplus E_x^u$ invariant under $Df$, and $Df$ exponentially contracts vectors in $E_x^s$, exponentially expands vectors in $E_x^u$, and $Df$ may contract or expand vectors from $E_x^c$, but not as fast as in $E_x^{s,u}$. We call the splitting $(k_x^s, k_x^c, k_x^u)$ splitting if $\dim(E_x^{s,c,u}) = k_x^{s,c,u}$, respectively. We shall write $(k^s,k^c,k^u)$ if the dimension of subspaces does not depend on the point.

\subsection{Normal hyperbolicity}\label{b3}

For a more detailed discussion and proofs see \cite{Hirsch1977} and also \cite{Pesin2004}.

Let $M$ be a smooth Riemannian manifold, compact, connected and without boundary. Let $f\in\diff^r(M)$, $r\geq 1$. Let $N$ be a compact smooth submanifold of $M$, invariant under $f$. We call $f$ \textit{normally hyperbolic} on $N$ if $f$ is partially hyperbolic on $N$. That is, for each $x\in N$,
\begin{align*}
T_xM = E_x^s\oplus E_x^c\oplus E_x^u
\end{align*}
with $E_x^c = T_xN$. Here $E_x^{s,c,u}$ is as in Section \ref{b2}. Hence for each $x\in N$ one can construct local stable and unstable manifolds $W_\epsilon^s(x)$ and $W_\epsilon^u(x)$, respectively, such that
\begin{enumerate}
\item $x\in W_\loc^s(x)\cap W_\loc^u(x)$;
\item $T_x W_\loc^s(x) = E^s(x)$, $T_xW_\loc^u(x) = E^u(x)$;
\item for $n\geq 0$,
\begin{align*}
d(f^n(x),f^n(y))\xrightarrow[n\rightarrow\infty]{}0\text{ for all }y\in W_\loc^s(x),
\end{align*}
\begin{align*}
d(f^{-n}(x),f^{-n}(y))\xrightarrow[n\rightarrow\infty]{}0\text{ for all }y\in W_\loc^u(x).
\end{align*}
\end{enumerate}
(For the proof see \cite[Theorem 4.3]{Pesin2004}).
These can then be extended globally by
\begin{align}\label{eq:ss-manifold}
W^s(x)& = \bigcup_{n\in\N}f^{-n}(W^s_\loc(x));\\
W^u(x)& = \bigcup_{n\in\N}f^n(W^u_\loc(x)).
\end{align}

The manifold $W^s(x)$ is referred to as the \textit{strong-stable manifold}, while $W^u(x)$ is called the \textit{strong-unstable} manifold; sometimes to emphasize the point $x$, we add \textit{ at $x$}.

Set
\begin{align}\label{eq:cscu-def}
W_\loc^{cs}(N) = \bigcup_{x\in N}W_\loc^s(x)\text{\hspace{5mm}and\hspace{5mm}} W_\loc^{cu}(N) = \bigcup_{x\in N}W_\loc^u(x).
\end{align}
\begin{thm}[Hirsch, Pugh and Shub \cite{Hirsch1977}]\label{a1_thm1}
The sets $W_\loc^{cs}(N)$ and $W_\loc^{cu}(N)$, restricted to a neighborhood of $N$, are smooth submanifolds of $M$. Moreover,
\begin{enumerate}
\item $W^\mathrm{cs}_\loc(N)$ is $f$-invariant and $W^\mathrm{cu}_\loc$ is $f^{-1}$-invariant;
\item $N = W^\mathrm{cs}_\loc(N)\bigcap W^\mathrm{cu}_\loc(N)$;
\item For every $x\in N$, $T_x W^\mathrm{cs, cu}_\loc(N) = E_x^{s,u}\oplus T_xN$;
\item $W^\mathrm{cs}_\loc(N)$ ($W^\mathrm{cu}_\loc(N)$) is the only $f$-invariant ($f^{-1}$-invariant) set in a neighborhood of $N$;
\item $W^\mathrm{cs}_\loc(N)$ (respectively, $W^\mathrm{cu}_\loc(N)$) consists precisely of those points $y\in M$ such that for all $n\geq 0$ (respectively, $n\leq 0$), $d(f^n(x),f^n(y)) < \epsilon$ for some $\epsilon > 0$.
\item $W^\mathrm{cs,cu}_\loc(N)$ is foliated by $\set{W_\loc^\mathrm{s,u}(x)}_{x\in N}$.

\end{enumerate}

\end{thm}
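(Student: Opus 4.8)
The plan is to prove the theorem by the classical Hadamard graph-transform method, which produces $W^{\mathrm{cs}}_\loc(N)$ and $W^{\mathrm{cu}}_\loc(N)$ together with properties (1)--(6) in a single package. First I would fix a local model: by the tubular-neighborhood theorem, identify a neighborhood $\mathcal{N}$ of $N$ in $M$ with a neighborhood of the zero section of the vector bundle $E^s\oplus E^u\to N$ (the center bundle $E^c=TN$ is absorbed into the base), and replace the ambient metric by a metric adapted to the splitting, so that the contraction along $E^s$, the expansion along $E^u$, and the dominated intermediate growth along $E^c=TN$ are all witnessed by the one-step norms of $Df$ and $Df^{-1}$. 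After multiplying the nonlinear remainder terms by a cutoff supported near $N$, one may assume $f$ is globally defined on the model bundle and differs from its linearization along $N$ by a bundle map of arbitrarily small $C^1$-norm; since every assertion is local, this modification is harmless.

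Next I would run the graph transform. One seeks $W^{\mathrm{cs}}_\loc(N)$ as the graph of a section $h$ assigning to each $x\in N$ and each small $\xi\in E^s_x$ a vector $h(x,\xi)\in E^u_x$ with $h(x,0)=0$, and defines $\Gamma$---a bundle map covering $\bar f:=f|_N$---by declaring $\mathrm{graph}(\Gamma h)$ to be $f^{-1}(\mathrm{graph}(h))$ restricted to $\mathcal{N}$. The key point is that $f^{-1}$ contracts the fiber direction $E^u$ strictly more than it distorts the base directions $E^s\oplus TN$, which is exactly the normal-hyperbolicity (domination) hypothesis; hence $\Gamma$ is a uniform contraction of the complete metric space of Lipschitz sections of small Lipschitz constant in the supremum metric. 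Its unique fixed point is $W^{\mathrm{cs}}_\loc(N)$. Local $f$-invariance (item 1) is immediate from the defining relation $\mathrm{graph}(h)\subseteq f^{-1}(\mathrm{graph}(h))$; the tangency $T_xW^{\mathrm{cs}}_\loc(N)=E^s_x\oplus T_xN$ along $N$ (item 3) follows because linearizing the fixed-point equation at the zero section---where $Df$ is block-diagonal---forces $D_\xi h(x,0)=0$. Running the same argument with $f$ in place of $f^{-1}$ gives $W^{\mathrm{cu}}_\loc(N)$ and item 1 for it. Moreover, the fixed-point manifold is exactly the set of points of $\mathcal{N}$ whose forward semi-orbit never leaves $\mathcal{N}$, which is the dynamical characterization (item 5) and simultaneously forces uniqueness among locally invariant sets (item 4); and any point in $W^{\mathrm{cs}}_\loc(N)\cap W^{\mathrm{cu}}_\loc(N)$ has a full orbit confined to $\mathcal{N}$, so (as $E^u$ is expanded and $E^s$ contracted) both its $E^u$- and $E^s$-components vanish and it lies on $N$ (item 2).

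It then remains to upgrade the a priori merely Lipschitz section $h$ to the differentiability class claimed and to produce the foliation of item 6. For the first I would invoke the fiber-contraction ($C^r$-section) theorem: lift $\Gamma$ to act on pairs (section, candidate derivative), check that the lift is again a fiber contraction provided the spectral gap between the normal rate along $E^u$ and the tangential rates along $E^c$ dominates the order $r$ (the $r$-normal-hyperbolicity inequality implicit in the hypotheses), and conclude that $h$, hence $W^{\mathrm{cs}}_\loc(N)$ and $W^{\mathrm{cu}}_\loc(N)$, are $C^r$. For the second, inside $W^{\mathrm{cs}}_\loc(N)$---on which $N$ is normally attracting, the only non-center direction being the contracted $E^s$---I would run one more graph transform: for each $x\in N$ the strong-stable leaf $W^s_\loc(x)$ (available from the discussion preceding the theorem) is recovered as the fixed point of a contraction among graphs over $E^s_x$, and the parametrized section theorem shows that these leaves vary continuously with $x$ and assemble into the asserted foliation of $W^{\mathrm{cs}}_\loc(N)$ by $C^r$ leaves.

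The step I expect to be the main obstacle is this final regularity-and-foliation package, not the bare existence: turning $\Gamma$ into a genuine (rather than merely eventual) contraction requires the adapted metric and careful bookkeeping of the three families of rates; the bootstrap to $C^r$ rests entirely on the $r$-normal-hyperbolicity inequality and collapses without it; and the strong-stable foliation of item 6, although it has $C^r$ leaves, typically has only H\"older holonomy, so establishing even its continuity needs the full strength of the section theorem rather than a naive iteration argument. In the present paper the theorem is only quoted from \cite{Hirsch1977}, so the foregoing merely outlines the classical proof.
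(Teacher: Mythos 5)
The paper does not prove this theorem at all: it is quoted verbatim from Hirsch--Pugh--Shub \cite{Hirsch1977} as background material in an appendix, with no argument supplied. So there is no ``paper's own proof'' to compare against. That said, your sketch is an accurate outline of the classical Hadamard graph-transform proof that the cited reference in fact uses: local model via $E^s\oplus E^u\to N$ with an adapted metric; cutoff so that $f$ is a small $C^1$ perturbation of its linearization along $N$; graph transform as a fiber contraction whose fixed point is $W^{\mathrm{cs}}_{\loc}(N)$; the $C^r$ regularity package via the fiber-contraction ($C^r$-section) theorem, conditional on the $r$-normal-hyperbolicity gap; and a second graph transform inside $W^{\mathrm{cs}}_{\loc}(N)$ to produce the strong-stable foliation of item (6). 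Your identification of the genuine technical obstacles---the adapted-metric bookkeeping, the collapse of the bootstrap without the spectral gap, and the merely H\"older transverse regularity of the foliation---is also correct and corresponds to where the work lies in \cite{Hirsch1977}.

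One small bookkeeping slip: if, as you set it up, $\mathrm{graph}(\Gamma h)=f^{-1}(\mathrm{graph}(h))$ (which is the right choice for $W^{\mathrm{cs}}_{\loc}(N)$, since $Df^{-1}$ contracts the $E^u$ fiber), then the bundle map $\Gamma$ covers $f^{-1}|_N$, not $\bar f=f|_N$ as you wrote; equivalently the base point of the domain is pushed by $f^{-1}$. This does not affect the correctness of the argument, only the notation. The rest of the derivations of items (1)--(5) from the fixed-point characterization are exactly as you state.
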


\section{Derivation of transfer matrices}\label{app:c}

In this section we prove equality in equation \eqref{eq:partition-trace-form}. Let us start by introducing the following notation. We shall write
\begin{align*}
 (\underbrace{\pm \pm \cdots \pm}_{n \text{ times}} \singlebar \underbrace{\pm \pm \cdots \pm}_{m \text{ times}})\hspace{2mm}\text{for}\hspace{2mm}\exp(\pm K_1 \pm K_2 \cdots \pm K_n \pm h_1 \pm h_2 \cdots \pm h_m).
\end{align*}
For example, $(+ - + + \singlebar + - - +)$ denotes $\exp(K_1 - K_2 + K_3 + K_4 + h_1 - h_2 - h_3 + h_4)$. Thus the order of appearance of the signs in $(\cdots \singlebar \cdots)$ is important. We shall also write
\begin{align*}
 (\cdots \singlebar \cdots)\oplus(\cdots \singlebar \cdots)\hspace{2mm}\text{ for }\hspace{2mm}\exp(\cdots) + \exp(\cdots).
\end{align*}
For example, $(- + \singlebar + +)\oplus (- - \singlebar + -)$ stands for $\exp(-K_1 + K_2 + h_1 + h_2) + \exp(-K_1 - K_2 + h_1 - h_2)$. 

We define the transfer matrix at cite $i$ in the above notation by
\begin{align*}
 \mathfrak{T}_i = 
 \begin{pmatrix}[1.5]
  (+\singlebar+)_i & (-\singlebar-)_i\\
  (-\singlebar+)_i & (+\singlebar-)_i
 \end{pmatrix},
\end{align*}
where $(\pm \singlebar \pm)_i$ stands for $\exp(\pm K_i \pm h_i)$.

We also define multiplication $\star$ of $(\cdots \singlebar \cdots)$ and $(\circ\circ\circ\singlebar\circ\circ\circ)$ by
\begin{align*}
 (\cdots \singlebar \cdots)\star(\circ\circ\circ\singlebar\circ\circ\circ) = (\circ\circ\circ \cdots \singlebar \circ\circ\circ\cdots).
\end{align*}
Notice that multiplication is not in general commutative, precisely because the $(\cdots\singlebar\cdots)$ is not invariant under permutation of the signs $\cdots$.

In this notation, the partition function over the lattice of $N$ nodes is given by
\begin{align*}
 Z^{(N)} = \bigoplus_{u\in\set{+, -}^{N}}(\mathbf{P}(u)\singlebar u),
\end{align*}
where $\mathbf{P}(u)\in\set{+, -}^{N}$ is the \textit{allowed sequence} given the periodic boundary conditions and the configuration $(u_1, \dots, u_N)$ of spins on the lattice (in this notation $+$ stands for $+1$ and $-$ stands for $-1$; that is, spin up and down, respectively). 

In what follows, we shall prove a general result that is independent of the way that the coupling constants $K_i$, and the magnetic field $h_i$ are chosen (i.e. we make no use of quasi-periodicity or any other structure of the sequences). We deliberately use the letter $\mathfrak{T}$ in place of $\mathcal{T}$ as in \eqref{eq:partition-trace-form} for the transfer matrices to indicate the general nature of the result proved here.

Now using the definition of multiplication $\star$ from above, we can define multiplication $\otimes$ of two matrices with elements $\set{a_{ij}}$ and $\set{b_{ij}}$ of the form $(\cdots\singlebar\cdots)$ as
\begin{align*}
 \begin{pmatrix}
  a_{11} & \cdots & a_{1n}\\
  \vdots & \ddots & \vdots\\
  a_{m1} & \cdots & a_{mn}
 \end{pmatrix}
 \otimes
 \begin{pmatrix}
  b_{11} & \cdots & b_{1m}\\
  \vdots & \ddots & \vdots\\
  b_{n1} & \cdots & b_{nm}
 \end{pmatrix}
 =
 \begin{pmatrix}
  \overset{n}{\underset{i = 1}{\bigoplus}} a_{1i}\star b_{i1} & \cdots & \overset{n}{\underset{i = 1}{\bigoplus}} a_{1i}\star b_{im}\\
  \vdots & \ddots & \vdots\\
  \overset{n}{\underset{i = 1}{\bigoplus}} a_{ni}\star b_{im} & \cdots & \overset{n}{\underset{i = 1}{\bigoplus}} a_{mi}\star b_{im}
 \end{pmatrix}
\end{align*}

We now state and prove the main theorem of this section, relating the traces of transfer matrices to the partition function.

\begin{thm}\label{thm:t-matrix}
 The partition function over the lattice of size $N$ is given by
 \begin{align*}
  Z^{(N)} = \Tr(\mathfrak{T}_N\otimes\cdots\otimes \mathfrak{T}_1).
 \end{align*}
\end{thm}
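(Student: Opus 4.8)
The plan is to prove the identity by induction on $N$, exhibiting at each stage that the entries of the partial product $\mathfrak{T}_N \otimes \cdots \otimes \mathfrak{T}_1$ encode precisely the Boltzmann weights of partial spin configurations, with the two diagonal entries and two off-diagonal entries distinguished by the boundary spins. Concretely, I would first set up a \textbf{bookkeeping lemma}: for $1 \le n \le N$, the $(s,t)$-entry of $\mathfrak{T}_n \otimes \cdots \otimes \mathfrak{T}_1$ (with $s,t \in \{+,-\}$) equals $\bigoplus (\mathbf{P}_n(u)\singlebar u)$, where the sum ranges over all configurations $u = (u_1,\dots,u_n) \in \{+,-\}^n$ with $u_1 = t$, and $\mathbf{P}_n(u) \in \{+,-\}^{n-1}$ is the ``bond sign'' sequence with $j$-th entry $+$ iff $u_j = u_{j+1}$; the row index $s$ records $u_n$. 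Here the magnetic-field part picks up $h_j$ with sign $u_j$ for each $j$, and the coupling part picks up $K_j$ with sign $+$ when $u_j, u_{j+1}$ agree. This lemma is exactly the statement that matrix multiplication via $\otimes$ is the transfer-matrix summation-over-intermediate-spins device, written out in the $(\cdots\singlebar\cdots)$ notation.

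The base case $n = 1$ is immediate from the definition of $\mathfrak{T}_1$: the entry $(\mathfrak{T}_1)_{st} = (\,\pm K_1 \singlebar \pm h_1\,)$ with the sign of $h_1$ equal to $t$ and the sign of $K_1$ equal to $+$ iff $s = t$, matching the description with $u_1 = t$, $u_2 = s$ (the ``bond'' $K_1$ connecting site $1$ to its periodic neighbour to be closed up at the end). For the inductive step, I would compute $(\mathfrak{T}_{n+1} \otimes (\mathfrak{T}_n \otimes \cdots \otimes \mathfrak{T}_1))_{st}$ using the definition of $\otimes$: it is $\bigoplus_{r \in \{+,-\}} (\mathfrak{T}_{n+1})_{sr} \star (\mathfrak{T}_n \otimes \cdots \otimes \mathfrak{T}_1)_{rt}$. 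By the inductive hypothesis the second factor is a sum over configurations $(u_1,\dots,u_n)$ with $u_1 = t$, $u_n = r$; the first factor contributes $K_{n+1}$ with sign $+$ iff $s = r$ (i.e.\ $u_{n+1} = u_n$) and $h_{n+1}$ with sign $s$; under $\star$ these signs concatenate, so the result is the sum over $(u_1,\dots,u_{n+1})$ with $u_1 = t$, $u_{n+1} = s$, and the bond $K_{n+1}$ between $u_n$ and $u_{n+1}$ correctly recorded --- precisely the $n+1$ version of the lemma. The summation over $r$ is exactly the sum over the intermediate spin $u_n$.

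Finally I would take the trace: $\Tr(\mathfrak{T}_N \otimes \cdots \otimes \mathfrak{T}_1) = \bigoplus_{s \in \{+,-\}} (\mathfrak{T}_N \otimes \cdots \otimes \mathfrak{T}_1)_{ss}$, which by the lemma (with $t = s$) is the sum over all $u = (u_1,\dots,u_N)$ with $u_N = u_1$, weighted by the $N-1$ bond signs from $\mathbf{P}_{N}(u)$ \emph{together with} the extra bond sign from the diagonal constraint $u_N = u_1$ --- this last one being the periodic bond $K_N$ (or $K_0$, depending on indexing convention) closing the ring. Thus the bond-sign sequence becomes the full periodic allowed sequence $\mathbf{P}(u) \in \{+,-\}^N$, and the trace equals $\bigoplus_{u \in \{+,-\}^N}(\mathbf{P}(u)\singlebar u) = Z^{(N)}$, matching the formula for the partition function stated just above the theorem.

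I expect the main obstacle to be purely notational rather than conceptual: one must be scrupulously careful about \emph{which} coupling $K_i$ sits on \emph{which} bond and how the periodic wrap-around bond is accounted for, since the $\star$ operation is non-commutative and the diagonal-versus-off-diagonal distinction in $\mathfrak{T}_i$ is exactly what tracks agreement of neighbouring spins. Getting the indexing conventions in $\mathbf{P}_n(u)$ and in the definition of $\mathfrak{T}_i$ to line up so that the trace closes the ring correctly (and produces exactly $N$ bonds, not $N-1$ or $N+1$) is the one place where an error could creep in; everything else is a mechanical unwinding of the definitions of $\star$ and $\otimes$.
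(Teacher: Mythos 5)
Your overall strategy is the same as the paper's (Lemma~\ref{lem:t-matrix} followed by Theorem~\ref{thm:t-matrix}): an induction on the length of the transfer product, with a bookkeeping lemma asserting that each entry of $\mathfrak{T}_n\otimes\cdots\otimes\mathfrak{T}_1$ is the sum of the appropriate Boltzmann weights with boundary spins pinned by the matrix indices, and a final trace step that closes the ring. So this is not a different route; it is the same route with a defect in the stated lemma.

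The defect is a genuine off-by-one error, and it is exactly at the spot you flag as the likely trouble. You assert that $\mathbf{P}_n(u)\in\set{+,-}^{n-1}$ (only $n-1$ bond signs) and that the row index records $u_n$. But the product $\mathfrak{T}_n\otimes\cdots\otimes\mathfrak{T}_1$ involves all $n$ couplings $K_1,\dots,K_n$, so there must be $n$ bond signs; and in the matrix product the row index of each $\mathfrak{T}_i$ carries the spin at site $i+1$ while its column carries the spin at site $i$, so the outer row of the $n$-step product records $u_{n+1}$, not $u_n$. Read literally, your lemma already fails at $n=1$: it forces $(\mathfrak{T}_1)_{st}=0$ whenever $s\neq t$, yet all four entries of $\mathfrak{T}_1$ are nonzero. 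Your own base-case argument implicitly fixes this by invoking a second spin $u_2=s$ that does not exist in $u\in\set{+,-}^1$, so your verification does not actually check the lemma you stated. The same slip resurfaces in the trace step: with only $n-1$ bond signs and the diagonal constraint $u_N=u_1$, the sign of the periodic bond $K_N$ would be forced to $+$ for every configuration, which would omit half the terms of $\mathbf{Z}^{(N)}$.

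The paper sidesteps all of this by keeping $u\in\set{+,-}^N$ as the full $N$-site configuration, by letting the row index choose between the length-$N$ periodic and anti-periodic bond-sign sequences $\mathbf{P}(u)$ and $\mathbf{A}(u)$ (row $=$ column gives $\mathbf{P}$, row $\neq$ column gives $\mathbf{A}$; equivalently, the row fixes $u_{N+1}$), and by starting the induction at $N=2$ so that the degenerate $N=1$ case never arises. If you make those two corrections---row records $u_{n+1}$, and the bond-sign sequence has length $n$---your inductive step and trace computation go through as written and coincide with the paper's argument.
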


To prove the theorem, we prove the following lemma, from which the theorem clearly follows. 

\begin{lem}\label{lem:t-matrix}
 For all $N\in\N$, the $N$-step transfer matrix is given by
 \begin{align*}
  \mathfrak{T}_N\otimes\cdots\otimes \mathfrak{T}_1 = 
  \begin{pmatrix}[2]
   \underset{\overset{+}{u}\in\set{+, -}^N}{\bigoplus}(\mathbf{P}(\overset{+}{u})\singlebar  \overset{+}{u}) & \underset{\overset{-}{u}\in\set{+, -}^N}{\bigoplus}(\mathbf{A}(\overset{-}{u})\singlebar\overset{-}{u})\\
   \underset{\overset{+}{u}\in\set{+, -}^N}{\bigoplus}(\mathbf{A}(\overset{+}{u})\singlebar \overset{+}{u}) & \underset{\overset{-}{u}\in\set{+, -}^N}{\bigoplus}(\mathbf{P}(\overset{-}{u})\singlebar \overset{-}{u}),
  \end{pmatrix}
 \end{align*}
 where $\overset{+}{u}$ is a sequence that begins with $+$ and $\overset{-}{u}$ is a sequence that begins with $-$; $\mathbf{A}(v)$ stands for the allowed sequence given anti-periodic boundary conditions with configuration $v$ (by anti-periodic we mean $\sigma_{N+1} = -\sigma_1$).
\end{lem}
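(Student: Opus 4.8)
The plan is to recognize that $\otimes$ is nothing but matrix multiplication over the semiring obtained by recording scalar products symbolically instead of evaluating them, to expand the $N$-fold product as a single $\oplus$-sum over index paths, and to read each path as a spin configuration together with its boundary condition.

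First I would fix the algebra. Extend $\oplus$ and $\star$ from single symbols $(\cdots\singlebar\cdots)$ to formal $\oplus$-sums of them by declaring $\oplus$ commutative and associative, $\star$ associative (on symbols it is just concatenation of the two sign strings), and $\star$ distributive over $\oplus$ on both sides. The resulting object is a (generally noncommutative) semiring, so matrix multiplication over it is associative and $\mathfrak{T}_N\otimes\cdots\otimes\mathfrak{T}_1$ is well-defined. Expanding by the definition of $\otimes$ gives
\begin{align*}
\left(\mathfrak{T}_N\otimes\cdots\otimes\mathfrak{T}_1\right)_{j_N,\,j_0}=\bigoplus_{j_1,\dots,j_{N-1}\in\{1,2\}}[\mathfrak{T}_N]_{j_N j_{N-1}}\star[\mathfrak{T}_{N-1}]_{j_{N-1}j_{N-2}}\star\cdots\star[\mathfrak{T}_1]_{j_1 j_0},
\end{align*}
and, since $a\star b$ places the signs of $b$ ahead of those of $a$, each summand has, on both sides of $\singlebar$, the site-$1$ contribution first, then site $2$, and so on.

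Next I would translate a path into a configuration. Identify the index $1$ with the spin $+$ and $2$ with $-$, and relabel $(j_N,\dots,j_0)$ as $(\sigma_{N+1},\dots,\sigma_1)$. Inspection of $\mathfrak{T}_i$ shows $[\mathfrak{T}_i]_{\sigma_{i+1},\sigma_i}=(\sigma_i\sigma_{i+1}\singlebar\sigma_i)_i$, where $\sigma_i\sigma_{i+1}$ denotes the sign that is $+$ exactly when $\sigma_i=\sigma_{i+1}$; hence the $\star$-product along a path equals the symbol whose field part is literally the configuration $(\sigma_1,\dots,\sigma_N)=u$ and whose coupling part is the agreement pattern of $(\sigma_1,\dots,\sigma_{N+1})$, i.e. has $+$ in position $i$ iff $\sigma_i=\sigma_{i+1}$. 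Now split the $\oplus$-sum according to the row and column indices. On the diagonal, $j_N=j_0$ forces $\sigma_{N+1}=\sigma_1$, so the last bond closes up periodically and the coupling part is exactly $\mathbf{P}(u)$; the column index forces $u_1=+$ for the $(1,1)$ entry and $u_1=-$ for the $(2,2)$ entry, giving the stated diagonal entries. Off the diagonal, $j_N\neq j_0$ forces $\sigma_{N+1}=-\sigma_1$, so the coupling part is $\mathbf{A}(u)$, and again the column index fixes $u_1$, yielding the off-diagonal entries. This proves Lemma \ref{lem:t-matrix}; Theorem \ref{thm:t-matrix} then follows by taking the trace, since $(\cdot)_{11}\oplus(\cdot)_{22}=\bigoplus_{u_1=+}(\mathbf{P}(u)\singlebar u)\oplus\bigoplus_{u_1=-}(\mathbf{P}(u)\singlebar u)=\bigoplus_{u\in\set{+,-}^N}(\mathbf{P}(u)\singlebar u)=Z^{(N)}$.

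I expect no conceptual obstacle here; the argument is pure bookkeeping. The only points demanding care are keeping the direction of $\star$-concatenation consistent with the left-to-right order of the factors $\mathfrak{T}_N,\dots,\mathfrak{T}_1$ (so that the site-$1$ signs emerge leftmost) and correctly matching each of the four matrix positions to its boundary condition and to the value of $u_1$. A direct induction on $N$ is also possible---the base case $N=1$ is immediate from the four entries of $\mathfrak{T}_1$---but the inductive step is somewhat delicate because $\mathbf{P}(u)$ at length $N$ already encodes the wrap-around bond, which must be removed when a new site is appended; the semiring expansion avoids this.
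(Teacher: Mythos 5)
Your proof is correct, and it takes a genuinely different route from the paper's. The paper argues by induction on $N$: the base case is verified by explicit computation at $N=2$, and the inductive step multiplies by $\mathfrak{T}_{N+1}$ on the left and then checks, entry by entry, that appending one site produces the correct boundary-condition sign pattern. The crucial observation there is precisely the delicacy you anticipated: because $\overset{+}{u}$ begins with $+$, the final sign of $\mathbf{P}(\overset{+}{u})$ coincides with the new bond's sign $u_N u_{N+1}$ when $u_{N+1}=+$, while the final sign of $\mathbf{A}(\overset{+}{u})$ coincides with it when $u_{N+1}=-$, so that $(\mathbf{P}(\overset{+}{u})\,+\singlebar\overset{+}{u}\,+)$ and $(\mathbf{A}(\overset{+}{u})\,-\singlebar\overset{+}{u}\,-)$ together enumerate the length-$(N+1)$ periodic configurations with $u_1=+$. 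You instead recognize $\otimes$ as matrix multiplication over a semiring of formal $\oplus$-sums of sign strings, expand the $N$-fold product in a single pass over index paths, and read each path bijectively as a configuration $(\sigma_1,\dots,\sigma_{N+1})$ whose row and column indices pin both the boundary condition (periodic on the diagonal, antiperiodic off it) and the value of $u_1$, with the $\star$-convention correctly placing the site-$1$ signs leftmost. This sidesteps the wrap-around bookkeeping entirely and makes the combinatorics transparent, at the cost of having to stipulate that $\star$ extends distributively over $\oplus$ and remains associative---a point you rightly make explicit, and which the paper's definition of $\otimes$ also relies on implicitly for the iterated product $\mathfrak{T}_N\otimes\cdots\otimes\mathfrak{T}_1$ to be unambiguous. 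Both arguments are sound; yours is the cleaner one-shot version, while the paper's has the virtue of being entirely elementary given the ad hoc notation.
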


The proof of the lemma is elementary and follows by induction. We only prove the result for the upper left corner element of the matrix $\mathfrak{T}_N\otimes\cdots\otimes \mathfrak{T}_1$, the proof for the other three positions being essentially the same.

\begin{proof}[Proof of Lemma \ref{lem:t-matrix}]
 Observe that the base case $N = 2$ holds, since
 \begin{align*}
  \mathfrak{T}_2\otimes\mathfrak{T}_1 =
  \begin{pmatrix}[1.5]
   (+ + \singlebar + +)\oplus (- - \singlebar + -) & (- + \singlebar - +)\oplus (+ - \singlebar - -)\\
   (+ - \singlebar + +)\oplus (- +\singlebar + -) & (- - \singlebar - +)\oplus (+ +\singlebar - -)
  \end{pmatrix},
 \end{align*}
 and it's a direct computation to verify that this matrix satisfies the form claimed in the lemma. Now let us suppose that for some $N\in\N$, the result holds. We have
 \begin{align*}
  \mathfrak{T}_{N+1}\otimes\cdots\otimes\mathfrak{T}_1
  = \mathfrak{T}_{N+1}
  \begin{pmatrix}[2]
   \underset{\overset{+}{u}\in\set{+, -}^N}{\bigoplus}(\mathbf{P}(\overset{+}{u})\singlebar  \overset{+}{u}) & \underset{\overset{-}{u}\in\set{+, -}^N}{\bigoplus}(\mathbf{A}(\overset{-}{u})\singlebar\overset{-}{u})\\
   \underset{\overset{+}{u}\in\set{+, -}^N}{\bigoplus}(\mathbf{A}(\overset{+}{u})\singlebar \overset{+}{u}) & \underset{\overset{-}{u}\in\set{+, -}^N}{\bigoplus}(\mathbf{P}(\overset{-}{u})\singlebar \overset{-}{u})
  \end{pmatrix},
 \end{align*}
 the upper left entry of which is given by
 \begin{align*}
  \left(\underset{\overset{+}{u}\in\set{+, -}^N}{\bigoplus}(+\singlebar +)_{N+1}\star(\mathbf{P}(\overset{+}{u})\singlebar \overset{+}{u})\right)&\oplus\left(\underset{\overset{+}{u}\in\set{+, -}^N}{\bigoplus}(-\singlebar -)_{N + 1}\star(\mathbf{A}(\overset{+}{u})\singlebar \overset{+}{u})\right)\\[1em]
  =
  \left(\underset{\overset{+}{u}\in\set{+, -}^N}{\bigoplus}(\mathbf{P}(\overset{+}{u}) \hspace{1mm} + \singlebar \overset{+}{u} \hspace{1mm} +)\right) &\oplus \left(\underset{\overset{+}{u}\in\set{+, -}^N}{\bigoplus}(\mathbf{A}(\overset{+}{u}) \hspace{1mm} - \singlebar \overset{+}{u} \hspace{1mm} -)\right)
 \end{align*}
 It isn't hard to see that the $\oplus$-sum above consists of exactly $2^N$ distinct terms corresponding to the configurations over the lattice of size $N+1$ with the restriction that $\sigma_1 = +1$. Furthermore, since $\overset{+}{u}$ begins with $+$, we know that $(\mathbf{P}(\overset{+}{u})\hspace{1mm} + \singlebar \overset{+}{u}\hspace{1mm} +)$ respects the periodic boundary conditions (given that $\overset{+}{u}$ begins with the plus, for the configuration $(\overset{+}{u}\hspace{1mm} +)$ (i.e. the first and last spins are up), with the periodic boundary conditions, the $K_{N+1}$ term must have the $+$ sign, as it does in $(\mathbf{P}(\overset{+}{u})\hspace{1mm} + )$). Similarly for the elements involving $\mathbf{A}(\overset{+}{u})$. Hence we have the desired form for the upper left entry. The form for the other three entries of the matrix is proved similarly.
\end{proof}

\bibliographystyle{plain}
\bibliography{bibliography}

\end{document}